\newtheorem{definition}{Definition}
\newtheorem{lemma}{Lemma}
\newtheorem{theorem}{Theorem}
\newtheorem{observation}{Observation}
\Crefname{step}{Step}{Steps}
\Crefname{observation}{Observation}{Observations}
\DeclarePairedDelimiter\paren{(}{)}
\newcommand{\APX}{\ensuremath{\mathsf{APX}}}
\newcommand{\NP}{\ensuremath{\mathsf{NP}}}
\newcommand{\PP}{\ensuremath{\mathsf{P}}}
\newcommand{\tset}{\ensuremath{\mathcal{T}}}
\newcommand{\fset}{\ensuremath{\mathcal{F}}}
\newcommand{\that}{\Hat{T}}
\newcommand{\ttil}{\tilde{T}}
\newcommand{\lptc}{$\ell_p$-{\sc Tree Cover}}
\newcommand{\litc}{$\ell_\infty$-{\sc Tree Cover}}
\newcommand{\antc}{{\sc All-Norm Tree Cover Problem}}
\newcommand{\antcwd}{{\sc All-Norm Tree Cover Problem with Depots}}
\newcommand{\maxsat}{{\sc 3R\{2,3\}L2}}
\newcommand{\reals}{\ensuremath{\mathbb{R}}}
\acrodef{VRP}[{\textsf{VRP}}]{Vehicle Routing Problem}
\acrodef{TSP}[\textsf{TSP}]{Traveling Salesperson Problem}
\acrodef{MTSP}[\textsf{MTSP}]{Multi-TSP}
\acrodef{MDMTSP}[\textsf{MDMTSP}]{Multi-Depot Multi-TSP}
\acrodef{CSF}[\textsf{CSF}]{constrained spanning forest}
\acrodef{FPT}[\textsf{FPT}]{fixed parameter tractable}
\acrodef{CPP}[\textsf{CPP}]{Chinese Postperson Problem}
\acrodef{RPP}[\textsf{RPP}]{Rural Postperson Problem}
\acrodef{DRPP}[\textsf{DRPP}]{Depot Rural Postperson Problem}
\acrodef{LP}[\textsf{LP}]{linear program}
\newcommand\restr[2]{{
		\left.\kern-\nulldelimiterspace 
		#1 
		\vphantom{\big|} 
		\right|_{#2} 
}}
\title{Approximate Minimum Tree Cover\\ in All Symmetric Monotone
Norms Simultaneously} 
\date{}
\author{Matthias Kaul\thanks{Universit{\"a}t Bonn, Bonn, Germany. \texttt{mkaul@uni-bonn.de}. Most work done while affiliated with Hamburg University of Technology, Institute for Algorithms and Complexity, Hamburg, Germany.}
\and Kelin Luo\thanks{University at Buffalo, New York, USA. \texttt{kelinluo@buffalo.edu}. Funded by the Deutsche Forschungsgemeinschaft (DFG, German Research Foundation) –  390685813.}
\and Matthias Mnich\thanks{Hamburg University of Technology, Institute for Algorithms and Complexity, Hamburg, Germany. \texttt{matthias.mnich@tuhh.de}}
\and Heiko R{\"o}glin\thanks{Universit{\"a}t Bonn, Bonn, Germany. \texttt{roeglin@cs.uni-bonn.de}. Funded by the Deutsche Forschungsgemeinschaft (DFG, German Research Foundation) – 459420781 and by the Lamarr Institute for Machine Learning and Artificial Intelligence \url{lamarr-institute.org}.}}
\begin{document}
\maketitle

\begin{abstract}
  We study the problem of partitioning a set of $n$ objects in a metric space into $k$ clusters $V_1,\hdots,V_k$. 
  The quality of the clustering is measured by considering the vector of cluster costs and then minimizing some monotone symmetric norm of that vector (in particular, this includes the $\ell_p$-norms).
  For the costs of the clusters we take the weight of a minimum-weight spanning tree on the objects in~$V_i$, which may serve as a proxy for the cost of traversing all objects in the cluster, for example in the context of Multirobot Coverage as studied by Zheng, Koenig, Kempe, Jain (IROS 2005), but also as a shape-invariant measure of cluster density similar to Single-Linkage Clustering.

  This setting has been studied by Even, Garg, K{\"o}nemann, Ravi, Sinha (\emph{Oper. Res. Lett.}, 2004) for the setting of minimizing the weight of the largest cluster (i.e., using $\ell_\infty$) as {\sc Min-Max Tree Cover}, for which they gave a constant-factor approximation.
  We provide a careful adaptation of their algorithm to compute solutions which are approximately optimal with respect to \emph{all} monotone symmetric norms \emph{simultaneously}, and show how to find them in polynomial time.
  In fact, our algorithm is purely combinatorial and can process metric spaces with 10,000 points in less than a second.

  As an extension, we also consider the case where instead of a target number of clusters we are provided with a set of \emph{depots} in the space such that every cluster should contain at least one such depot.
  One can consider these as 
  the fixed starting points of some agents that will traverse all points of a cluster.
  For this setting also we are able to give a polynomial time algorithm computing a constant factor approximation with respect to all monotone symmetric norms simultaneously.
    
  To show that the algorithmic results are tight up to the precise constant of approximation attainable, we also prove that such clustering problems are already  {\sf APX}-hard when considering only one single $\ell_p$ norm for the objective.

  \smallskip
  \noindent\textbf{Keywords:} Clustering, spanning trees, all-norm approximation
\end{abstract}

\thispagestyle{empty}

\newpage
\setcounter{page}{1}

\section{Introduction}
\label{sec:introduction}
A typical clustering problem takes as input a set of $n$ objects in a metric space $(V, d)$, and seeks a partition of these objects into $k$ \emph{clusters} $V_1,\hdots,V_k$, so as to optimize some objective function. 
For example, we might try to place $k$ facilities onto the nodes of an edge-weighted graph, and then assign each remaining node to some facility.
To model the cost of serving these nodes from their facility, one can use the cost of a minimum-weight spanning tree $T_i$ on the subgraph induced by them.
For $i = 1,\hdots,k$, let $w(T_i) = \sum_{e\in E(T_i)}w(e)$ be the weight of tree~$T_i$.  
Historically, these kinds of problems have been studied for two different objectives.
On the one hand, in the {\sc Min-Sum Tree Cover} problem one might want to find $k$ trees $T_1,\hdots, T_k$ to cover all points in~$V$, while minimizing the total length of the service network, i.e., the sum $\sum_{i\in[k]} w(T_i)$ of the weights of the spanning trees.
On the other hand, it is desirable that each facility does not serve too large of a network, so we can instead minimize the weight $\max_{i\in[k]} w(T_i)$ of the heaviest tree, which leads to the {\sc Min-Max Tree Cover} problem~\cite{even2004min, khani2014improved}.

Many fundamental clustering problems were studied under this min-sum objective and min-max objective. 
These objectives can equivalently be considered as that of minimizing the 1-norm, or the $\infty$-norm, of the clustering. 
Examples include
the $k$-median problem and the minimum-load $k$-facility location problem.
These two problems are siblings in that they both deal with the assignment of points (or clients) to $k$ centers (or facilities), with the costs of connecting those points to respective centers.     
Each point $v$ is then assigned to one of these open centers, denoted as $c(v)$.  
The cost $\text{cost}_c$ for each center $c$, which also reflects the cost associated with each facility, is calculated as the sum of distances between the center and all the points allocated to it, i.e, $\text{cost}_c=\sum_{v\in V: c(v)=c} d(v, c)$.
In the $k$-median problem~\cite{guha1999greedy,byrka2017improved, li20131, ahmadian2019better}, the objective is to minimize the $\ell_1$-norm of $\{\text{cost}_c\}_{c}$, i.e.,  $\sum_{c} \text{cost}_c$, which represents the total cost of all clusters; whereas the minimum-load $k$-facility location problem~\cite{ahmadian2018approximation} seeks to minimize the maximum load of an open facility, symbolised by the $\ell_{\infty}$-norm function of $\{\text{cost}_c\}_{c}$, i.e., $\max_c \text{cost}_c$.

There is a diverse array of cost functions that could be applicable to a variety of problem domains, and often, efficient algorithms are crafted to suit each specific objective. 
However, it is crucial to note that an optimal solution for one objective may not perform well for another.
For instance, a solution for an instance of the {\sc Tree Cover} that minimizes the $\ell_1$-norm might be particularly inefficient when it comes to minimizing the $\ell_\infty$-norm, and vice versa (see examples in \Cref{fig:CounterexamplesOneNormGood} and~\Cref{fig:CounterexamplesInftyNormGood}).
Therefore, one may wonder what the ``generally optimal'' solution would be for a given problem.
Finding such a generally optimal solution is the task of the following problem:

\begin{definition}[All-norm clustering problem]
\label{def:allnormapproxratio}
  An \emph{all-norm clustering problem} takes as input a metric space $(V, d)$, a cost function $w: \mathcal{P}(V)\rightarrow \mathbb{R}_{\ge 0}$, and a non-negative integer $k$.
  The goal is to partition~$V$ into clusters $V_1,\hdots,V_k$ which minimize
  \begin{equation*}
    \alpha= \max_{p\in \mathbb R_{\geq 1} \cup \{\infty\}} \frac{ \left(\sum_{i} \left(w(V_i)\right)^p\right)^{1/p}}{OPT_p},
  \end{equation*}
  where $OPT_p$ denotes the value of an optimal solution for the $k$-clustering problem under the $\ell_p$-norm objective.
  Here, $\alpha$ is referred to as the \emph{all-norm approximation factor}.
\end{definition}

Our focus in this paper is the all-norm clustering problem where the cost $w(V_i)$ of each cluster is the cost of a minimum spanning tree on $V_i$ with respect to the metric $d$.
We call this problem the {\sc All-Norm Tree Cover} problem in line with the naming convention in the literature, and denote its instances by $(V,d,w, k)$.
(Note that, for simplicity, we only consider $\ell_p$-norms here.
However, the proofs turn out to work for any norm which is monotone and symmetric, cf. Ibrahimpur and Swamy~\cite{ibrahimpur2020approximation} for a detailed introduction to such norms.)
Observe that the number~$k$ of clusters is part of the input, i.e., it is \emph{not} fixed.

The choice of the cost of a minimum spanning tree might appear to be somewhat arbitrary here, but spanning trees are the key connectivity primitive in network design problems. 
Any constant-factor approximation for {\sc All-Norm Tree Cover} will, for example, transfer to a constant-factor approximation for the all-norm version of the {\sc Multiple Traveling Salesperson Problem}~\cite{Bekts06Mtsp} where we ask to cover a metric space by $k$ cycles instead of trees.
This use of spanning trees as a proxy for the traversal times of the clusters was a key ingredient for by Zheng et al. \cite{zheng2007Robot} to partition a floorplan into similar-size areas to be served by different robots.

Let us observe that solving the {\sc All-Norm Tree Cover} problem is non-trivial, as a solution which is good for one objective (i.e., for one particular norm $\ell_p$) may be bad for another objective (i.e., for another norm $\ell_{p'}$), and vice-versa.
For examples of this phenomenon, see \Cref{fig:CounterexamplesOneNormInftyNorm}.
\begin{figure}[ht]
  \centering
  \begin{subfigure}[b]{0.48\textwidth}
    \centering
    \includegraphics[width=\textwidth]{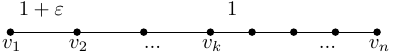}
    \caption{The path forms a metric where any two neighboring nodes in the set $\{v_1,\hdots,v_k\}$ have distance $1+\varepsilon$, while the distance of all other edges is $1$.
    In this case, the optimal solution under the $\ell_1$-norm involves removing all edges with distance $1+\varepsilon$, resulting in an $\ell_\infty$-norm objective cost $n-k$.
    However, the optimal solution cost for the $\ell_\infty$-norm is less than~${n(1+\varepsilon)}/{k}$.\label{fig:CounterexamplesOneNormGood}}
  \end{subfigure}
  \hfill
  \begin{subfigure}[b]{0.48\textwidth}
    \centering
    \includegraphics[width=\textwidth]{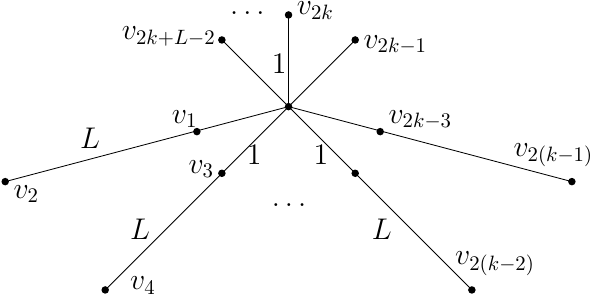}
    \caption{The spider graph includes $2k+L-1$ nodes. All edges connecting a node $v_{2i-3}$ (where $i\in[k]$) and~$v_i$ (where $i=2k-1,\hdots,2k+L-2$) to the center have distance $1$, and all edges $\{v_{2i-1}, v_{2i}\}_{i\in [k-1]}$ have distance a large integer $L$. 
    In this case, the optimal solution under the $\ell_\infty$-norm involves removing all edges connected to the nodes $\{v_{2i-1}\}_{i\in[k-1]}$, resulting in an $\ell_1$-norm objective cost of $kL$. However, the optimal solution for the $\ell_1$ norm has cost~$k+L-1$.\label{fig:CounterexamplesInftyNormGood}}
  \end{subfigure}
  \caption{Two instances of the {\sc All-Norm Tree Cover} problem. \Cref{fig:CounterexamplesOneNormGood} is an instance where an optimal $\ell_1$-norm solution does not return a good approximation in the $\ell_\infty$-norm; \Cref{fig:CounterexamplesInftyNormGood} is an instance where an optimal $\ell_\infty$-norm solution does not return a good approximation in the $\ell_1$-norm.\label{fig:CounterexamplesOneNormInftyNorm}}
\end{figure}

The {\sc All-Norm Tree Cover} problem, alongside the related path cover and cycle cover problems, involves covering a specified set of nodes of a(n edge-weighted) graph with a limited number of subgraphs. 
These problems have attracted significant attention from the operations research and computer science communities due to their practical relevance in fields like logistics, network design, and data clustering.
For instance, these problems are naturally applicable in scenarios such as vehicle routing, where the task involves designing optimal routes to service a set of customers with a finite number of vehicles. 
Different optimization objectives could be considered depending on the specific requirements. 
One could aim to minimize the maximum waiting time for any customer, an objective that is equivalent to the $\ell_\infty$-norm of the cost function associated with each vehicle. 
This ensures fairness, as it attempts to prevent any single customer from waiting excessively long.
Alternatively, one could aim to minimize the total travel time or cost, which corresponds to the $\ell_1$-norm of the cost function across all vehicles~\cite{bellmore1974transformation, bazgan2005approximation}.
This objective seeks overall efficiency, making it beneficial from an operational perspective as it reduces fuel consumption and allows for more customers to be serviced within the same time frame~\cite{frederickson1976approximation, bazgan2005approximation, even2004min, khani2014improved}. 
Understanding these problems in an all-norm setting thus enables the development of routing strategies that are adaptable to different priorities, including customer satisfaction, operational efficiency, or a balance between the two.  
Given that minimum spanning trees provide constant-factor approximations to traveling salesperson tours~\cite{christofides1976worst}, we explore the possibility of covering the nodes of a graph with $k$ trees.

In a scenario where each vehicle already has an assigned station, and the task is limited to the assignment of nodes to the vehicles, we are confronted with a variation of the {\sc All-Norm Tree Cover} problem known as the {\sc All-Norm Tree Cover with Depots} problem (also commonly referred to as the {\sc Rooted All-Norm Tree Cover} problem), denoted by $(V, d, w, D)$ where $D\subseteq V$ is a set of depots~\cite{even2004min}. 
These problems under all norms will be formally defined and addressed in the subsequent sections of this paper. 

\subsection{Our contributions}
Computing optimal tree covers is $\mathsf{NP}$-hard even for the \emph{single} $\ell_\infty$-norm; for this norm, it admits a constant-factor approximation in polynomial time.
Our goal in this paper is thus to design constant-factor approximations for {\sc All-Norm Tree Cover}, that is, for \emph{all} monotone symmetric norms \emph{simultaneously}.
Building upon the example in \Cref{fig:CounterexamplesOneNormInftyNorm}, it becomes clear that a constant-factor approximation for one norm objective does not necessarily guarantee a constant approximation for another norm.
Meanwhile, achieving an all-norm approximation factor better than $3/2$ within polynomial time is infeasible, as a lower bound of $3/2$ on the approximability of the {\sc Tree Cover} problem under the $\ell_\infty$-norm has been demonstrated by Xu and Wen~\cite{xu2010approximation} (assuming $\mathsf{P}\neq \mathsf{NP}$).
In previous work, Even et al.~\cite{even2004min} proposed a $4$-approximation algorithm for the {\sc Min-Max Tree Cover} problem, which is representative of the $\ell_\infty$-norm.
That algorithm was subsequently refined by Khani and Salavatipour~\cite{khani2014improved}, who devised a $3$-approximation.
However, these existing algorithms fail to guarantee a constant approximation factor for optimal solutions under other norms, proving to be unbounded specifically, this holds even for the $\ell_1$-norm (for an example, see \Cref{fig:CounterexampleMinMaxAlgo} in \Cref{sec:TCnodepots}). 

Our first main contribution is a polynomial-time constant-factor approximation algorithm for the {\sc All-Norm Tree Cover} problem. 
Our algorithm amalgamates the strategies of suitable algorithms for both the $\ell_1$-norm and the $\ell_\infty$-norm.
It is well-understood that an optimal solution for the $\ell_1$-norm objective involves successively eliminating the heaviest edges until precisely $k$ components remain. 
Conversely, the algorithm for $\ell_\infty$-norm objective concludes when the number of trees generated by the edge-decomposition with respect to a guessed optimal value~$R$ is about to exceed $k$~\cite{even2004min}.
Our proposed method continually approximates the decomposed trees towards an ideal state.  
So in that sense  the algorithm proposed by Even et al.~\cite{even2004min} is effectively refined to solve other norm problems.
Notably, our algorithm produces a feasible solution that is at most double the optimal solution for the $\ell_1$-norm and four times the optimal solution for the $\ell_\infty$-norm simultaneously.


\begin{restatable}{theorem}{TCnodepots}
\label{thm:TCnodepots} 
  There exists a polynomial-time $O(1)$-approximation algorithm for the {\sc All-Norm Tree Cover} problem. 
\end{restatable}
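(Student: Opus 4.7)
The plan is to analyse an MST-based decomposition that interpolates between the two regimes handled well by existing techniques: for $\ell_1$ a single connected metric is optimally decomposed by removing the $k-1$ heaviest MST edges, while for $\ell_\infty$ a threshold-based bottom-up sweep on the MST (as in Even et al.) produces at most $k$ pieces each of weight $O(\OPT_\infty)$. Concretely, I would compute an MST $T$ and then binary-search for the smallest threshold $R$ for which the sweep --- root $T$, process vertices in post-order, and cut the parent edge of every vertex whose current subtree weight exceeds $R$ --- yields at most $k$ pieces; if strictly fewer pieces are produced, split the heaviest ones arbitrarily to reach exactly $k$.

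For $\ell_\infty$ I can directly reuse the Even et al.\ analysis: $R \le \OPT_\infty$ and every output piece has weight at most $4R$. The $\ell_1$ bound is where the new work lies. The total output cost equals $w(T) - \sum_{e \in F} w(e)$, where $F$ is the set of cut edges, so the task reduces to a lower bound on $\sum_{e \in F} w(e)$. The sweep only cuts the parent edge of a vertex once its subtree weight has already grown to $\Omega(R)$, which lets me charge each cut to a disjoint ``heavy'' subtree, and in turn charge these subtrees against the $k-1$ inter-cluster gaps of any optimal $\ell_1$ solution via an MST-exchange argument; the net bookkeeping should give $\sum_i w(T_i) \le 2\,\OPT_1$.

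To upgrade from a simultaneous $\ell_1/\ell_\infty$ guarantee to an \emph{all-norm} guarantee I would invoke the majorisation characterisation of monotone symmetric norms (cf.\ Ibrahimpur and Swamy~\cite{ibrahimpur2020approximation}): it suffices to show, for every $\ell\in[k]$, that the sum of the $\ell$ heaviest output trees is within a constant factor of the corresponding top-$\ell$ sum of any $k$-tree cover. I would establish such a top-$\ell$ bound by interpolation between the two endpoint arguments: the $\ell$ heaviest output trees each have weight at most $4R \le 4\,\OPT_\infty$, which matches against the top-$\ell$ optimum term-by-term when $\ell$ is small, while the $\ell_1$ bound already absorbs the case where $\ell$ is close to $k$. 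These uniform top-$\ell$ inequalities, via majorisation, translate to an $O(1)$-approximation in every monotone symmetric norm simultaneously.

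The main obstacle is exactly this last step: being $O(1)$-approximate in $\ell_1$ \emph{and} in $\ell_\infty$ does not in general imply being $O(1)$-approximate in every monotone symmetric norm --- a prefix-sum bound for \emph{every} $\ell$ is strictly stronger. The saving grace I would exploit is the structural uniformity imposed by the sweep: all output pieces have weight in a narrow band $[\Omega(R), 4R]$, so their top-$\ell$ sums are tightly controlled, and each of the $\ell$ heaviest output pieces can be matched to a distinct piece of comparable weight in any top-$\ell$ optimum through the same MST-exchange bookkeeping used for the $\ell_1$ analysis. Making this charging precise --- uniformly in $\ell$ and in the number of trees an optimum actually uses --- is the technical heart of the proof.
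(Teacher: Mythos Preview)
Your high-level plan is the same as the paper's --- run an Even-et-al.-style threshold decomposition at the right $R$ and then argue $c$-approximate strong optimality (prefix sums) to get all monotone symmetric norms at once. The gap is in the step you yourself flag as the crux: the claim that ``all output pieces have weight in a narrow band $[\Omega(R),4R]$'' is simply false, and your prefix-sum argument rests on it.

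The Even et al.\ decomposition (which you need for the $4R$ upper bound and hence the $\ell_\infty$ guarantee) first deletes edges heavier than $R$; any resulting connected component whose MST weighs less than $2R$ is output as-is. Such ``small'' components can have weight arbitrarily close to $0$ --- think of an isolated singleton at distance $\gg R$ from everything else. Your alternative ``post-order cut when the subtree exceeds $R$'' sweep does not fix this: either the root piece can be tiny, or (if you don't first delete heavy edges) a high-degree vertex with many children of subtree weight just below $R$ produces a piece of unbounded weight, so you lose the $4R$ bound. Either way you cannot have both the lower and the upper band simultaneously. Consequently the assertion ``top-$\ell$ of any cover is $\Omega(\ell R)$'' fails outright: let the optimum also keep those tiny components as their own trees.

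What the paper does instead is exactly the bookkeeping you are missing. It partitions the small components into (i) those that coincide with a tree of the comparison cover $\{\that_j\}$ (the set $T^=$, counted by $k_{s,2}$), (ii) those that are crossed by some $\that_j$ via an edge of weight $\ge R$, and (iii) those split into several $\that_j$'s. Two counting observations then give
\[
\sum_j w(\that_j)\;\ge\;\tfrac{R}{6}\,(k-k_{s,2})\;+\;\sum_{\that_i\in T^=}w(\that_i),
\]
i.e.\ the $k-k_{s,2}$ trees of \emph{any} cover that are not shared with the algorithm have average weight $\Omega(R)$. Since the algorithm's non-shared trees each weigh at most $4R$ and the shared ones are identical, the prefix-sum inequality $\sum_{i\le \ell}w(T_i)\le 24\sum_{i\le \ell}w(\that_i)$ follows for every $\ell$. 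This separation into shared/non-shared small trees is the technical heart; your ``MST-exchange bookkeeping'' does not supply it.

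A secondary point: ``split the heaviest until you reach $k$'' is not enough to make the $\ell_1$ analysis go through. The paper needs the equality $\sum_i\lfloor w(T_i)/2R+1\rfloor=k$ (not $\le k$), and has to argue separately that one can always land on such an $R$ by iterating over the edges in sorted order and handling coincident breakpoints; otherwise the charging against $(k-k_s-k_\ell)$ removed edges in the $\ell_1$ step breaks.
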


We next consider the more involved {\sc All-Norm Tree Cover} problem with depots.
For {\sc All-Norm Tree Cover} with depots, each tree in a tree cover solution is rooted at a specific depot.
The special problem case of the single $\ell_{\infty}$-norm is the {\sc Min-Max Tree Cover} problem with depots, for which Even et al.~\cite{even2004min} devised a $4$-approximation algorithm.
Subsequently, Nagamochi~\cite{nagamochi2005approximating} enhanced this framework by offering a $(3 - \frac{2}{k+1})$-approximation algorithm under the restriction that all depots are located at the same node.
However, neither of those algorithms can assure any constant approximation factor for the $\ell_1$-norm objective (refer to the example in \Cref{fig:CounterexampleRootedTreeCoverAlgo}).  
\begin{figure}
  \centering
  \includegraphics[width=0.9\textwidth]{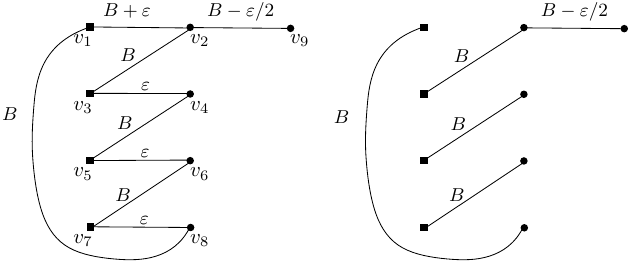}
    \caption{Example instance where the algorithm of Even et al.~\cite{even2004min} does not return a good approximation in the $\ell_1$-objective. The left figure is an example of the graph metric when $k=4$ and the right figure is the final solution obtained by the rooted-tree-cover algorithm by Even et al.~\cite{even2004min}. }
    \label{fig:CounterexampleRootedTreeCoverAlgo}
\end{figure}

Our key contribution extends the $O(1)$-approximation algorithm for {\sc All-Norm Tree Cover} from \autoref{thm:TCnodepots} to the problem with depots:
\begin{restatable}{theorem}{TCdepots}
\label{thm:TCdepots}
  There exists a polynomial-time $O(1)$-approximation algorithm for the {\sc All-Norm Tree Cover with Depots} problem.
\end{restatable}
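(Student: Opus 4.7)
The plan is to lift the algorithm underlying \Cref{thm:TCnodepots} to the rooted setting by embedding depot-connectivity into the decomposition stage. The structural ingredient carried over from the proof of \Cref{thm:TCnodepots} is that the algorithm guesses the optimal $\ell_\infty$-value $R$ (by binary search on edge lengths) and performs an Even-et-al.\ style $R$-decomposition of a minimum spanning tree~$T$ of $(V,d)$ into subtrees of controlled weight, while simultaneously preserving enough MST edges to avoid paying more than a constant times the $\ell_1$-optimum.

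First I would compute an MST $T$ of $(V,d)$ and, for each guess $R$, split $T$ at edges whose removal separates pieces that individually exceed weight $R$, in a rooted bottom-up fashion as in~\cite{even2004min}. This yields a partition of $T$ into subtrees each of weight $O(R)$; some of them contain one or more depots (call these \emph{rooted pieces}), the remaining \emph{orphan pieces} do not. Second, I would attach each orphan piece to its closest depot $d\in D$ via the unique $T$-path from the piece to $d$; a standard argument shows that this path has length $O(R)$, else the optimal rooted tree cover could not achieve value $R$. Third, for depots that now sit inside the same rooted piece, I would split that piece along cheapest internal edges until each depot owns exactly one tree. Throughout these modifications I would follow the same two-budget accounting scheme as in the proof of \Cref{thm:TCnodepots}: maintain, alongside the $\ell_\infty$-budget $R$, a global $\ell_1$-budget equal to a constant times $w(T')$, where $T'$ is obtained from $T$ by deleting the $|D|-1$ heaviest edges of the minimal subtree of $T$ spanning the depots, which is a natural lower bound on $\OPT_1$.

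The analysis would then argue separately that the final cover has $\ell_\infty$-cost $O(R)=O(\OPT_\infty)$ and $\ell_1$-cost $O(w(T'))=O(\OPT_1)$. Combined with the reduction from arbitrary symmetric monotone norms to the pair $(\ell_1,\ell_\infty)$ that already underlies the proof of \Cref{thm:TCnodepots} (cf.~\cite{ibrahimpur2020approximation}), this yields an $O(1)$-approximation simultaneously for all such norms.

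The main obstacle I expect is the third step: when a single rooted piece contains several depots, naively splitting it at its cheapest edges may leave some of the resulting subtrees without a depot, forcing a recursive reassignment. This is precisely where the counterexample in \Cref{fig:CounterexampleRootedTreeCoverAlgo} breaks the algorithm of Even et al.\ in the $\ell_1$-norm, as paths that ought to be split instead get bundled into a single depot's tree. To handle this I would, inside each multi-depot piece, compute the minimal $T$-subtree spanning its depots and cut along carefully chosen edges of that subtree so that each resulting subtree is charged to a \emph{disjoint} portion of the $\ell_1$-budget, using an exchange argument against the optimal rooted solution. Verifying that this charging is global (across all pieces simultaneously) and compatible with the $\ell_\infty$-decomposition is the technical heart of the proof, and I expect it to require the same kind of two-budget invariant as in the non-depot case.
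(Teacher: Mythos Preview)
Your proposal rests on a claim that does not hold: you assert that bounding both the $\ell_1$-cost and the $\ell_\infty$-cost of your solution by $O(\OPT_1)$ and $O(\OPT_\infty)$ suffices for an all-norm guarantee, via a ``reduction from arbitrary symmetric monotone norms to the pair $(\ell_1,\ell_\infty)$'' that you attribute to the proof of \Cref{thm:TCnodepots}. No such reduction exists, and that is \emph{not} how \Cref{thm:TCnodepots} is proved: the paper establishes $c$-approximate \emph{strong optimality} (domination of all prefix sums, \Cref{lem:apxStrongOpt}), which is strictly stronger. To see that $\ell_1$ and $\ell_\infty$ control alone is insufficient, take $k$ large and compare $b=(1,\dots,1,0,\dots,0)$ with $L$ ones to $v=(1,\varepsilon,\dots,\varepsilon)$ with $(L-1)/\varepsilon$ entries equal to $\varepsilon$: then $\|b\|_1=\|v\|_1=L$ and $\|b\|_\infty=\|v\|_\infty=1$, yet $\|b\|_2/\|v\|_2\approx\sqrt{L}$. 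So even if your decomposition and charging succeeded perfectly for $p\in\{1,\infty\}$, the conclusion would not follow.

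This gap is exactly why the paper's depot algorithm looks nothing like the non-depot one. As the paper states explicitly, in the rooted setting ``the simple lower bound \dots\ of assuming an even distribution of the total weight can no longer work: it might be necessary to have unbalanced cluster sizes, for instance if some depots are extremely far from all nodes.'' Your single-scale decomposition at $R=\OPT_\infty$ forces every piece to have weight $O(R)$, but an optimal $\ell_p$ solution may require some depots to carry trees of weight $\Theta(R)$ and others trees of weight $o(R)$, and you provide no mechanism to respect this profile. The paper instead uses a genuinely multi-scale construction: it layers the non-depot nodes by their distance $\Theta(2^i)$ to $D$, runs the Even-et-al.\ decomposition in each layer at its own scale $R_i=2^i$, and then assigns the resulting pieces to depots via an iterative bipartite-matching procedure across increasing scales. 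The analysis (\Cref{lem:alignment}, \Cref{lem:matching}, \Cref{lem:strongOptimalityDepots}) shows that whenever the algorithm creates a tree of weight $\Theta(2^i)$ at some depot, every feasible solution must contain enough weight-$2^{i}$ fragments to account for it; this yields approximate strong optimality directly, without ever passing through a two-norm reduction. Your outline would need to be replaced by an argument of this kind.
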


Our methodology comprises three stages:  
\begin{enumerate}
    \item  Initially, we partition the nodes according to their distances to the depots, where partition class $i$ contains all nodes at distance between $2^{i-1}$ and $2^i$.
    \item We then apply a version of the algorithm of Even et al.~\cite{even2004min} in each class separately to find a good ``pre-clustering'' of the nodes. This is simplified by the previous partitioning since it allows us to assume that all nodes are at the same distance to the depots, up to a factor of~$2$.
    \item Finally, we assign the clusters from the second step to the depots by iteratively computing matchings of clusters to depots, considering ever larger clusters, and allowing them to be matched to ever more distant depots.
    In this way, we essentially maintain a running estimate of the necessary tree weights, updating it when the matching process does not succeed in assigning all nodes to some depot.
If it does succeed, we can show that (up to constant factors) no cluster is larger than the estimate, and that the estimate is correct, i.e., every solution has trees at least as large as predicted by the estimate.
\end{enumerate}

Note that both the depot and non-depot algorithms will require only very simple algorithmic primitives such as minimum spanning trees and bipartite matchings.
Thus, our algorithms can be implemented to run very quickly using purely combinatorial techniques, in particular without resorting to linear programming which can be impractically slow (see also Davies at al. \cite{Davies2023fastCombinatorial} for recent work on combinatorial clustering algorithms). 
In fact, an implementation of our algorithm can process metric spaces with 10,000 points in less than a second, see \Cref{sec:experiments} for details.

Our final contribution complements the algorithmic result from \autoref{thm:TCdepots} by a complexity-theoretic lower bound: we show that one cannot expect polynomial-time approximation schemes to exist, even in the presumably easier setting \lptc{} {\sc with Depots} where we only want to approximate the optimum with respect to \emph{one specific} $\ell_p$-norm:
\begin{restatable}{theorem}{APXhardness}
  For every $p\in (1,\infty]$ there exists a constant $c$ such that \lptc{} {\sc with Depots} is \NP-hard to approximate within a factor $c$. 
  The \NP-hardness holds under randomized reductions.
\end{restatable}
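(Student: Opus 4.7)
The plan is to reduce from the APX-hard bounded-occurrence Max-SAT variant \maxsat{} to \lptc{} {\sc with Depots}, producing for each input formula $\phi$ a metric space and a depot set so that the optimal $\ell_p$-tree-cover cost exhibits a multiplicative gap between the case that $\phi$ is satisfiable and the case that $\phi$ is only $(1-\eps)$-satisfiable. The high-level design is that each variable is represented by one depot together with two unit-distance ``literal'' nodes, so that the tree rooted at the depot implicitly commits to owning one of them; each clause is represented by a clause node cheap to collect through a literal satisfying that clause but expensive to collect otherwise.

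First I would fix the underlying weighted graph: one depot per variable, two literal nodes per variable at unit distance from the depot, and one clause node per clause joined at unit cost to each of its satisfying literal nodes, with all other distances taken from the induced shortest-path metric so that unrelated variable-gadgets sit far apart. For a satisfying assignment this gives a balanced tree cover whose cost vector $v\in\reals^n$ satisfies $\|v\|_p \le V_{\mathrm{sat}}$: every depot-tree is a short star containing the chosen literal and the clause nodes covered through it, and bounded occurrence ensures each tree has constant size. Conversely, if $\eps m$ clauses are unsatisfied under every assignment, then any tree cover must route $\Omega(\eps m)$ clause nodes either through a wrong-polarity literal or via an auxiliary long edge, each option contributing $\Omega(1)$ of additive overhead to the owning tree.

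The main challenge is turning this additive overhead into a constant-factor multiplicative gap in $\ell_p$ for every fixed $p\in(1,\infty]$. If $v$ is balanced at value $a$ over $n$ coordinates and a perturbation $e$ adds $\Omega(1)$ to at least $\Omega(n)$ of them, a short H\"older/convexity argument yields $\|v+e\|_p \geq \|v\|_p(1+c_p)$ for some $c_p>0$ depending only on $p$; however, if the overheads pile onto only a bounded number of coordinates then the induced gap shrinks as $p\to 1$. The hard part is therefore to guarantee that the unsatisfied clauses scatter across $\Omega(n)$ distinct depot-trees rather than collapse onto a few, which is essential for the $\ell_p$ gap to survive when $p$ is close to~$1$.

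This is precisely where the randomized reduction enters. I would compose $\phi$ with a random (or explicit expander-based) bipartite substitution on the variable/clause incidence, so that with high probability over the reduction's coins, any assignment falsifying an $\eps$-fraction of the new clauses must simultaneously falsify clauses touching $\Omega(n)$ distinct variable-trees of the output instance. Combined with the $\ell_p$-convexity step this produces a constant-factor gap in every $\ell_p$-norm, $p\in(1,\infty]$, establishing the claimed APX-hardness under randomized reductions. What remains is routine: checking the metric triangle inequality for the constructed distances and bounding the reduction size polynomially in $|\phi|$.
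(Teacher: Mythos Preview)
Your gadget as described cannot enforce the intended commitment. With a single depot per variable and both literal nodes $x_0,x_1$ at unit distance from it, the cheapest tree rooted at the variable depot simply contains \emph{both} literals; any incident clause node can then be absorbed at unit cost through either literal, so an unsatisfied clause is no more expensive to cover than a satisfied one and the reduction collapses. The paper avoids this by making the two literal values $x_0,x_1$ into \emph{depots} and adding a heavy hub $\hat x$ at distance~$3$ from each: exactly one of the two literal-depots must absorb $\hat x$ at cost~$3$, leaving only the other available to collect light nodes. An analogous device for clauses (one depot per satisfying pattern, plus a heavy hub $\hat C$) guarantees that in the completeness case every one of the $106m$ depot-trees has weight exactly~$3$, and the source problem is \maxsat{} (linear equations over $\mathbb F_2$) rather than bounded-occurrence SAT.

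Your scattering worry is also misdirected, and the proposed expander step is unnecessary. For $p>1$, piling a fixed total excess onto few coordinates only \emph{increases} the $\ell_p$-norm by convexity of $t\mapsto t^p$, so the adversary's minimising configuration is the spread one, and analysing that configuration already gives a constant gap. The genuine work---which your outline does not address---is to establish a lower bound of the form $\sum_d(w(T_d)-3)_+\ge \Omega(m)$ valid for \emph{every} tree cover in the soundness case; this is exactly what the heavy hubs buy, since they pin each depot to a baseline of~$3$ that cannot be undercut. With integer edge weights the extremal vector then has some entries equal to~$4$, the same number equal to~$2$, and the rest equal to~$3$, and the strict inequality $2^p+4^p>2\cdot3^p$ for $p>1$ yields the multiplicative gap directly. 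Finally, the randomised reduction in the theorem statement refers to the probabilistic amplifier-graph construction of Karpinski et al.\ underlying the hardness of the source problem \maxsat{}, not to any randomisation layered on top by the reduction itself.
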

Specifically, we show that \lptc{} is \NP-hard to approximate to a factor 
\begin{equation*}
\left[ \frac{(106-\frac{1}{4} + \frac{\varepsilon}{2}) 3^p + (\frac{1}{8} - \frac{\varepsilon}{4}) (2^p+4^p)}{(106-2\varepsilon) 3^p + \varepsilon  (2^p + 4^p)}\right]^{1/p} > 1,
\end{equation*}
for any choice of $\varepsilon > 0$.

\section{Preliminaries}
\label{sec:prel}
We set up some formal definitions.
We will identify metric spaces and metrically edge-weighted graphs to allow for an easier treatment of connectedness, trees, and similar graph-theoretic objects.
For any such graph~$G$, we always keep in mind the underlying metric space $V$ induced by the shortest-path metric on~$G$.
We also assume that the metrics used are integral, for ease of presentation. 
Our results extend to rational metrics by rescaling the metric.

We here concern ourselves with tree covers of graphs, which are to be defined as follows:
\begin{definition}[Tree cover]
  For a graph $G$ , a \emph{tree cover} is a collection $\{T_1,\dots, T_k\}$ of trees for which $\bigcup_{i}V(T_i) = V$.
  We call $k$ the \emph{size} of such a tree cover.
\end{definition}

\begin{definition}[Tree cover with depots]
  For a graph $G$ and a depot set $D\subseteq V$, a \emph{tree cover with depots} is a tree cover $\{T_1,\dots, T_{|D|}\}$ of $G$, where each $T_i$ contains a unique depot from $D$.    
\end{definition}

Notice in particular that there is no expectation of disjointness for the trees.
If disjointness is required, we may assign every node to exactly one of the trees currently containing it and recompute minimum-weight spanning trees for each of the resulting clusters. 
This increases the cluster weights by at most a factor of $2$ due to the gap between Steiner trees and minimum-weight spanning trees (see \Cref{lem:steinerMST}). 
For any connected subgraph $H$ of a graph $G$, we use $w(H)$ to denote the weight of a minimum-weight spanning tree in the subgraph $H$. 
It is important to note that this weight can differ from the sum of the weight of the edges of $H$.
We then use the $p$-norm $(p \ge 1)$ as a measure of the quality of a tree cover.
Specifically, the cost of a tree cover $\{T_1, \hdots, T_k\}$ is defined as the $p$-norm of the corresponding tree weights, that is, $w_p=\left(\sum_{i\in[k]} \left(w(T_i)\right)^p \right)^{1/p}$.

There are two natural optimization questions for tree covers  (and for tree covers with depots) which have been considered in the literature: The {\sc $\ell_1$-Tree Cover} problem, where the aim is to minimize the sum of the weights of the $k$ trees, and the {\sc $\ell_\infty$-Tree Cover} problem in which the objective is to minimize the weight of the heaviest tree within the cover.
The {\sc $\ell_1$-Tree Cover} problem admits a polynomial-time algorithm, regardless of the presence of depots; in contrast, the {\sc $\ell_\infty$-Tree Cover} problem is $\mathsf{APX}$-hard, again regardless of whether depots are present or not.

We consider the interpolation between these two problems by allowing arbitrary $\ell_p$ norms: 

\begin{definition}[\lptc{} (resp. \lptc{} {\sc with Depots} )]
  Given a graph $G=(V, d)$ and  some integer $k$ (resp. depots $D\subseteq V$) and a $p\in [1,\infty)$, find a tree cover $\{T_1 ,\dots,T_k\}$ (resp. with depots) which minimizes the expression 
    $OPT_{p,k}:=\left(\sum_{i = 1}^k(w(T_i))^p\right)^{1/p}.$
\end{definition}

We will usually omit $k$ if it is clear from context.
All of the $\ell_p$-variants (except $p = 1$) are $\mathsf{NP}$-hard, as the proof of hardness for the $\ell_\infty$-case works for all values of $p > 1$~\cite{xu2010approximation}.  

\begin{definition}[\antc{} (resp. \antcwd{})]
  Given a graph $G=(V, d) $ and some integer $k$ (resp. depots $D\subseteq V$), find a tree cover $\{T_1,\hdots,T_k\}$ (resp. with depots) which minimizes 
    $\max_{p\in[1, \infty)}  \frac{\left(\sum_{i = 1}^k(w(T_i))^p\right)^{1/p}}{OPT_p} \enspace .$
\end{definition}

To distinguish the problem versions for a \emph{fixed} norm $p$ from those for \emph{all $p$ simultaneously}, we denote by $(\ell_p, k)$ (or $(\ell_p, D)$ for problems involving depots) the tree cover problem for a specific value of $p$.
Meanwhile, we use $(\ell_{p\in[1, \infty)}, k)$ (or $(\ell_{p\in[1, \infty)}, D)$ when depots are involved) to represent the {\sc All-Norm Tree Cover} problem that encompasses all values of $p\in[1, \infty)$.
We omit naming the underlying metric space $(V,d)$, unless explicitly stated otherwise.

We will state a result about the relationship between the cost of minimum-weight Steiner trees and minimum-weight spanning trees, as we will harness this argument repeatedly:
\begin{lemma}[Kou et al. \cite{kouFastAlgorithmSteiner1981}]
\label{lem:steinerMST}
  Let $(V,d)$ be a metric space with some terminal set $F\subseteq V$, let $\that$ be a minimum-weight (Steiner) tree containing all nodes from $F$, and let $T$ be a minimum-weight spanning tree of $(F,\restr{d}{F})$.
  Then $w(T) \leq (2-2/|F|)\cdot w(\that)$.
\end{lemma}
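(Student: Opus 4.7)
The plan is to use the classical ``double and shortcut'' construction that underlies many constant-factor arguments for Steiner-type problems. First, I would form the multigraph $\hat{T}'$ obtained by duplicating every edge of~$\that$. Since every node in this multigraph now has even degree and the underlying graph is connected, $\hat{T}'$ admits an Eulerian closed walk $W$ of total weight exactly $2 \cdot w(\that)$. The walk~$W$ visits every terminal in~$F$ at least once (and possibly many Steiner points in $V\setminus F$).

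Next, I would shortcut~$W$ to a Hamiltonian cycle $C$ on the terminal set~$F$: traverse $W$, and whenever the next vertex in the walk has already been visited (or does not belong to~$F$), skip it and proceed directly to the next unvisited terminal. Because $(V,d)$ is a metric space, the triangle inequality guarantees that each such shortcut edge has weight no larger than the portion of~$W$ it replaces, so the resulting cycle $C$ on $F$ satisfies $w(C) \le w(W) = 2\cdot w(\that)$.

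Now I would delete the heaviest edge from~$C$. Since~$C$ has exactly $|F|$ edges, its heaviest edge has weight at least $w(C)/|F|$, so what remains is a Hamiltonian path on~$F$ of weight at most $\bigl(1 - 1/|F|\bigr) \cdot w(C) \le \bigl(2 - 2/|F|\bigr)\cdot w(\that)$. This Hamiltonian path is in particular a spanning tree of the metric $(F, \restr{d}{F})$, and the minimum-weight spanning tree $T$ of that metric can only be lighter, yielding $w(T) \le (2 - 2/|F|)\cdot w(\that)$ as required.

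There is no genuine obstacle in this argument; the only step that requires care is the shortcutting, where one must verify that the triangle inequality really does apply edge-by-edge (which it does, since the metric $d$ is the shortest-path metric on the underlying graph, and composing edges along the walk can only overestimate the direct distance). The tightness factor $2 - 2/|F|$ comes entirely from the final cycle-to-path step, so no additional optimization is needed beyond choosing to drop the single heaviest cycle edge.
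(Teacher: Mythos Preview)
Your argument is correct and is exactly the tree-doubling heuristic the paper invokes: double $\that$, take an Eulerian walk, shortcut to a tour on $F$, and drop the heaviest edge. The paper's proof is identical, just stated more tersely.
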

\begin{proof}
  An easy way to obtain the result is to take $\that$ and use the tree-doubling heuristic to compute a traveling salesperson tour $\mathcal T$ of $F$ in $(V,d)$, costing at most $2\cdot w(\that)$.
  Removing the heaviest edge of $\mathcal T$ yields a (perhaps not yet minimal) spanning tree $T'$ in $(F,\restr{d}{F})$ of weight at most $(2-2/|F|) w(\that)$, allowing us to conclude that $w(T) \leq w(T') \leq (2-2/|F|) w(\that)$.
\end{proof}
\Cref{lem:steinerMST} will prove useful as we occasionally want to forget about some nodes of our instance.
This may actually increase overall costs, since certain Steiner trees become unavailable. 
However, the lemma ensures that the increase in cost is bounded.

\section{Simultaneous Approximations for the All-Norm Tree Cover Problem}
\label{sec:TCnodepots}

First, to provide some some good intuition of the key techniques for the general case, we show in detail that the {\sc Tree Cover} algorithm of Even et al.~\cite{even2004min} gives a constant-factor approximation to the $\ell_p$-tree cover problem without depots provided that it returns $k$ trees.
Indeed, we show that the argument will work for all monotone symmetric norms.
The original algorithm works as follows, for a given graph $G = (V,d)$:
\begin{enumerate}
  \item Guess an upper bound $R$ on the optimum value of the $\ell_\infty$-norm tree cover problem for instance $G$, where initially $R=1$.
  \item Remove all edges with weight larger than $R$ from $G$, and compute a minimum-weight spanning tree $T_i$ for each connected component of the resulting graph.
  \item Check that $\sum\lfloor \frac{w(T_i) + 2R}{2R} \rfloor = k$; otherwise, reject~$R$ and go to step $1$ with $R := 2R$. \label[step]{alg:item:exactlyK}
  \item Call a spanning tree $T_i$ \emph{small} if $w(T_i) < 2R$.
  \item Call a spanning tree $T_i$ \emph{large} if $w(T_i) \geq 2R$.
    Decompose each such tree into edge-disjoint subtrees $\Tilde{T}_j$ such that $2R \leq w(\Tilde{T}_j) \leq 4R $ for all but one residual $\Tilde{T}_j$ in each component\footnote{
    This can be done with a simple greedy procedure, cutting of subtrees of this weight.
    For the details of this algorithm, we refer to Even et al. \cite{even2004min}.}. \label[step]{alg:item:Decomposition}
  \item Output the family of all small spanning trees, as well as all subtrees $\Tilde{T}_j$ created in \Cref{alg:item:Decomposition}.
\end{enumerate}

Even at al. show that this procedure will output \emph{at most} $k$ trees if $OPT_\infty \leq R$, although they relax the condition in \Cref{alg:item:exactlyK} to $\sum\lfloor \frac{w(T_i) + 2R}{2R} \rfloor  \leq k$.
For technical reasons, however, we need the equality in this spot.
Since every tree has weight most $4R$, the algorithm evidently gives a $4$-approximation in the $\ell_\infty$-norm if the correct value of $R = OPT_{\infty,k}$ is guessed. 
Otherwise, one can use binary search to find, in polynomial time, the smallest $R$ for which one gets at most $k$ trees.

In general, however, we notice that this algorithm will sometimes compute fewer than $k$ trees, causing it to not return a good approximation for the other $\ell_p$-norms, not even for the $\ell_1$-norm (see \Cref{fig:CounterexampleMinMaxAlgo}).
\begin{figure}
  \centering
  \includegraphics[width=0.95\textwidth]{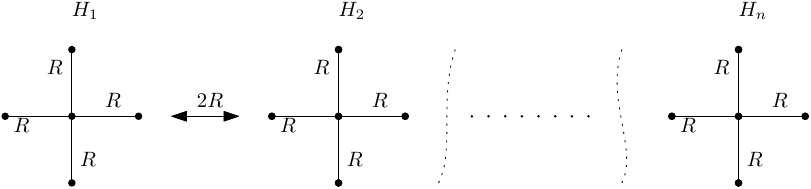}
  \caption{Example instance where the algorithm of Even et al.\ does not return a good approximation in the $\ell_1$-objective. The instance consists of $n$ identical copies $H_i$ of the $4$-star where all edges have length $R$ and the copies are pairwise at distance $2R$. For $k=5n-1$, the instance has $OPT_\infty = OPT_1 = R$, but the algorithm will return $2n$ trees, each of weight $2R$. Observe that the partition requirement in \Cref{alg:item:Decomposition} would also be fulfilled if the trees are not further partitioned and kept at size $4R$, however the algorithm of Even et al. produces the former solution. Both solutions do not achieve a good approximation in the $\ell_1$ objective.\label{fig:CounterexampleMinMaxAlgo}}
\end{figure}

For this reason, we need the strengthened property in \Cref{alg:item:exactlyK}.
Then, however, notice that such a value $R$ does not necessarily exist; for example, for the instance in \Cref{fig:CounterexampleMinMaxAlgo} this is the case.
We will describe later how to avoid this issue of non-existence; for now, we assume that such a value~$R$ exists and can be computed in polynomial time.

So suppose that the algorithm has returned $k$ trees $T_1,\dots,T_k$, where we simply remove some edges should the algorithm return fewer than $k$ trees.
This removal only improves the objective value, so we will assume---without loss of generality---that the algorithm already returned $k$ trees.
As a warm-up, we will start by considering only the case that $p=1$, i.e., we show that this algorithm computes a solution that is a good approximation for $(\ell_1,k)$ tree cover.
\begin{lemma}
\label{lem:l1Approximation}
  Let $\{T_1,\dots,T_k\}$ be the set of trees returned by the algorithm for some fixed value of~$R$.
  Then we have
    $\sum w(T_i) \leq 2 OPT_{1,k} \enspace .$
\end{lemma}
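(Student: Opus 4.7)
The plan is to compare the algorithm's output to the standard closed-form expression for $OPT_{1,k}$ in terms of $\MST(G)$, and to use the count condition in Step 3 as the key ingredient linking the two bounds.

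First, I observe that the decomposition in Step 5 merely partitions the edges of each large MST into edge-disjoint subtrees; consequently $\sum_i w(T_i)$ equals the total weight of the minimum-weight spanning forest of $G'$, where $G'$ denotes $G$ with all edges of weight exceeding $R$ deleted. A Kruskal-style argument (on $G$ and on $G'$, all edges of weight $\leq R$ are processed identically) shows that this spanning forest consists of exactly the edges of $\MST(G)$ of weight at most $R$. Let $h$ be the number of edges of $\MST(G)$ of weight $>R$; then $G'$ has $m = h+1$ connected components. Combined with the classical identity $OPT_{1,k} = w(\MST(G)) - \sum_{i=1}^{k-1} w(e_i)$, where $e_1 \geq \dots \geq e_{n-1}$ are the edges of $\MST(G)$ in non-increasing order of weight, and with the fact that returning exactly $k$ trees forces $m \leq k$ (and hence $h \leq k-1$, so every heavy MST edge lies among the top $k-1$), I arrive at
\begin{equation*}
  \sum_{i} w(T_i) - OPT_{1,k} \;=\; \sum_{i=h+1}^{k-1} w(e_i) \;\leq\; R(k-m),
\end{equation*}
where the last inequality uses that each of the $k-m$ remaining summands is a light MST edge of weight at most $R$.

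To conclude, a matching lower bound on $\sum_i w(T_i)$ is extracted directly from the count condition. Rewriting $\sum_j \lfloor (w(T_j') + 2R)/(2R) \rfloor = k$, where $T_j'$ ranges over the MSTs of the components of $G'$, as $m + \sum_j \lfloor w(T_j')/(2R) \rfloor = k$, and applying $\lfloor x \rfloor \leq x$, I obtain $k-m \leq \tfrac{1}{2R} \sum_j w(T_j') = \tfrac{1}{2R}\sum_i w(T_i)$, so $\sum_i w(T_i) \geq 2R(k-m)$. Chaining this with the excess bound above yields $\sum_i w(T_i) - OPT_{1,k} \leq \tfrac{1}{2} \sum_i w(T_i)$, which rearranges to $\sum_i w(T_i) \leq 2\,OPT_{1,k}$.

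The main care point is the structural identification of the algorithm's edges with the light edges of $\MST(G)$, which hinges on the fact that Kruskal's algorithm processes lighter edges first and is what ensures the $k-m$ excess edges all have weight at most $R$; once this is clear, the two bounds fit together in a short computation, and the constant $2$ is tight as witnessed by the path on $2k$ vertices with unit-weight edges.
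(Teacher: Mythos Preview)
Your proof is correct and follows essentially the same approach as the paper's own proof. Both arguments establish the same two key inequalities---that the excess $\sum_i w(T_i) - OPT_{1,k}$ is at most $R(k-m)$ (where $m$ is the number of components of $G'$), and that the Step~3 count condition forces $\sum_i w(T_i) \geq 2R(k-m)$---and then combine them identically; your version just makes the structural identification of $\sum_i w(T_i)$ with the light edges of $\MST(G)$ more explicit, whereas the paper phrases the first bound as ``the optimum removes $k - k_s - k_\ell$ edges of weight at most $R$ from the component spanning trees.''
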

\begin{proof}
  We observe first that an optimal solution for $(\ell_1,k)$ tree cover can be computed by removing iteratively the heaviest edge as long as this does not cause the graph to decompose into more than $k$ components, as this procedure is equivalent to running Kruskal's algorithm.
  As the optimal solution contains no edges of weight greater than $R$, we consider the graph $G' := G - \{e \mid d(e) > R\}$.
  Let~$k_s$ be the number of small spanning trees $S_i$ computed by the algorithm, and let $k_\ell$ be the number of large spanning trees $L_i$. 
  Then we certainly have $\sum w(T_i) \leq \sum w(S_i) + \sum w(L_i)$, since the trees computed by the algorithm are edge-disjoint subtrees of the initial spanning trees.
  Further, we have 
  \begin{equation*}
  OPT_{1,k} \geq \sum w(S_i) + \sum w(L_i) - (k-k_s-k_\ell)R,
  \end{equation*}
  because the optimal solution will remove $k-k_s-k_\ell$ edges from within the spanning trees computed by the algorithm, each of weight at most $R$.
  But notice that we can use \Cref{alg:item:exactlyK} to obtain
  \begin{align*}
    k &= \sum\left\lfloor \frac{w(S_i)}{2R} + 1 \right\rfloor + \sum\left\lfloor \frac{w(L_i)}{2R} + 1 \right\rfloor \leq k_s + k_l + \sum \frac{w(L_i)}{2R},
  \end{align*}
  which implies that $(k - k_s - k_\ell) 2R \leq \sum w(L_i)$.
  This inequality allows us to conclude that
  \begin{align*}
    OPT_{1,k} &\geq \sum w(S_i) + \sum w(L_i) - \frac{1}{2}\sum w(L_i) \geq \frac{1}{2} (\sum w(S_i) + \sum w(L_i)) \geq \frac{1}{2} \sum w(T_i) \enspace .\tag*{\qedhere}
  \end{align*}
\end{proof}

Notice that the strengthened version of \Cref{alg:item:exactlyK} is indeed necessary here.

We can use a similar technique to show that the solution computed by the algorithm is a constant-factor approximation for $(\ell_p,k)$ tree cover for any choice of $p$, and with a constant of approximation independent of $p$. 
The key argument will be to show that an optimal solution to $(\ell_p,k)$ tree cover must, as in \Cref{lem:l1Approximation}, have a total weight comparable to that of the solution computed by the algorithm. 
In a second step, one can then show that the algorithm distributes the total weight of its trees fairly evenly, because the large trees all have weight between $2R$ and~$4R$.

Meanwhile, for the small trees we can demonstrate that choosing them differently from the algorithm will incur some cost of at least $R$.
Thus, either the algorithms' solution has correctly chosen the partition on these small trees, or the optimal solution actually has a heavy tree not in the algorithms' solution, which we can use to pay for one of the large trees that the algorithm has, but the optimal solution does not.
Notice that, if the algorithm achieves an equal distribution of some total weight that is comparable to the total weight of an optimal solution under the $\ell_p$-norm, it also achieves a good approximation of $OPT_{p}$ due to convexity of the $\ell_p$-norms.

To start with, we will only give a rough analysis of the quality of the solution returned by the algorithm, although it already shows a constant factor of approximation.
Thereafter, in \Cref{apx:sec:noDepotImproved}, we provide a more precise analysis of the algorithm to argue that it achieves a smaller constant approximation factor, specifically a $9$-approximation. 

Let us fix some $p \in [1,\infty)$ and any $(\ell_p,k)$ tree cover solution $\{\that_1,\dots,\that_k\}$ (you may imagine that it is optimal). 
Then let $k_s$ be the number of connected components computed by the algorithm that had a small spanning tree, and call those components $S_1,\dots,S_{k_s}$. 
Similarly, for the large spanning trees, let there be $k_\ell$ components $L_1,\dots,L_{k_\ell}$.
Now we denote by $E_\ell$ the set of edges that are both contained in some~$\that_i$ and in the cut induced by some $S_j$ or $L_j$.
In particular, all these edges have weight at least $R$.
We count separately the small~$T_i$ incident to some edge from $E_\ell$, say there are~$k_{s,1}$ of them.
We will also denote by~$k_{s,2}$ the number of $S_i$ for which one of the $\that_j$ is a minimum spanning tree, and denote the set of these $\that_j$ as $T^=$.
Note that any small component not counted by~$k_{s,1}$ or $k_{s,2}$ is split into at least two components by the $\that_i$.
For an illustration of this setting, we refer to \Cref{fig:DepotlessIllustration}.
\begin{figure}
  \centering
  \includegraphics[width=0.95\textwidth]{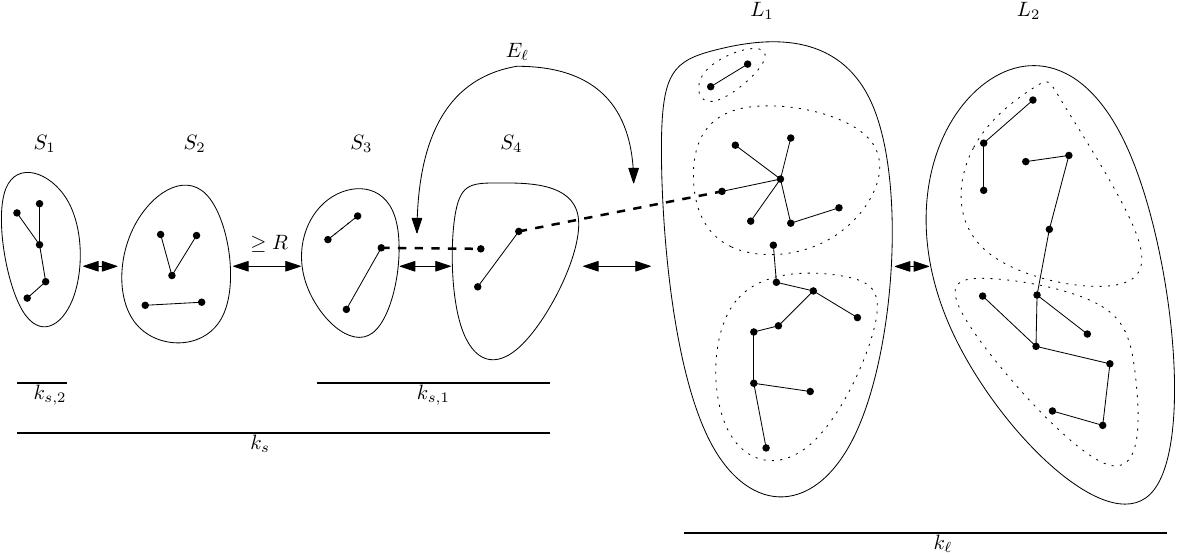}
  \caption{Illustration of how the algorithm's solution and some optimal solution can align against each other. The algorithm's partition of the graph into connected component is indicated by solid zones, the further subdivision of the large components by dotted zones. The trees of the optimal solution are drawn, and all edges crossing the boundary of a connected component are dashed.\label{fig:DepotlessIllustration}}
\end{figure}

We can now start to measure the total sum of edge weights in the $\that_j$, i.e., $\sum_jw(\that_j)$. 
We begin by relating it to the small trees of the algorithm's solution that do not agree with the optimal solution.
\begin{observation}
\label{obs:l1ShortComponents}
  It holds that $\sum_jw(\that_j) - \sum_{\that_i \in T^=} w(\that_i) \geq k_{s,1}\frac{R}{2}$.
\end{observation}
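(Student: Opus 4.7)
The plan is to rewrite the left-hand side as $\sum_j w(\that_j) - \sum_{\that_i \in T^=} w(\that_i) = \sum_{\that_j \notin T^=} w(\that_j)$, and then exhibit a set of heavy edges that collectively contribute at least $k_{s,1}R/2$ in weight to trees outside $T^=$. More precisely, I will produce a set $E' \subseteq E_\ell$ of at least $k_{s,1}/2$ edges, each of weight at least $R$ and each contained in some $\that_j \notin T^=$; the observation then follows immediately by summing.

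To construct $E'$, I let it be the set of edges in $E_\ell$ incident to at least one of the $k_{s,1}$ small components counted by $k_{s,1}$. By that very definition, each such small component contributes at least one edge to $E'$. Since the components $S_i$ are pairwise vertex-disjoint and each edge has exactly two endpoints, an edge in $E'$ can be incident to at most two of those components. A double-counting argument then yields $|E'| \geq k_{s,1}/2$.

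Next I argue that every $e \in E'$ lies in some $\that_j$ with $\that_j \notin T^=$. By the definition of $E_\ell$, the edge $e$ crosses the cut induced by some small component $S_i$, so it has exactly one endpoint inside $S_i$. On the other hand, by definition of $T^=$, every $\that_j \in T^=$ has its vertex set equal to $V(S_{i'})$ for some $S_{i'}$ (being a minimum spanning tree of $S_{i'}$), and hence contains no edge crossing the cut of any small component. So the $\that_j$ containing $e$ cannot lie in $T^=$.

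Finally, every edge in $E_\ell$ has weight at least $R$ since it was removed from $G$ in step~2 of the algorithm (otherwise it would lie within a connected component of $G'$ rather than in such a cut). Combining everything, $\sum_{\that_j \notin T^=} w(\that_j) \geq \sum_{e \in E'} w(e) \geq |E'| \cdot R \geq k_{s,1} R/2$, as desired. I expect the only potentially delicate point to be the claim that edges of $E_\ell$ cannot appear in trees of $T^=$; this relies crucially on the fact that the trees of $T^=$ exactly span single small components and therefore cannot contain cut edges.
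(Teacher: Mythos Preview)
Your proof is correct and follows essentially the same approach as the paper: both arguments use the handshake-type double count on the $k_{s,1}$ disjoint small components to get at least $k_{s,1}/2$ heavy cut edges, observe that such edges have weight at least $R$, and note that none of them can lie in a tree from $T^=$. The only cosmetic difference is that you restrict to the subset $E'\subseteq E_\ell$ of edges actually touching the $k_{s,1}$ components, whereas the paper bounds via all of $E_\ell$; this makes no difference to the argument.
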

\begin{proof}
  We have $k_{s,1}$ pairwise disjoint node sets in $G$, each incident to at least one edge of weight at least~$R$ present in $E_\ell$, so we get $|E_\ell| \geq k_{s,1}/2$ from the handshake lemma, and thus
  \begin{equation*}
    \sum_jw(\that_j) - \sum_{\that_i \in T^=} w(\that_i) \geq |E_\ell|R \geq \frac{k_{s,1}}{2}R,
  \end{equation*}
  noting that the trees in $T^=$ do not contain any of the edges from $E_\ell$.
\end{proof}
To compare the total weight against the number of large trees, we can use a similar argument as in \Cref{lem:l1Approximation}. 
We again note that the initial spanning trees have weight at least $(k-k_\ell-k_s)2R$, and any tree cover cannot remove too many edges from them. 
The second part of this argument is no longer true though, since it is now possible for a solution to have many edges between the components of $G - \{ e \mid  d(e) \geq R\}$, allowing it to remove many edges within the components
However, a careful analysis will show that this case does not pose a problem, since such inter-component edges are themselves heavy.
\begin{observation}
\label{obs:l1LargeComponents}
  We have $\sum_jw(\that_j) \geq (k  - k_\ell - k_{s,1} - k_{s,2})R + \sum_{\that_i \in T^=} w(\that_i)$.
\end{observation}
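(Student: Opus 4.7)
The approach is to decompose each tree $\that_j$ across the components of the graph $G' := G - \{e : w(e) > R\}$, refining the argument used for \Cref{lem:l1Approximation} to track structure more carefully. Assume without loss of generality that $\{\that_1, \dots, \that_k\}$ is vertex-disjoint (dropping duplicate vertices from all but one tree only decreases each $\ell_p$-cost). The edges appearing in some $\that_j$ are either \emph{light} (weight at most $R$, hence lying inside one component of $G'$) or \emph{heavy} (weight more than $R$, so crossing components; these are precisely the edges of $E_\ell$). Removing from each $\that_j$ its $h_j$ heavy edges splits it into $h_j + 1$ \emph{pieces}, each entirely contained in one component of $G'$. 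Writing $c_C$ for the number of pieces contained in component $C$, edge-disjointness of the $\that_j$'s yields the identity $\sum_C c_C = k + |E_\ell|$.

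Within each component $C$ the pieces form a sub-forest spanning $V(C)$ with $c_C$ connected components, so they can be completed to a spanning tree of $C$ using only $c_C - 1$ additional edges of weight at most $R$. Hence the total weight of the pieces in $C$ is at least $w(C) - (c_C - 1) R$. I apply this bound selectively: for each small $S_i$ whose MST equals some $\that_j \in T^=$ the single piece in $S_i$ contributes exactly $w(S_i) = w(\that_j)$; for every other small component I use only the trivial lower bound of $0$; and for each large component $L_i$ I keep the full bound $w(L_i) - (c^L_i - 1) R$. Combined with $|E_\ell| R$ from the heavy edges themselves, this yields
\begin{equation*}
\sum_j w(\that_j) \;\geq\; \sum_{\that_i \in T^=} w(\that_i) + \sum_i w(L_i) - \sum_i (c^L_i - 1) R + |E_\ell| R.
\end{equation*}

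The decisive combinatorial step is an upper bound on $\sum_i c^L_i$: each of the $k_s - k_{s,1} - k_{s,2}$ small components that is neither in $T^=$ nor incident to a heavy edge must contain at least $c_{S_i} = 2$ pieces, since otherwise the unique piece would be a spanning tree---and hence an MST---of $V(S_i)$, placing $S_i$ among the $k_{s,2}$ components counted by $T^=$ and contradicting our assumption. Combined with $c_{S_i} \geq 1$ for the other small components and the identity $\sum_C c_C = k + |E_\ell|$, this gives $\sum_i c^L_i \leq k + |E_\ell| - 2k_s + k_{s,1} + k_{s,2}$. Substituting makes the $|E_\ell| R$ terms cancel, and invoking the equality $\sum_i \lfloor w(L_i)/(2R) \rfloor = k - k_s - k_\ell$ from the strengthened \Cref{alg:item:exactlyK} gives $\sum_i w(L_i) \geq 2R(k - k_s - k_\ell)$, whose factor of two provides exactly the slack needed for the remaining terms to simplify to $R(k - k_\ell - k_{s,1} - k_{s,2}) + \sum_{\that_i \in T^=} w(\that_i)$. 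The main subtlety is the insistence on the \emph{trivial} bound $\geq 0$ for pieces inside unmatched small components rather than the MST-completion bound, since $w(S_i) - (c_{S_i} - 1) R$ can be strongly negative; the proof is driven not by any pointwise estimate on individual $S_i$'s but by the global interplay of the piece-count identity with the doubled lower bound on $\sum_i w(L_i)$.
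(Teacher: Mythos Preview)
Your proof is correct and follows essentially the same route as the paper's: decompose the $\that_j$ into pieces by deleting the edges of $E_\ell$, obtain the piece-count identity $\sum_C c_C = k + |E_\ell|$, bound the number of pieces landing in large components by $k + |E_\ell| - 2k_s + k_{s,1} + k_{s,2}$ via the observation that each ``uncategorised'' small component holds at least two pieces, and then lower-bound the weight inside the large components by the spanning-forest argument together with $\sum_i w(L_i) \geq 2R(k - k_s - k_\ell)$. The paper phrases the spanning-forest step as ``start from the MSTs and remove at most so many edges of weight $\leq R$'', whereas you phrase it dually as ``complete the pieces to a spanning tree by adding $c_C - 1$ such edges''; these are the same bound. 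One small wording issue: your clause ``a spanning tree---and hence an MST'' is not literally justified (a unique spanning piece need not be minimum), but this mirrors exactly the paper's own setup, which asserts that a small component avoided by both $k_{s,1}$ and $k_{s,2}$ is split in two; in effect $k_{s,2}$ is being used to count components spanned by a single $\that_j$, and since $w(\that_j)\geq w(\mathrm{MST}(S_i))$ this only helps all downstream inequalities.
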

\begin{proof}
  Suppose we delete from the $\that_j$ all edges from $E_\ell$.
  This will yield $k + |E_\ell|$ trees.
  We will count only those trees that lie in a large component $L_i$, of which there are at most $k + |E_\ell| - 2k_s + k_{s,1} + k_{s,2}$.
  This is because every small component contains at least $2$ of the $\that_j$, except for $k_{s,1} +k_{s,2}$ many.
  Now observe that to get a $(k + |E_\ell| - 2k_s + k_{s,1} + k_{s,2})$-component spanning forest of \emph{minimum} weight for the $L_i$, we start with the initial minimum spanning trees in each component (which have weight at least $(k-k_s-k_\ell)2R$) and remove at most $(k + |E_\ell| -k_\ell - 2k_s + k_{s,1} +k_{s,2})$ edges, each of weight at most $R$. 
  The total remaining weight is 
    $(k-k_s-k_\ell)2R - 	(k + |E_\ell| -k_\ell - 2k_s + k_{s,1} +k_{s,2})
	=(k  - k_\ell - k_{s,1} - k_{s,2} -|E_\ell|)R$. 
  Thus, we can measure the total weight of edges which are part of some $\that_i$ and lie in a large component as being at least the total weight the spanning trees of the~$L_i$, minus the weight of the edges that may have been removed, and obtain
  \begin{align*}
    &\sum_jw(\that_j) - \sum_{\that_i \in T^=} w(\that_i) - |E_\ell|R &&\geq (k  - k_\ell - k_{s,1} - k_{s,2} -|E_\ell|)R .\\
    \implies &\sum_jw(\that_j) - \sum_{\that_i \in T^=} w(\that_i) &&\geq  (k  - k_\ell - k_{s,1} - k_{s,2})R \enspace .\tag*{\qedhere}
  \end{align*}
\end{proof}

Notice that with these two observations, we can almost show some result along the lines of $OPT_{1,k} \geq c \cdot k \cdot R$ for some $c \in \mathbb{R}$, since either there are many large trees, in which case \Cref{obs:l1LargeComponents} gives us the desired result, or there are many small trees in which case we can use \Cref{obs:l1ShortComponents}, unless~$k_{s,2}$ is large.
However, the case that $k_{s,2}$ is large, i.e., we have computed many of the trees that are present in the optimal solution, should also be beneficial, so we maintain a separate record of $k_{s,2}$ in the analysis to obtain:
\begin{lemma}
\label{lem:totalWeight}
  It holds that $\sum_jw(\that_j) \geq \frac{R}{6}(k - k_{s,2}) + \sum_{\that_i \in T^=} w(\that_i)$.
\end{lemma}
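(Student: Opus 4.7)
The plan is to combine Observations 1 and 2 in a $2{:}1$ convex combination so as to eliminate $k_{s,1}$, and then to use the equality in Step 3 of the algorithm to control the remaining variable $k_\ell$. Writing $W := \sum_j w(\that_j) - \sum_{\that_i \in T^=} w(\that_i)$, Observation 1 reads $W \geq k_{s,1}\,R/2$ and Observation 2 reads $W \geq (k - k_\ell - k_{s,1} - k_{s,2})R$. Taking $2/3$ of the first plus $1/3$ of the second cancels all $k_{s,1}$ contributions and leaves the intermediate bound $W \geq (R/3)(k - k_\ell - k_{s,2})$.

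To finish, I would turn $k - k_\ell - k_{s,2}$ into something proportional to $k - k_{s,2}$. Every large component $L_i$ has $w(L_i) \geq 2R$, so each summand $\lfloor w(L_i)/(2R)\rfloor$ appearing in the equality of Step 3 is at least one. That equality therefore gives $k = k_s + k_\ell + \sum_{L_i}\lfloor w(L_i)/(2R)\rfloor \geq k_s + 2 k_\ell$, i.e., $k_\ell \leq (k-k_s)/2$. Since each tree in $T^=$ is the minimum spanning tree of a distinct small component, one also has $k_{s,2} \leq k_s$, and hence $k - k_\ell \geq (k + k_s)/2 \geq (k + k_{s,2})/2$. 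Substituting into the intermediate bound yields
\begin{equation*}
W \;\geq\; \tfrac{R}{3}\bigl(k - k_\ell - k_{s,2}\bigr) \;\geq\; \tfrac{R}{3} \cdot \tfrac{k - k_{s,2}}{2} \;=\; \tfrac{R}{6}\bigl(k - k_{s,2}\bigr),
\end{equation*}
which is exactly the claimed inequality after adding $\sum_{\that_i \in T^=} w(\that_i)$ to both sides.

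I do not expect any genuine obstacle beyond identifying the right weights for the convex combination; the remainder is bookkeeping that crucially relies on the strengthened \emph{equality} (rather than just $\leq$) in Step 3 of the algorithm, exactly as emphasized after the statement of Lemma~\ref{lem:l1Approximation}. In particular, it is this equality that lets us charge each large component for at least two units on the right-hand side of Step 3, giving the bound $k_\ell \leq (k-k_s)/2$ on which the final step hinges.
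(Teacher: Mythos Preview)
Your proposal is correct and follows essentially the same approach as the paper: the same $2{:}1$ convex combination of the two observations to eliminate $k_{s,1}$, followed by the same use of the Step~3 equality to obtain $k \geq k_s + 2k_\ell$ and hence $k - k_\ell - k_{s,2} \geq (k - k_{s,2})/2$. The only cosmetic difference is that you rewrite the Step~3 sum as $k = k_s + k_\ell + \sum_{L_i}\lfloor w(L_i)/(2R)\rfloor$ before bounding, whereas the paper bounds the floor terms directly.
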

\begin{proof}
  We combine \Cref{obs:l1LargeComponents} and \Cref{obs:l1ShortComponents} convexly with coefficients $1/3, 2/3$ to obtain 
  \begin{align*}
    \sum_jw(\that_j) &\geq \frac{1}{3}\left[(k  - k_\ell - k_{s,1} - k_{s,2})R\right]+ \frac{2}{3}\left[ k_{s,1}\frac{R}{2}\right] + \sum_{\that_i \in T^=} w(\that_i)\\
    \iff \sum_jw(\that_j) &\geq \frac{R}{3}(k - k_\ell - k_{s,2}) + \sum_{\that_i \in T^=} w(\that_i)\\
    \implies \sum_jw(\that_j) &\geq \frac{R}{6}(k - k_{s,2}) + \sum_{\that_i \in T^=} w(\that_i),
  \end{align*}
  where the final inequality follows from the following reasoning:
  \begin{equation*}
    k = \sum\left\lfloor \frac{w(S_i) + 2R}{2R} \right\rfloor + \sum\left\lfloor \frac{w(L_i) + 2R}{2R} \right\rceil  \geq k_s + 2k_\ell,
  \end{equation*}
  and thus
   $ k-k_\ell -k_{s,2} \geq k - \frac{k-k_s}{2} -k_{s,2} = \frac{k}{2} + \frac{k_s}{2} - k_{s,2} \geq \frac{k-k_{s,2}}{2} $.
\end{proof}
The upshot of \Cref{lem:totalWeight} is then  that any tree cover using $k$ trees with some $k_{s,2}$ trees in common with the algorithm's solution will have the property that the other $k-k_{s,2}$ trees have an average weight of~$\Omega(R)$.

\begin{theorem}
\label{thm:noDepotwithexactlyK}
  If the algorithm of Even et al.\ returns $k$ trees $T_1,\dots,T_k$, then we have
  \begin{equation*}
    \left[\sum_{i=1}^t (w(T_i)^p)\right]^{1/p} \leq 24\left[\sum_{i=1}^k (w(\that_i)^p)\right]^{1/p}
  \end{equation*}
  for any $p$ and for any tree cover $\{\that_1,\hdots,\that_k\}$ of $G$ with $k$ trees.
\end{theorem}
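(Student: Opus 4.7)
The plan is to split the index set $\{1,\ldots,k\}$ of the algorithm's trees into a matched part and an unmatched part, and to do the same for the optimum cover $\{\hat T_1,\ldots,\hat T_k\}$. Specifically, the matched part on the algorithm's side are the $k_{s,2}$ small components $S_i$ for which some $\hat T_j \in T^=$ is a minimum spanning tree on the node set of $S_i$; these we pair with the corresponding $\hat T_j$'s. The unmatched part consists of the remaining $k - k_{s,2}$ algorithm trees, and of the $\hat T_j \notin T^=$. For each matched pair the algorithm's $T_i$ and the corresponding $\hat T_j$ are both minimum spanning trees on the same vertex set, so $w(T_i) = w(\hat T_j)$; hence these indices contribute identically to both sides of the claimed inequality. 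It therefore suffices to bound the $p$-th power sum of weights on the unmatched parts.

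On the algorithm's side, every $T_i$ has weight at most $4R$: small components trivially satisfy $w(T_i) < 2R \le 4R$, the decomposed pieces $\tilde T_j$ of large components satisfy $w(\tilde T_j) \leq 4R$ by the decomposition property of \Cref{alg:item:Decomposition}, and the at most one residual per large component has weight less than $2R$. Thus
\begin{equation*}
\sum_{i \text{ unmatched}} w(T_i)^p \;\le\; (k - k_{s,2})(4R)^p.
\end{equation*}
On the optimum side, \Cref{lem:totalWeight} gives
\begin{equation*}
\sum_{\hat T_j \notin T^=} w(\hat T_j) \;\ge\; \tfrac{R}{6}(k - k_{s,2}).
\end{equation*}
Since this is a sum of $k - k_{s,2}$ nonnegative reals with average at least $R/6$, the power-mean inequality (equivalently, convexity of $x\mapsto x^p$ on $\mathbb{R}_{\ge 0}$) yields
\begin{equation*}
\sum_{\hat T_j \notin T^=} w(\hat T_j)^p \;\ge\; (k - k_{s,2})\left(\tfrac{R}{6}\right)^p.
\end{equation*}

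Dividing the two displays shows that the unmatched contribution on the algorithm's side is at most $(4R/(R/6))^p = 24^p$ times the unmatched contribution on the optimum side. Adding the matched contributions (where equality holds and so the ratio is trivially at most $24^p$) gives $\sum_i w(T_i)^p \le 24^p \sum_j w(\hat T_j)^p$, and extracting $p$-th roots finishes the proof.

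The main conceptual obstacle is that \Cref{lem:totalWeight} controls the optimum only through a \emph{total weight} lower bound, whereas the $\ell_p$-cost we must bound from below is sensitive to how that weight is distributed across the $k - k_{s,2}$ trees. The crucial observation is that the algorithm concentrates its weight \emph{uniformly}, in trees all of weight at most $4R$, which is precisely the distribution for which the power-mean inequality is tight; so the adversarial choice of $\hat T$ — making the unmatched weight as unequal as possible to minimise the $\ell_p$-norm — still cannot beat the uniform algorithm profile by more than a factor of $24$, independently of $p$.
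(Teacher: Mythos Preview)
Your proof is correct and follows essentially the same approach as the paper: split into the $k_{s,2}$ matched trees (where the algorithm's and the comparison cover's weights coincide) and the remaining $k-k_{s,2}$ trees, upper-bound the algorithm's unmatched contribution by $(k-k_{s,2})(4R)^p$, lower-bound the optimum's unmatched contribution via \Cref{lem:totalWeight} plus convexity of $x\mapsto x^p$, and take the ratio. The paper's proof carries out exactly these steps, only combining the matched and unmatched parts in a single fraction rather than treating them separately as you do.
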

\begin{proof}
  From \Cref{lem:totalWeight} we obtain
  \begin{equation*}
    \sum_{i=1}^k (w(\that_i)^p) \geq (k-k_{s,2})\left(\frac{R}{6}\right)^p + \sum_{\that_i \in T^=} w(\that_i)^p, 
  \end{equation*}
  using convexity of $|x|^p$ for any $p \geq 1$.
  At the same time, from the algorithm it is clear that
  \begin{equation*}
    \sum_{i=1}^k (w(T_i)^p) \leq (k-k_{s,2})(4R)^p + \sum_{\that_i \in T^=} w(\that_i)^p,
  \end{equation*}
  because every tree is either identical to one of the $\that_j$, or has weight at most $4R$.
  Putting these inequalities together yields the statement of the theorem:
  \begin{align*}
    \frac{\sum_{i=1}^k (w(T_i)^p)}{\sum_{i=1}^k (w(\that_i)^p)} &\leq \frac{ (k-k_{s,2})(4R)^p + \sum_{\that_i \in T^=} w(\that_i)^p}{(k-k_{s,2})\left(\frac{R}{6}\right)^p + \sum_{\that_i \in T^=} w(\that_i)^p}
		\leq \frac{ (k-k_{s,2})(4R)^p }{(k-k_{s,2})\left(\frac{R}{6}\right)^p}
		\leq 24^p \enspace. \tag*{\qedhere}
  \end{align*}
\end{proof}

The whole proof, in particular the result of \Cref{lem:totalWeight} can be reinterpreted to give an even stronger result: namely, that the tree cover returned by the algorithm is approximately ``strongly optimal'', a concept introduced by Alon et al. \cite{alon1998approximation} for analysing all-norm scheduling algorithms.
The point is to show a strong version of lexicographic minimality of the constructed solution. 
Formally, let $\{T_1,\dots,T_k\}$ be the solution computed by the algorithm for a given instance $(\ell_{p\in[1,\infty)},k)$ and $\that_1,\dots,\that_k$ any other tree cover for $(\ell_{p\in[1,\infty)},k)$.
Further, let the trees be sorted non-increasingly by weight as $w(T_1)\geq w(T_2)\geq \dots$ and $w(\that_1)\geq w(\that_2)\geq \dots$.
Then $\{T_1,\hdots,T_k\}$ is \emph{strongly optimal} if for any~$j$ we have
\begin{equation*}
  \sum_{i=1}^jw(T_i) \leq \sum_{i=1}^jw(\that_i) \enspace .
\end{equation*}
Similarly, one can speak of \emph{$c$-approximate strongly optimality} if for some $c\in \reals_{\geq 1}$ we have 
\begin{equation*}
  \sum_{i=1}^jw(T_i) \leq c\sum_{i=1}^jw(\that_i) \enspace.
\end{equation*}
This property was also considered as ``global $c$-balance'' by Goel and Meyerson \cite{goel2006simultaneous}.

Approximate strong optimality suffices for our purposes, since a solution that is $c$-approximate strongly optimal is also a $c$-approximation with respect to any $p$-norm; in fact, we obtain a stronger result here, since $c$-approximate strongly optimality implies a $c$-approximation with respect to \emph{any} convex symmetric function, including all monotone symmetric norms.
This fact is the backbone of many all-norm approximation algorithms, for example those by Golovin, Gupta, Kumar and Tangwongsan~\cite{golovin2008all} for set cover variants.
Thus, we will obtain a $24$-approximation from this analysis not only for all $p$-norms, but for all monotone symmetric norms.

\begin{lemma}
\label{lem:apxStrongOpt}
  Let $\{T_1,\dots,T_k\}$ be the solution computed by the algorithm for a given instance $(\ell_{p\in[1,\infty)},k)$, and let $\{\that_1,\dots,\that_k\}$ be any other tree cover for $(\ell_{p\in[1,\infty)},k)$, where $w(T_1)\geq w(T_2)\geq \dots$ and $w(\that_1)\geq w(\that_2)\geq \dots$.
  Then
    $\sum_{i=1}^jw(T_i) \leq 24\sum_{i=1}^jw(\that_i)$ for $j=1,\hdots,k$. 
\end{lemma}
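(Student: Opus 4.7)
My plan is to adapt the accounting used in the proof of Theorem~\ref{thm:noDepotwithexactlyK} so that it yields a bound on each prefix of the sorted weight vector, not just the $p$-norm of the whole vector. Fix $j \in \{1,\dots,k\}$ and split the top-$j$ algorithm trees $T_1,\dots,T_j$ into two sets: the \emph{matched} trees $M_j$ of size $m^\ast$, consisting of those $T_i$ whose component carries some $\that_\ell \in T^=$, and the \emph{algorithmic} trees $A_j$ of size $a^\ast = j - m^\ast$, consisting of everything else. Each tree in $M_j$ shares its weight exactly with its partner in $T^=$, and each tree in $A_j$ has weight at most $4R$, since it is either the MST of a small component or a subtree produced by the decomposition of a large component. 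This immediately yields the upper bound
\begin{equation*}
\sum_{i=1}^j w(T_i) \;\leq\; \sum_{T_i \in M_j} w(T_i) + 4 a^\ast R.
\end{equation*}

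For the lower bound on $\sum_{i=1}^j w(\that_i)$, I would exploit that the sorted prefix sum is the \emph{maximum} sum of $j$ weights from the list $\{w(\that_1),\dots,w(\that_k)\}$, so it suffices to exhibit a concrete $j$-subset whose weights are provably large. The natural choice is the $m^\ast$ matches of $M_j$ in $T^=$, contributing exactly $\sum_{T_i \in M_j} w(T_i)$, together with the $a^\ast$ heaviest non-matched trees of $\that$. Lemma~\ref{lem:totalWeight} says that the $k-k_{s,2}$ non-matched trees have total weight at least $(k-k_{s,2}) R/6$, and since the average of the top $a^\ast$ entries of a non-negative list is at least the overall average, this block contributes at least $a^\ast R/6$. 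Provided $a^\ast \leq k-k_{s,2}$---which holds because the matched trees on the algorithm's side are in bijection with $T^=$---the construction is valid and yields
\begin{equation*}
\sum_{i=1}^j w(\that_i) \;\geq\; \sum_{T_i \in M_j} w(T_i) + a^\ast R/6.
\end{equation*}

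Combining the two bounds, the ratio to be controlled reduces to $(X+4Y)/(X+Y/6)$ with $X = \sum_{T_i \in M_j} w(T_i) \geq 0$ and $Y = a^\ast R \geq 0$, and an elementary calculation shows this is at most $24$ for all such $X, Y$. The principal hurdle I anticipate is verifying the two counting constraints underlying the lower-bound construction, specifically that $a^\ast \leq k - k_{s,2}$ and that the ``top-$a^\ast$ exceeds average'' inequality really applies to the non-matched trees of $\that$; once these are in place, the whole argument is a per-prefix shadow of the convexity step used in Theorem~\ref{thm:noDepotwithexactlyK}, with no further geometric ingredients beyond Lemma~\ref{lem:totalWeight}.
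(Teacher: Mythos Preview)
Your proposal is correct and follows essentially the same approach as the paper: both arguments bound the non-matched algorithm trees by $4R$ each, invoke Lemma~\ref{lem:totalWeight} to get an average of $R/6$ for the non-matched $\that$-trees, and use that the sorted prefix sum dominates any chosen $j$-subset. Your explicit $M_j/A_j$ decomposition and the ``top-$a^\ast$ beats the average'' step make the counting a bit more transparent than the paper's relabeling argument, but the content is the same; one small remark is that the inequality $a^\ast \leq k - k_{s,2}$ is more directly seen by noting that $A_j$ is a subset of the $k-k_{s,2}$ unmatched algorithm trees, rather than via the bijection you cite.
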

 \begin{proof}
   We obtain an upper bound for the values $\sum_{i=1}^jw(T_i)$ by assuming that all trees which are not accounted for by the $k_{s,2}$ small component trees shared between the $T_i$ and the $\that_i$ have weight exactly~$4R$.
   This will ensure that the shared trees are $T_{k-k_{s,2}+1},\dots,T_k$.
   Similarly, we obtain a lower bound $\sum_{i=1}^jw(\that_i)$ by allowing a non-descending permutation of the $\that_i$. 
   That is, we reorder the $\that_i$ such that the first $k-k_{s,2}$ trees are not the shared small component trees with the $T_i$.
 	
   It then follows directly that
   \begin{align*}
     \sum_{i=1}^j w(T_i) \leq j\cdot4R = 24 (j\cdot\frac{R}{6}) \leq 24\sum_{i=1}^j w(\that_i) \text{ for } j \leq k-k_{s,2},
   \end{align*}
   as well as 
   \begin{align*}
     \sum_{i=1}^{k-k_{s,2}} w(T_i) + \sum_{i=1+k-k_{s,2}}^{j}w(T_i) &\leq  24\sum_{i=1}^{k-k_{s,2}} w(\that_i) + \sum_{i=1+k-k_{s,2}}^{j}w(\that_i)\\
     \implies
     \sum_{i=1}^j w(T_i) &\leq  24\sum_{i=1}^j w(\that_i) \text{ for } j>k-k_{s,2} \enspace .\tag*{\qedhere}
  \end{align*}
\end{proof}
 
\subsection{Ensuring \texorpdfstring{$k$}{k} trees}
\label{ssec:exactlyKtrees}
Recall that the previous analysis of the approximation factor relies on the algorithm being able to find some $R$ such that \Cref{alg:item:exactlyK} holds, which is not generally true as per \Cref{fig:CounterexampleMinMaxAlgo}. 
We now demonstrate how to modify the algorithm in such a way that this is avoided.
Consider a list $e_1,\dots,e_m$ of the edges of $G$, such that $d(e_i) \leq d(e_j)$ if $i \leq j$, i.e., they are sorted by length with ties broken arbitrarily.
Then we consider separately the graphs $G_j := (V,\{e_i\mid i \leq j \})$ for all $j = 0,\dots,m$ and try to find for each~$G_j$ an $R \in [ d(e_{j-1}),d(e_{j}) ]$ that is accepted by the algorithm, where we set $d(e_0):=0$, and allow the larger interval $[d(e_{m}),  \sum_i d(e_i)]$ for $G_m$.
Note that the intervals can potentially only contain a single node, but they are not empty.

Now observe that for $G_0$ with $R=0$, the algorithm would only accept this choice of $R$ if $k=|V|$. 
Similarly, for $G_m$ and $R = \sum_i d(e_i)$, the algorithm would require $k = 1$.
We can then show that the $k$ changes by at most one as we change $R$ or keep $R$ and move from $G_j$ to $G_{j+1}$.
Let $k(G_j, R)$ denote the value of $k$ that would be accepted by the algorithm.
Thus
\begin{observation}
  It holds that $k(G_{j-1},d(e_j)) - k(G_{j},d(e_j)) \leq 1$.
\end{observation}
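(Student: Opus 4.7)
The plan is to analyze the single structural change that occurs when passing from $G_{j-1}$ to $G_j$, namely the addition of the single edge $e_j$. Two cases naturally arise, according to whether the endpoints of $e_j$ already lie in the same connected component of $G_{j-1}$ or in two distinct ones.

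First, I would dispose of the intra-component case. Here the connected components of $G_{j-1}$ and $G_j$ coincide. Since the edges are listed in order of non-decreasing weight, $e_j$ has weight at least $d(e_i)$ for every $i < j$, so it is a heaviest edge on any cycle it lies on in $G_j$; by the cycle property of minimum spanning trees, $e_j$ can be excluded from an MST of its component. Consequently the MST weights used by the algorithm are identical for $G_{j-1}$ and $G_j$, so $k(G_{j-1},d(e_j)) = k(G_j,d(e_j))$ and the claimed inequality is trivial.

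Next, I would handle the merging case, where $e_j$ joins two distinct components $C_a, C_b$ of $G_{j-1}$ with MSTs $T_a, T_b$. The key structural observation is that no other edge of $G_j$ connects $C_a$ and $C_b$, since any such edge would already be present in $G_{j-1}$ and would have merged the two components. Hence the merged component in $G_j$ has MST $T_a \cup T_b \cup \{e_j\}$ of weight $w(T_a) + w(T_b) + d(e_j)$, while every other component contributes identically to $k(G_{j-1},d(e_j))$ and $k(G_j,d(e_j))$. Setting $R := d(e_j)$, the entire difference therefore collapses to the single expression
\begin{equation*}
\left\lfloor \frac{w(T_a) + 2R}{2R} \right\rfloor + \left\lfloor \frac{w(T_b) + 2R}{2R} \right\rfloor - \left\lfloor \frac{w(T_a) + w(T_b) + 3R}{2R} \right\rfloor.
\end{equation*}

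The only remaining task is a short floor-function calculation, which I expect to be the main (and only real) technical obstacle. Substituting $x := w(T_a)/(2R)$ and $y := w(T_b)/(2R)$, the expression rewrites as $\lfloor x \rfloor + \lfloor y \rfloor + 2 - \lfloor x + y + 3/2 \rfloor$; since $x + y + 3/2 \geq \lfloor x \rfloor + \lfloor y \rfloor + 3/2$, we get $\lfloor x + y + 3/2 \rfloor \geq \lfloor x \rfloor + \lfloor y \rfloor + 1$, which bounds the whole expression by $1$. The bound is tight (take $w(T_a) = w(T_b) = 0$), showing that the observation is optimal. Beyond this arithmetic, the conceptual point is simply that a single-edge addition alters at most one MST, by fusing two existing spanning trees through a single edge.
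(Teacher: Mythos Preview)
Your proof is correct and follows essentially the same approach as the paper: both split into the intra-component and merging cases, and in the merging case reduce to the identical floor-function inequality $\lfloor x \rfloor + \lfloor y \rfloor + 2 - \lfloor x + y + 3/2 \rfloor \leq 1$. Your write-up is in fact slightly more careful than the paper's, since you explicitly justify via the cycle property why the MSTs are unchanged in the intra-component case and why the merged MST is exactly $T_a \cup T_b \cup \{e_j\}$; the paper leaves these implicit.
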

\begin{proof}
  Between $G_{j-1}$ and $G_j$, only the presence of $e_j$ changes.
  If this does not change the connected components, we have $k(G_{j-1},d(e_j)) = k(G_{j},d(e_j))$.
  Otherwise, there is exactly one connected component $C$ in $G_j$ that is split into two parts $C_1,C_2$ by removing $e_j$.
  We then see that 
  \begin{align*}
    \left\lfloor \frac{w(C_1)+2d(e_j)}{2d(e_j)} \right\rfloor + \left\lfloor \frac{w(C_2)+2d(e_j)}{2d(e_j)} \right\rfloor &\leq 2+  \left\lfloor \frac{w(C_1)}{2d(e_j)} + \frac{w(C_2)}{2d(e_j)} \right\rfloor \leq 1+ \left\lfloor \frac{w(C) + 2d(e_j)}{2d(e_j)} \right\rfloor \enspace .\tag*{\qedhere}
  \end{align*}
\end{proof}

To see that changing $R$ by a sufficiently small amount also only changes $k(G_j, R)$ by at most one, consider that there are only finitely many critical nodes where $k(G_j, R)$ changes at all, and they can be computed in polynomial time. 
They are all of the form $R = w(C_i) / 2\ell$ for some connected component $C_i$ of $G_j$ and $\ell \in 1,\dots, n$.
At these nodes, $w(C_i)/2R$ will be an integer, so $\lfloor w(C_i)/2R \rfloor = 1+ \lfloor (w(C_i)-\varepsilon)/2R \rfloor$.
If all critical nodes are pairwise different, we can just iterate over them to find some $G_j$ and $R$ with $k(G_j,R) = k$.

If on the other hand a node is critical for multiple different components, assign to each component a distinct weight which can be taken to be arbitrarily small, ensuring that all critical nodes are now different. 
In effect, this is equivalent to taking all components where $\frac{w(C_i) + 2R}{2R}$ is an integer and allowing the expression $\lfloor \frac{w(T_i) + 2R}{2R}\rfloor$ to also take the value $\frac{w(T_i)}{2R}$.
One may check that this does not impact the analysis, since in the analysis we only need that each component has a minimum spanning tree of weight at least $\lfloor \frac{w(T_i)}{2R} \rfloor \cdot 2R$.
However, for legibility reasons we will suppress this and generally assume that some $R$ can be found with $k(G_j, R) = k$.
Combining with \Cref{thm:noDepotwithexactlyK}, this completes the proof of \Cref{thm:TCnodepots}.

\section{All-norm tree cover problem with depots}
The algorithm for the case \emph{with} depots is considerably more involved, in particular because the simple lower bound for the depot-less algorithm of assuming an even distribution of the total weight can no longer work: it might be necessary to have unbalanced cluster sizes, for instance if some depots are extremely far from all nodes.

Instead our algorithm constructs a  $c$-approximately (here, c is a constant) strongly optimal solution by also maintaining some (implicit) evidence that the optimum solution contains large trees if it decides to create a large tree itself. More concretely we do the following:
\begin{enumerate}
  \item First, we partition the node set $V$ into layers $L_i$ such that all nodes in $L_i$ have distance between $2^{i-1}$ and $2^i$ to the depots.
  \item Then we consider separately the nodes in the odd and even layers; this implies that nodes in different layers have a large distance to each other. Computing separate solutions for these two subinstances loses at most a factor $2$ in the approximation factor due to \Cref{lem:steinerMST}.
  \item Next, we partition the nodes in each layer $L_i$ using the non-depot algorithm with $R=2^i$.
  This yields a collection of subtrees in $L_i$ such that the cost of connecting such a subtree to its nearest depot is in $\Theta(2^i)$. 
  This allows us to treat them basically identically, loosing only the constant in the $\Theta$.
  Indeed, we can show explicitly that this prepartitioning into subtrees can be assumed to be present in an optimal solution, up to a constant factor increase in the weight of each tree.
  For the full reasoning, compare the proof of \Cref{lem:alignment} in \Cref{sec:TCDepots}.
  \item To assign these trees to the depots, we iteratively maintain an estimate of the largest tree necessary in any solution as $2^i$.
  We then collect all trees of weight $\Theta(2^i)$ and compute a maximum matching between them and the depots at distance $\Theta(2^i)$ to them.
  All unmatched trees are then combined to form trees of weight $\Theta(2^{i+1})$, and we update our estimate to~$2^{i+1}$.
\end{enumerate}

To analyse the output of this algorithm, it will suffice to show that the estimate was correct up to a constant factor. 
That is, if in round $i$ some trees (suppose $k_i$ trees) of weight $2^i$ were assigned, we will be able to prove that any tree-cover solution must also have some family of at most $k_i$ trees with total weight at least $k_i\cdot 2^i$ (see \Cref{lem:matching} for a formal statement of this criterion). 

From this we are then able to conclude $c$-approximate strong optimality for some value of $c <10^6$. 
This is a rather large upper bound on the approximation factor, however, we conjecture that the actual constant achieved by the algorithm is much smaller. 
For the purposes of a clean presentation, we did not try to optimize the constant in our proof.

\section{Computational Hardness}
To complement our algorithmic results, we establish hardness results for all-norm tree cover problems and discuss on what kind of algorithmic improvements (to our algorithms) are potentially possible in light of these complexity results.
Specifically, we show:

\begin{enumerate}
  \item For any $p\in (1,\infty]$, problem $\ell_p$-{\sc Tree Cover with Depots} is weakly $\mathsf{NP}$-hard, even with only $2$ trees. \label{item:HardnessFromPartition}
  \item For any $p\in (1,\infty]$, problem $\ell_p$-{\sc Tree Cover with Depots} with $k$ trees is (strongly) $\mathsf{W}[1]$-hard parameterized by depot size $|D|$.
  \item For any $p\in (1,\infty]$ there exists some $\varepsilon>0$ such that $\ell_p$-{\sc Tree Cover with Depots} is $\mathsf{NP}$-hard to approximate within a factor $<1+\varepsilon$ under randomized reductions (see \Cref{thm:apxHardness} for the reduction). \label{item:apxHardness}
\end{enumerate}
Results 1 and 2 were already essentially presented by Even et al. \cite{even2004min}, but we restate them for completeness.
Note that, given these complexity results, numerous otherwise desirable algorithmic outcomes become unattainable. 
For instance, there cannot be polynomial-time approximation schemes for $\ell_p$-{\sc Tree Cover with Depots}, nor can we expect fixed-parameter algorithms (parameterized by the number of depots) finding optimal solutions, unless established hardness hypotheses (\PP$\not =$\NP, ${\sf FPT}\not ={\sf W[1]}$) fail. 
Further, the reduction of \Cref{item:HardnessFromPartition} yields bipartite graphs of tree-depth $3$, so parameterization by the structure of the graph supporting the input metric also appears out of reach.

The result in \Cref{item:apxHardness} follows by a direct gadget reduction from {\sc Max-Sat} for $3$-ary linear equations modulo $2$ (MAX-E3LIN2), which was shown to be \APX-hard by H\aa stad \cite{Hastad99Inapx}.
We show that one can transform such equations into an instance of $\ell_p$-{\sc Tree Cover with Depots} where every unsatisfied equation will correspond roughly to a tree of above average weight.
This allows us to recover approximately the maximum number of simultaneously satisfiable constraints of such systems of equations.

Formally, we reduce from $3R\{2,3\}L2$, a modification of  MAX-E3LIN2 where every variable occurs in exactly 3 equations, and for which hardness of approximation was shown by Karpinski et al. \cite{Karpinski2015ApxTSP}.
From an instance of {\sc 3R\{2,3\}L2} we construct an instance of $\ell_p$-{\sc Tree Cover with Depots} by introducing gadgets for the variables and clauses:
\begin{itemize}
  \item For every variable $x$, introduce three nodes $\hat{x}, x_0,$ and $x_1$, as well as edges ${\hat{x}, x_i}$ for $i=0,1$ with weight $3$.
  Add the $x_i$'s as depots.
  \item For every ternary clause $C$, introduce nodes $\hat{C}, C_{000}, C_{110}, C_{101},$ and $C_{011}$ with edges $\{\hat{C}, C_i\}$ of weight $3$.
    Add the $C_i$'s as depots.
  \item For every binary clause $C = x\oplus y = 0$, introduce nodes $\hat{C}, C_{00},$ and $C_{11}$ with edges $\{\hat{C}, C_i\}$ of weight $2$.
    Add the $C_i$'s as depots. Further add nodes $\hat{C_{00}}$ and $\hat{C_{11}}$ where $\hat{C_i}$ is connected only to $C_i$ by an edge of weight $1$.
  \item For every binary clause $C = x\oplus y = 1$, introduce nodes $\hat{C}, C_{01},$ and $C_{10}$ with edges $\{\hat{C}, C_i\}$ of weight $2$.
    Add the $C_i$'s as depots.
    Further add nodes $\hat{C_{01}}$ and $\hat{C_{10}}$ where $\hat{C_i}$ is connected only to $C_i$ by an edge of weight $1$.
\end{itemize}

We connect the gadgets together as follows:
\begin{itemize}
  \item For every clause $C = x\oplus y \oplus z = 0$ we connect $C_{b_1b_2b_3}$ to $x_{b_1}, y_{b_2},$ and $z_{b_3}$ with a path of length $2$ where both edges have weight $1$.
  \item For every clause $C = x\oplus y = b$ we connect $C_{b_1b_2}$ to $x_{b_1}, y_{b_2}$  with a path of length $2$ where both edges have weight $1$.
\end{itemize}

An illustration of this construction is given in \Cref{fig:ApxGadgets}. 
Intuitively there is a depot for every way a clause could be satisfied, and the depots for a clause have a joint neighbour that is expensive to connect. 
The depot absorbing this neighbour should correspond to the way in which the clause is satisfied.
Similarly there are two depots for each variable representing the two possible assignments to the variable.
In this case the depot that does not get assigned the joint neighbour corresponds to the chosen assignment.

There are then additional vertices between the clause and vertex depots which can be assigned to the clause depots, unless the adjacent clause depot corresponds to the satisfying assignment of that clause.
In that case they need to be assigned to an adjacent variable depot, which is to say the variables of the clause will have to be assigned in such a way that the clause is satisfied.

One can quickly check that a satisfying assignment to the clauses will allow us to compute a tree cover where every tree has size $3$.
Meanwhile every non-satisfied clause will push at least one unit of excess weight to a tree of size at least $3$. 
Quantifying this relationship then permits us to compute approximately the maximum number of satisfiable clauses in such a system of equations, see \Cref{sec:Hardness} for details.

\section{Computational Experiments}
\label{sec:experiments}
To illustrate the practical performance achievable by our clustering algorithms, we implemented
the algorithm from \Cref{sec:TCnodepots} for the setting without depots in {\sc C++} and tested it on instances proposed by Zheng et al. \cite{zheng2007Robot}.
Those instances model real-world terrain, which is to be partitioned evenly so that a fixed number robots can jointly traverse it.
They consist of a grid where either random cells are set to be obstacles, i.e., inaccessible, or a grid-like arrangement of rooms is superimposed with doors closed at random. 
A metric is induced on the accessible cells of the grid by setting the distance of neighbouring cells to be $1$.

For all instances on grids of size $200$ by $200$ (ca. 40,000 nodes), our implementation was able to compute a clustering in less than $200$ms on an Intel i5-10600K under Windows with $48$GB of available memory (although actual memory usage was negligible).
The resulting partitions are illustrated in \Cref{fig:Experiments}.
Note that we make two small heuristic changes to the original algorithm, which are, however, not amenable to be analyzed formally:

First, we adapt the partitioning of the large trees in \Cref{alg:item:Decomposition} to try to cut the trees into subtrees of weight at least $2R$ rather than $4R$.
Note that the choice of $4R$ captures a worst case scenario where the instance contains edges of size almost $R$ that are to be included in a solution.
If each of the heaviest edges is considerably lighter than $R$, the necessary cutoff approaches $2R$ rather than $4R$.
    Thus, we try to run the partitioning algorithm with $2R$ rather than $4R$, and resort to the higher cutoff only in the case that this fails.
    However, for the considered instances this was not necessary.
    
Second, we post-process the computed solution to ensure that it has exactly $k$ components.
  It is in principle possible that the algorithm computes a solution with fewer than $k$ trees; in this case, we iteratively select the largest tree and split it into two parts of as similar a size as possible until we obtain exactly $k$ components.
  
The clusterings obtained in this way in \Cref{fig:Experiments} are all at least $3$-approximately strongly optimal when compared to a hypothetical solution that distributes the total weight perfectly evenly on the $k$ clusters.

\begin{figure}[ht]
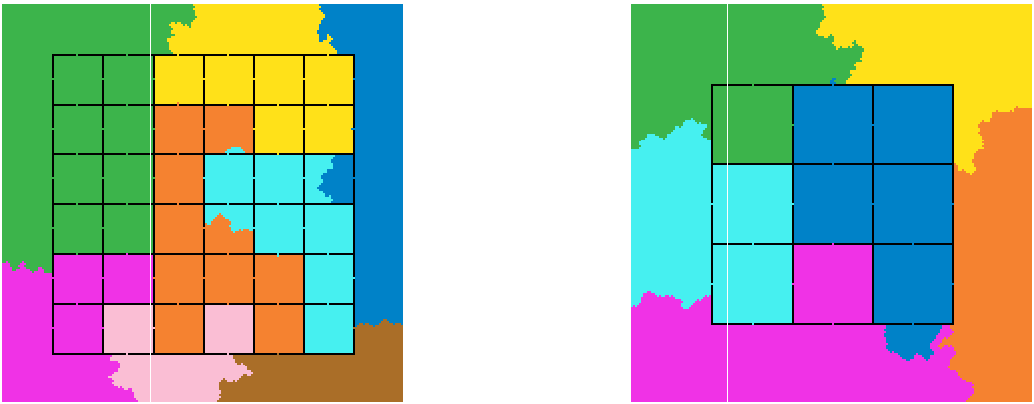

  \centering
  \begin{subfigure}[b]{0.48\textwidth}
    \centering
    \includegraphics[scale=0.022]{WallGrid200_p20_k8.pdf}
    \caption{Output of the algorithm on a $200\times 200$ grid with walls partitioned into $8$ clusters. The cluster sizes are $2433,5516,9528,4550,5271,3985,2482$ and $4240$; consequently the solution is at least $2.01$-approximately strongly optimal.}
  \end{subfigure}
  \hfill
  \begin{subfigure}[b]{0.48\textwidth}
    \centering
    \includegraphics[scale=0.022]{WallGrid200_p40.pdf}
    \caption{Output of the algorithm on a $200\times 200$ grid with walls partitioned into $6$ clusters. The cluster sizes are $7993, 4774, 6390, 8004, 5267,$ and $6638 $; consequently, the solution is at least $1.61$-approximately strongly optimal.}
  \end{subfigure}
  \caption{
  Visualisations of the results of our implementation of the non-depot tree cover algorithm. 
  Inaccessible sections of the grid are marked in black; other colors represent the computed clusters.\label{fig:Experiments}}
\end{figure}

\clearpage
\pagebreak
\nocite{DBLP:conf/iros/ZhengJKK05}
\bibliography{Bibliography}

\clearpage
\pagebreak

\appendix

\section{All-Norm Tree Cover Problem with Depots}
\label{sec:TCDepots}
We now introduce depots to our problem and try to maintain an all-norm approximation. 
In fact, we will also here show $c$-approximate strong optimality of a solution, so this result too can be extended to all monotone symmetric norms.
The basic idea is quite similar to the non-depot case; We will start by partitioning the metric space into trees of roughly equal weight $R$, or, if that is not possible, some smaller trees that are however at distance $R$ to all other trees.

However, since now we need to connect to the depots this is no longer so straightforward. 
Nodes which are far from all depots will inherently cause a high cost, no matter where we assign them.
To then keep the balance between tree weight and distance between trees, we need to categorize the nodes according to their distance to the depots.
Specifically, we partition~$V$ into \emph{layers} $L_0,L_1,\dots$, where layer $L_i$ contains all $v\in V$ for which $d(v,D) \in [2^i,2^{i+1})$. 
We then partition each layer individually with the algorithm of Even et al. \cite{even2004min}, but using increasing values of $R_i = 2^i$ in layer $L_i$.

To simplify things we adapt the algorithm slightly. Layer $i$ will be partitioned according to the following scheme:
\begin{enumerate}
  \item Set $R_i = 2^i$.
  \item Remove all edges with weight greater than $R_i$ and compute a minimum spanning tree $T_C$ for each connected component $C$ of the resulting graph.
  \item Call a spanning tree $T_C$ \emph{small} if $w(T_C) < 2R_i$.
  \item Call a spanning tree $T_C$ \emph{large} if $w(T_C) \geq 2R_i$.
    Decompose any such tree into edge-disjoint subtrees $\Tilde{T}_j$ such that $2R_i \leq w(\Tilde{T}_j) \leq 6R_i $. \label[step]{alg:item:DepotDecomposition}
  \item Output the collection of all small spanning trees, as well as all subtrees $\Tilde{T}_j$ created in \Cref{alg:item:DepotDecomposition}.
\end{enumerate}
Here the decomposition in \Cref{alg:item:DepotDecomposition} avoids the creation of small trees from large component by adding such a small tree to a neighbouring large tree if it is necessary.
This worsens the constant we can get, but it simplifies analysis considerably.

By running this decomposition for each layer separately, we obtain a tree cover $\tset_i$ for each layer $i$.
However, notice that one critical property of the no-depot setting is missing here: small component trees may not be far from all other nodes, they are only far from nodes in their layer.
We sidestep this challenge by separating the instance into two parts, the layers of even and odd indices.
For $v\in L_i$ and $w \in L_{i+2}$ we are guaranteed that $d(v,w) \geq d(w,D) - d(v,D) \geq 2^{i+2} - 2\cdot 2^i = 2^{i+1}$ from the triangle inequality.

We thus treat the instance induced by the odd-layer nodes, and the instance induced by the even-layer nodes, as two separate instances. 
This incurs a factor of $2$ in the approximation factor since we now combine two unrelated instances, and an additional factor $2$ because we lose some points of the metric space so we are constricted to finding spanning trees on the restricted space rather than Steiner trees in the whole space, see \Cref{lem:steinerMST}.
So for all of the following we assume that all nodes are in even layers, where the analysis for all nodes being in even layers is fully analogue (see \Cref{fig:AssignmentAlgorithmPartition} for illustration).

\begin{figure}[ht]
  \centering
  \begin{subfigure}[b]{\textwidth}
    \centering
    \includegraphics[scale=0.5,page=1]{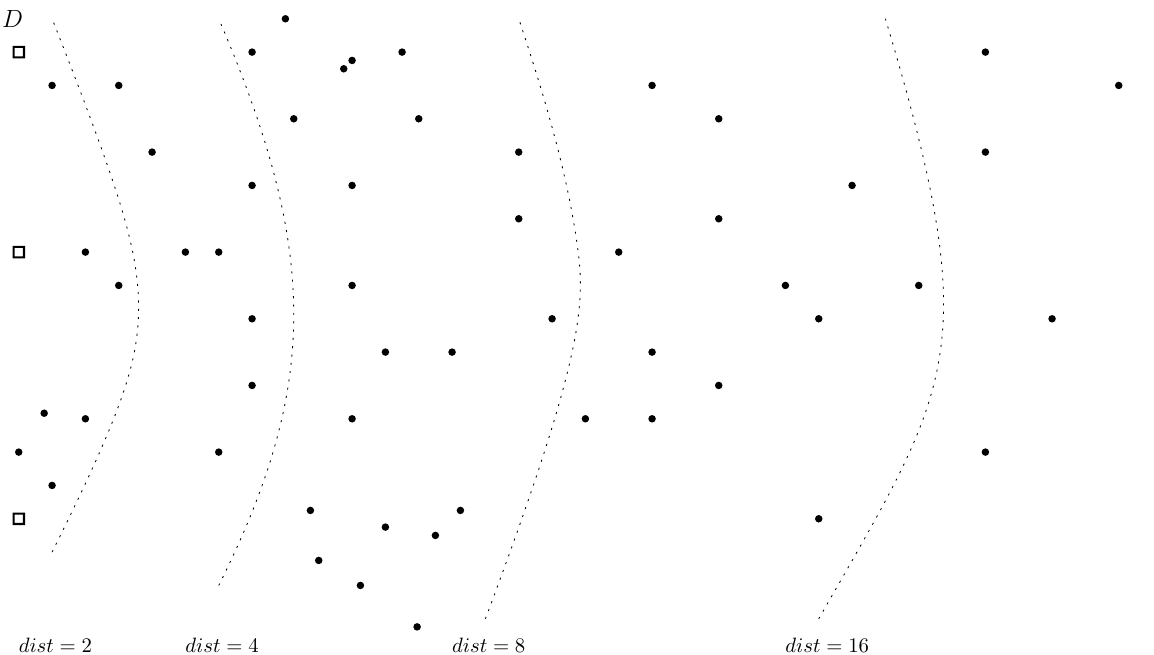}
    \caption{Initial instance, with the depots indicated as boxes and the nodes as dots. Nodes are precategorised by their distance to the depots.\label{fig:AssignmentaAlg1}}
  \end{subfigure}
  \\
  \begin{subfigure}[b]{\textwidth}
    \centering
    \includegraphics[scale=0.5,page=2]{Assignmentillustration.pdf}
    \caption{In the first step we discard every second layer and cluster each remaining layer independently. Connected components after removing edges of length $>R_i$ are indicated by dashed lines, the trees extracted within them by solid lines .\label{fig:AssignmentAlg2}}
  \end{subfigure}
  \caption{Illustration of the initial partitioning step of our algorithm.\label{fig:AssignmentAlgorithmPartition}}
\end{figure}

For this partitioning of the space into the trees from $\{\tset_i\}$ we can show that any depot-based tree cover $\{\that_o\}_{o\in D}$ can be restructured as $\{\that_o'\}_{o\in D}$ so as to agree with the partitions, i.e., such that for any $i$ and any $T\in \tset_i$ there is a $o\in D$ for which $V(T) \subseteq V(\that_o')$ and $w(\that_o) \leq c\cdot w(\that_o')$.

\begin{lemma}
\label{lem:alignment}
  Let $\{\that_o\}_{o\in D}$ be any tree cover with depots and $\{\tset_i\}_i$ the trees computed in layer $i$ by the partitioning algorithm. 
  There exists a tree cover with depots $\{\that_o'\}_{o\in D}$ such that for any $i$ and any $T\in \tset_i$ there is a $o\in D$ for which $V(T) \subseteq V(\that_o')$ and $w(\that_o) \leq 48\cdot w(\that_o')$.
\end{lemma}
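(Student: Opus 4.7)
The plan is to construct $\{\that_o'\}_{o\in D}$ by snapping each partition tree $T$ to a single depot that already serves some node of $V(T)$ in the original cover, and then controlling the cost blow-up per depot.

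First, I choose representatives: for each $T \in \bigcup_i \tset_i$, pick a node $v_T \in V(T)$ so that the assignment $T \mapsto v_T$ is injective. Small-component trees within a single layer are pairwise node-disjoint, and the edge-disjoint subtrees of a large component share only branching nodes, which can be assigned to a unique subtree by a deterministic tiebreaking rule; hence an injective choice exists. Next, set $o(T)$ to be any depot whose tree $\that_{o(T)}$ contains $v_T$; such a depot exists because $\{\that_o\}_{o \in D}$ covers $V$.

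Now define $\that_o'$ as a spanning tree of the subgraph obtained by taking the union of $\that_o$ with a spanning tree of $V(T)$ for every $T$ with $o(T) = o$, where each such $T$ attaches to $\that_o$ at the shared node $v_T$. By construction $V(T) \subseteq V(\that_{o(T)}')$ and
\[
w(\that_o') \;\leq\; w(\that_o) + \sum_{T : o(T) = o} w(T).
\]
Since $w(T) \leq 6 R_i = 6 \cdot 2^i$ for every $T \in \tset_i$, it suffices to bound $\sum_i k_i^o \cdot R_i$ by a constant times $w(\that_o)$, where $k_i^o$ denotes the number of layer-$i$ trees assigned to $o$. I would route this through the Steiner/MST gap: the representatives $\{v_T : o(T) = o\}$ all lie in $V(\that_o)$, so the subtree of $\that_o$ spanning $\{o\}$ together with these representatives is a Steiner tree in $(V,d)$, and by \Cref{lem:steinerMST} the MST of $\{o\} \cup \{v_T : o(T) = o\}$ has weight at most $2\,w(\that_o)$. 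This MST is bounded below using the layer geometry: since the lemma is applied in one parity class of layers, consecutive populated layers differ in distance by a factor of four, so representatives from different layers are automatically well-separated; representatives from distinct small components of the same layer are at distance $> R_i$; and representatives from distinct subtrees of the same large component contribute via weight-$\Omega(R_i)$ blocks of the component's MST.

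The hard part will be controlling representatives drawn from a single large component's decomposition, because neighboring subtrees there can share boundary nodes and make pairwise representative distances arbitrarily small in the ambient metric. Instead of relying on pairwise distances, the charging must exploit that each such subtree has MST weight $\geq 2R_i$ and that the subtrees are edge-disjoint, so the portion of $\that_o$ that reaches the component must contain disjoint weight-$\Omega(R_i)$ blocks, one per assigned subtree of that component. Combining the multiplicative overheads---a factor $6$ from the per-tree weight bound $w(T)\leq 6R_i$, a factor $2$ from the Steiner-to-MST conversion of \Cref{lem:steinerMST}, and a factor $4$ from summing $R_i$'s across a single parity of layers as a geometric series with ratio $4$---then yields the final factor~$48$.
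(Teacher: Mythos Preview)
Your approach has a genuine gap in the handling of large components. The claim that ``the portion of $\that_o$ that reaches the component must contain disjoint weight-$\Omega(R_i)$ blocks, one per assigned subtree'' is false for a representative-based assignment. Consider a large component $C$ in layer $j$ which is a star: a center $c$ with $m$ arms, each arm a path of many short edges totalling weight $2R_j$, so that each arm is one subtree $F_\ell$ of the decomposition and the $F_\ell$ share only $c$. Now let $\that_o$ consist of the depot $o$, the center $c$, and on each arm the node $a_\ell$ closest to $c$; the edges $\{c,a_\ell\}$ can be made arbitrarily short. Then $\that_o$ meets every $F_\ell$ in a distinct node (namely $a_\ell$, plus $c$ for one of them), so injective representatives $v_{F_\ell}=a_\ell$ are available and all $m$ subtrees get assigned to $o$. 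Yet $w(\that_o)$ is only $d(o,c)+\sum_\ell d(c,a_\ell)$, which can be held near $R_j$ while $\sum_\ell w(F_\ell)=2mR_j$. The blow-up is unbounded in $m$. The edge-disjointness of the $F_\ell$ inside $C$ says nothing about $\that_o$, which is a tree in the ambient metric and need not use any edges of the component's MST.

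What you are missing is precisely what the paper supplies: instead of assigning each $F_\ell$ to a depot via a fixed representative, the paper sets up a bipartite system where $F_\ell$ may go to \emph{any} $\that_o$ it intersects, equips each $\that_o$ with a capacity $c(\ttil_o)=\lfloor \lambda\, d(X_o)/2^j\rfloor$ where $X_o$ is the set of edges of $\that_o$ incident to $C$, and then verifies Hall's criterion to obtain an assignment in which no $\that_o$ absorbs more subtrees than its incident weight can pay for. In the star example above, $\that_o$ gets capacity $O(1)$ and the remaining subtrees are forced onto the other depot trees that cover the long parts of the arms---which necessarily have large weight. Your argument would become correct if you replaced the representative step by this Hall-type matching; as written, the crucial distribution step is asserted rather than proved.
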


\begin{proof}
  We will assign every $T_i$ to some $\that_o$ in a way that does not increase the weight of any tree too much.
  Consider first a $T_i$ generated from some small component $C$ in layer $L_j$, and take~$\that_o$ to be some tree containing a node from $T_i$. 
  Then $w(T_i) \leq 2\cdot 2^j$, and $\that_d$ contains an edge entering~$C$.
  All those edges have weight at least $2^j$ if they are contained in layer $j$ or coming from layer $j+2$, and weight at least $2^{j-1}$ if they are coming from layer $j-2$. 
  Thus we can assign~$T_i$ to $\that_d$ and charge the extra cost against that edge, loosing a factor $8$ at most.
	
  For the large components, we need a more careful argument. 
  Let $C$ be such a component, let $\{F_i\}_i$ the collection of subtrees cut from~$C$, and let $\{\ttil_o\}_o$ be the set of trees $\that_o$ containing a node from $C$.
  For each $\ttil_o$ we look at its edges $X_o := \{e\in E(\ttil_o \mid e\cap C \not = \emptyset\}$ which are incident to $C$. 
  Then we define a capacity of such a tree as $c(\ttil_o) = \lfloor\lambda d(X_o) / 2^j \rfloor$, where we will choose $\lambda$ in a moment.
  This capacity represents how many of the $F_i$ can be absorbed by $\ttil_o$.
  Then for any constant choice of $\lambda$ assigning at most $c(\ttil_o)$ of the $F_i$ to $\ttil_o$ will only increase the weight of $\ttil_o$ by a constant factor $6\cdot \lambda$, charged against $X_o$.
	
  To show that such an assignment exists, we employ Hall's Theorem. 
  We allow any $F_i$ to be assigned to a $\ttil_o$ with which it shares a node, so we construct the bipartite graph of this incidence system and verify Hall's criterion.
  Let $\{F_i\}_{i\in A}$ be some subset of the $F_i$, and $\{\ttil_o\}_{o\in B}$ the set of $\ttil_o$ incident to them.
  We must then show that $\sum_{o\in B} c(\ttil_o) \geq |A|$.
  To do this define $V_A$ as $\bigcup_{i\in A}V(F_i)$, and consider the number $k$ of connected components we obtain after removing from $\{\ttil_o\}_{o\in B}$ all edges entering $C$, and counting only those components intersecting $V_A$.
  Then for each such connected component, $\{\ttil_o\}_{o\in B}$ contains a Steiner Tree within $C$.
	
  Again here we exploit \Cref{lem:steinerMST}, as well as the fact that for each of the $k$ components there is an edge of cost at least $2^{j-1}$ entering $C$. 
  This allows us to conclude that
  \begin{equation*}
    \sum_{o\in B} d( X_o) \geq k2^{j-1} + \frac{1}{2}\max\{0,\sum_{i\in A}w(F_i)-k2^j\},
  \end{equation*}
  where we also note that if $k \geq |A|$ we could get an assignment with $\lambda = 2$. 
  So we have $k < |A|$, and $w(F_i) \geq 2\cdot 2^j$, and thus obtain that
  \begin{equation*}
    \sum_{o\in B} d( X_o) \geq k2^{j-1} + |A|2^{j-1} = (|A|+k)2^{j-1} \enspace .
  \end{equation*}
  This gives Hall's criterion for $\lambda = 4$, concluding our argument.
	
  Therefore, we can match any tree $T$ in the $\{\tset_i\}_i$ to some $\that_o$ while only increasing the cost by a factor at most $24$, charged against some edges of the $\that_o$ incident to $T$. 
  Since every edge gets charged at most twice by this matching, the total increase in cost for each tree is bounded by a factor of $48$.
\end{proof}

Notice that \Cref{lem:alignment} allows us to assume that tree covers are aligned with the partition we compute while only increasing the weight of each tree by a constant factor. 
This in particular means that we obtain a constant-factor approximation for {\sc Tree Cover with Depots} for any norm for which we can obtain a constant-factor approximation after assuming such alignment.

Based on the guarantee of \Cref{lem:alignment} we are now ready to develop an algorithm for assigning the trees in the $\tset_i$ to depots.
To do so we will need some additional notation.
We partition every~$\tset_i$ into sets $\{\tset_{i,o}\}_{o \in D}$ by assigning every $T \in \tset_i$ to a $\tset_{i,o}$ such that $d(T,o)$ is minimal over the choice of $o$.
There is another simplification we can now make. 
For every tree $T$ in some $\tset_{i,o}$ we can add an edge $\{v,o\}$ to $T$ such that $v\in V(T)$ and $d(v,o) = d(T,o)$.
By similar arguments as before this increases the cost of our solutions by at most a factor of $4$ since that additional edge has cost bounded by either~$w(T)$, or the cost of any edge in $\delta(V(T))$.
This causes all trees in~$T_{i,o}$ to have pairwise distance $0$, allowing us to recombine them without additional costs.
Further, it also ensures that \emph{all} trees in $\tset_{i,o}$ have the same cost, up to constant. 
Specifically, any such tree has cost at least $2^i$ from the edge to $o$ and at most $8\cdot 2^i$ if it was a large tree of weight $6\cdot 2^i$ and the edge to $o$ had cost $2^{i+1}$.
Giving up on those constants in the approximation factor, we shall therefore also assume that they all have weight exactly $2^i$.

To recap, after reductions we have a partitioning of the nodes into trees, each of which belongs to some collection $\tset_{i,o}$ where any tree in $\tset_{i,o}$ has weight $2^i$ and is at distance $0$ to~$o$, but possibly at a much larger distance to the other depots.
We further assume that any solution treats these trees as immutable, so in a fixed solution $\{\that_o\}_{o\in D}$ there is for every $T\in \bigcup_{i,o}\tset_{i,o}$ a~$\that_o$ containing~$T$.

We will now try to iteratively match the trees in $\tset_i$ to depots.
To do, so we set a current weight $R_i$, initially $1$.
We then collect all trees of weight $\Theta(R_i)$, and compute a matching between those trees and the depots at distance $\Theta(R_i)$ to them.

Any leftover trees after this step are paired up and treated as a new tree of weight $2R_i$.
We then update $R_{i+1} = 2R_i$ and repeat the process. 
There are some additional details to take care of, but in substance this will be enough to obtain a relatively even distribution of the trees to the depots.
For the full description, we refer to \Cref{alg:assignment}.
For an illustration, see \Cref{fig:AssignmentAlgorithmAssignment}.

\begin{algorithm}[!ht]
  \KwIn{A metric $(V, d)$, depots $D$, trees $\{\mathcal{T}_{i, o}\}_{i,o}$ from the partitioning algorithm. }
  \KwOut{ An assignment of trees to depots}.
	 $i \gets 0,R_i \gets 1$ \;
	 $\mathcal{F}_{i,o} \gets \mathcal{T}_{i,o}$\;
	\While{$\{\mathcal{F}_o\}_{o} \neq \emptyset$ for any $o\in D$}{
		Construct a bipartite matching graph $H_i$ between the trees in $\mathcal F_{i,o}$ and the depots\;
		$V(H_i) := \bigcup_{o\in D}\mathcal{F}_o \dot \cup D$\;
		$E(H_i) := \{\{F, o\} \mid d(F,o) \leq R_i\}$\;
		Compute a maximal matching $M_i$ in $H_i$ and assign every matched tree to the depot it was matched to\;
		Remove all matched trees from the $\mathcal F_{i,o}$\;
		For every $\mathcal F_{i,o}$ with odd cardinality, assign one tree from $\mathcal{F}_{i,o}$ to $o$\;
		For every non-empty $\mathcal{F}_{i,o}$ pair up the trees in $\mathcal{F}_{i,o}$; for every pair $F_1,F_2$ set $\mathcal{F}_{i,o} := \mathcal{F}_{i,o}\setminus \{F_1,F_2\}\cup \{F_1\cup F_2\}$\;
		$\mathcal{F}_{i+1,o} := \mathcal{F}_{i,o} \cup \tset_{i+1,o}$\;
		$R_{i+1} \gets 2R_i, i\gets i+1$\;
	}
  \caption{Assignment algorithm}
\label{alg:assignment}
\end{algorithm}

Let us make some remarks on the algorithm to simplify its analysis:
\begin{itemize}
  \item In line 9, the possible additional assignment of a tree from $F_o$ happens only when $o$ was already assigned another tree in line 7. 
  Thus, the additional cost due to this step is an increase of the tree weights by at most a factor $2$.
  Loosing this factor we may thus assume that line 7 is never applied.
  \item The recombination of trees in line 10 is free due to the inclusion of edges to $o$ for every tree from $\tset_{i,o}$. 
  \item The load of any depot at the termination of the algorithm depends only on the last iteration~$i$ in which it was assigned some tree, i.e., the load will be at most $2\cdot 2^i$ by a geometric sum argument.
\end{itemize}

\begin{figure}[!ht]
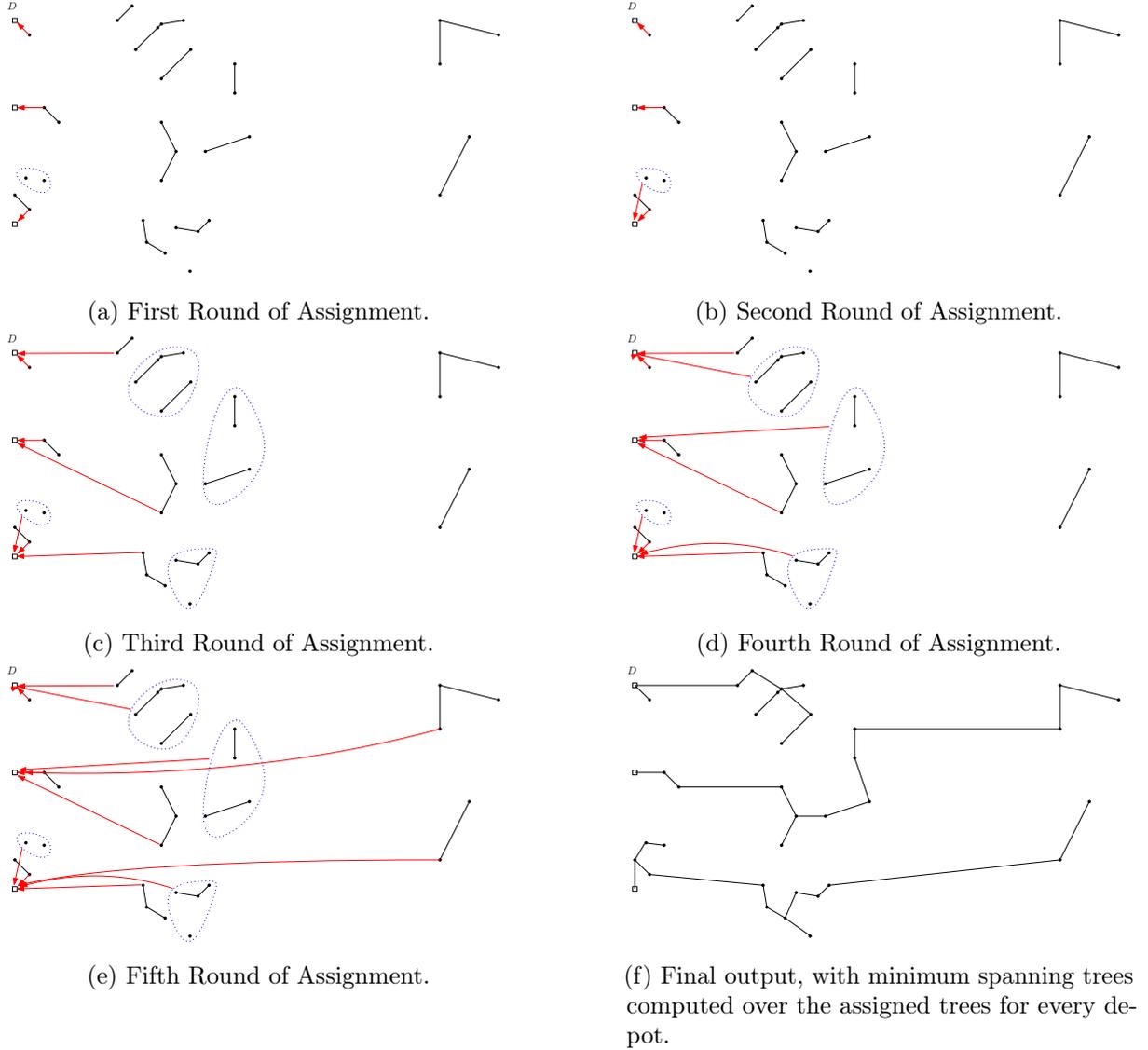

  \centering
  \begin{subfigure}[b]{0.45\textwidth}
    \centering
    \includegraphics[width=\textwidth,page=3]{Assignmentillustration.pdf}
    \caption{First Round of Assignment.\label{fig:AssignmentaAlg3}}
  \end{subfigure}
  \hfill
  \begin{subfigure}[b]{0.45\textwidth}
    \centering
    \includegraphics[width=\textwidth,page=4]{Assignmentillustration.pdf}
    \caption{Second Round of Assignment.\label{fig:AssignmentAlg4}}
  \end{subfigure}\\
  \begin{subfigure}[b]{0.45\textwidth}
    \centering
    \includegraphics[width=\textwidth,page=5]{Assignmentillustration.pdf}
    \caption{Third Round of Assignment.\label{fig:AssignmentaAlg5}}
  \end{subfigure}
  \hfill
  \begin{subfigure}[b]{0.45\textwidth}
    \centering
    \includegraphics[width=\textwidth,page=6]{Assignmentillustration.pdf}
    \caption{Fourth Round of Assignment.\label{fig:AssignmentAlg6}}
  \end{subfigure}\\
  \subcaptionbox{Fifth Round of Assignment.\label{fig:AssignmentaAlg7}}{\includegraphics[width=0.45\textwidth,page=7]{Assignmentillustration.pdf}}
  \hfill
  \subcaptionbox{Final output, with minimum spanning trees computed over the assigned trees for every depot.\label{fig:AssignmentAlg8}}{\includegraphics[width=0.45\textwidth,page=8]{Assignmentillustration.pdf}}
  \caption{Illustration of the iterative assignment of trees to depots in the algorithm. Red arrows indicate when trees are assigned to depots; Blue dotted potatoes indicate merged trees.\label{fig:AssignmentAlgorithmAssignment}}
\end{figure}

To compare the output of the algorithm to any other depot-based tree cover $\{\that_o\}_{o\in D}$ we will employ the following scheme:
For each weight $R_i$ we will cut up the trees $\that_o$ into parts of weight $R_i$. 
If enough such parts exist we will then be able to assign every tree of weight~$R_i$ from the algorithms' solution to one such part, thus successfully demonstrating that the algorithms solution provides a ``good'' tree cover.
So fix an arbitrary depot-based tree cover $\hat{\tset}:=\{\that_o\}_{o\in D}$ and denote by $\Lambda_i$ the number of such parts we could get when extracting fragments of weight $R_i$, i.e., $\Lambda_i := \sum_{o\in D} \lfloor \that_o/R_i \rfloor$.
The critical lemma we need then is this:
\begin{lemma}
\label{lem:matching}
  It holds that
  \begin{equation*}
    \Lambda_i \geq \frac{1}{2} \sum_{j={i+1}}^\infty|M_j|2^{j-i},
  \end{equation*}
  where $M_j$ are the matchings computed by the algorithm.
\end{lemma}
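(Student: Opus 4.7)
The plan is to prove this by constructing a Hall-type assignment from the matched trees of the algorithm (restricted to rounds $j > i$) to fragments in the optimal trees $\{\that_o\}$. First, I would apply \Cref{lem:alignment} to assume without loss of generality that $\hat{\mathcal{T}}$ is aligned with the base partitioning $\{\tset_{k,o}\}_{k,o}$, so that each base tree is contained in a single $\that_o$. The constant factor incurred by this alignment is absorbed into the overall approximation factor of the algorithm and does not affect the statement of the lemma itself. A useful bookkeeping identity after alignment is that every base tree is matched in exactly one round of the algorithm, so the total base weight matched in rounds $>i$ equals $\sum_{j>i}|M_j|R_j$ and must be contained in $\sum_o w_o$.

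Next, I would set up a bipartite demand-supply graph. Each matched tree $F \in M_j$ with $j > i$ is a left vertex with demand $2^{j-i-1}$ (half the natural scale ratio $R_j/R_i$), and each opt tree $\that_o$ is a right vertex with supply $\lfloor w_o/R_i\rfloor$. I would place an edge between $F$ and $\that_o$ precisely when $\that_o$ contains at least one base tree of $F$. If this bipartite graph admits an integral flow meeting all demands within the supplies, then summing yields $\Lambda_i = \sum_o \lfloor w_o/R_i\rfloor \geq \sum_{j>i}|M_j|\cdot 2^{j-i-1} = \tfrac{1}{2}\sum_{j>i}|M_j|\cdot 2^{j-i}$, which is exactly the claim.

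Feasibility of the flow would follow from a Hall-type verification: for any subset $\mathcal{S}$ of matched trees with neighborhood $N(\mathcal{S})$, the aligned base trees of $\mathcal{S}$ live entirely in $\bigcup_{o\in N(\mathcal{S})}\that_o$, so the combined opt-weight of the neighborhood satisfies $\sum_{o\in N(\mathcal{S})}w_o \geq \sum_{F\in\mathcal{S}}R_{j(F)}$. Turning this weight lower bound into a fragment-count lower bound requires handling the floor loss $\sum_{o\in N(\mathcal{S})}(w_o/R_i - \lfloor w_o/R_i\rfloor) < |N(\mathcal{S})|$, and this is exactly where the factor $\tfrac{1}{2}$ in the lemma statement enters: it absorbs the slack accumulated from rounding down in every opt tree touched by $\mathcal{S}$.

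The main obstacle will be controlling $|N(\mathcal{S})|$ relative to the demand $\sum_{F\in\mathcal{S}}2^{j(F)-i-1}$, since in principle opt could spread each matched tree's base weight across many small trees and thereby inflate the floor loss. To resolve this I would exploit the algorithm's pairing/merging structure together with the layered construction of the base trees: base trees from layers $k\geq i$ have weight $\geq R_i$, so any opt tree containing one is automatically large enough to contribute at least one fragment without floor loss, while contributions from lower layers are geometrically dominated by the layer in which the matched tree $F$ was formed. Making this amortization precise---so that the factor $\tfrac{1}{2}$ exactly covers both the floor loss and the lower-layer contributions---is the technical heart of the proof; the remainder of the argument is a careful bookkeeping step that combines the aligned weight bound with the observation that each depot is matched by the algorithm in at most one round, so $\sum_{j\leq i}|M_j|$ (and the associated base weight) cannot dominate the right-hand side.
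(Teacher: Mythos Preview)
Your high-level framing---reduce to an aligned instance and then verify a Hall-type condition on a demand-supply bipartite graph---is reasonable, and the bookkeeping identity $\sum_{j>i}|M_j|R_j = $ (total base weight surviving past round $i$) is correct. However, the step you flag as ``the main obstacle'' is a genuine gap, and the resolution you sketch does not close it.

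Concretely: to turn the weight bound $\sum_{o\in N(\mathcal S)}w_o\ge\sum_{F\in\mathcal S}R_{j(F)}$ into a fragment bound you need $|N(\mathcal S)|\le\sum_{F\in\mathcal S}2^{j(F)-i-1}$. Your proposed fix---that base trees from layers $\ge i$ contribute a full fragment and that lower layers are ``geometrically dominated''---fails in the case where a tree $F\in M_j$ is built entirely from many tiny base trees in low layers and the optimal solution scatters those base trees across many distinct depots $o'$, each with $w_{o'}<R_i$. Then every term $\lfloor w_{o'}/R_i\rfloor$ vanishes while $|N(\mathcal S)|$ is large, and nothing in the layered structure or the pairing mechanism alone prevents this. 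The observation that each depot is matched by the \emph{algorithm} in at most one round constrains the algorithm's side of the bipartition, not the optimal solution's.

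The paper's proof closes this gap with a different mechanism that you do not invoke: the \emph{maximality} of the matching $M_i$. Using alternating paths from the depots $D_i$ whose $\mathcal F_{i,o}$ is nonempty, the paper identifies a saturated set $\hat D$ of depots and shows that every tree surviving past round $i$ is at distance $>R_i$ from every depot outside $\hat D$. Consequently, if opt routes such a tree to a depot $o'\notin\hat D$, the tree $\that_{o'}$ must pay an extra connection cost of at least $R_i$; this distance penalty, not the layer geometry, is what compensates for the floor loss. The factor $\tfrac12$ in the lemma then arises from a separate double-counting issue between this argument and the direct contribution of large base trees, rather than purely from rounding slack as you suggest. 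Without the alternating-path/distance-penalty ingredient, your Hall verification cannot be completed.
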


\begin{proof}
  We split the analysis into two parts, depending on whether the weight $ \sum_{j={i+1}}^\infty|M_j|2^{j-i}$ is largely caused by trees of weight at least $R_i$ being assigned, or by many trees of weight strictly smaller than $R_i$. 
  This will ultimately cause us to lose the additional factor $\frac{1}{2}$.
	
  First, consider some tree $T$ of weight $R_j$, for $j \geq i$.
  We assume that such a tree is entirely assigned to some $\that_o$ where it contributes $2^{j-i}$ to $\Lambda_i$. 
  This also remains the case if the tree has been recombined and matched at some iteration later than $j$.
  Thus, if we only consider such large trees we do in fact obtain $\Lambda_i \geq \sum_{j={i+1}}^\infty|M_j|2^{j-i}$.
  So we can assume that such trees do not exists and focus instead on the trees which are smaller than $R_i$ but are not yet assigned in iteration $i$.
	
  For these small trees the lemma promises some concentration property. 
  In principle, they are small enough to not contribute to $\Lambda_i$.
  However, we will now show that there are too many such trees, or rather too few depots to accept them.
  Thus, some depot must either contain many of them, or they must be assigned to depots further away. 
  Either way they make the necessary contribution to $\Lambda_i$.
	
  Let $D_i$ be the set of depots $o$ for which $\fset_{i,o}$ was not empty at the end of iteration $i$.
  In particular these depots have been assigned some tree during the iteration.
  Now in $H_i$ let $\hat{D}$ be the set of depots reachable from $D_i$ using an alternating path with respect to $M_i$ (i.e., alternatingly using an edge from~$M_i$ and not from $M_i$).
  By maximality of $M_i$, all depots in~$\hat{D}$ were assigned some tree during this iteration; further, we know that for any tree $T$ in~$\fset_{i,o}$ that is at distance at most $2^i$ to a depot of~$\hat{D}$ we have $d(T,D\setminus \hat{D}) > R_i$.
  Otherwise, we could find an alternating path to some empty depot and augment the matching as before.
	
  So consider $\tset$ to be the set of all trees which are unassigned after iteration $i$ and have weight at least $R_i$, as well as the set of trees assigned to some depot in $\hat{D}$ during iteration $i$.
  Now take $\{\that_o\}_{o\in X}$ to be the set of trees from $\hat{\tset}$ containing any tree from $\tset$.
  We observe that
  \begin{equation*}
    \sum_{o\in A}w(\that_o) \geq \sum_{T\in \tset}w(T) + R_i \cdot |A\setminus \hat{D}|,
  \end{equation*}
  where the additional term  $R_i \cdot |A\setminus \hat{D}|$ originates from the fact that any depot outside of~$\hat{D}$ has distance at least $R_i$ to the trees in $\tset$.
  Now consider that $\sum_{T\in \tset}w(T) = |\hat{D}|R_i + \sum_{j={i+1}}^\infty|M_j|2^{j-i}$ (again, we assumed that no trees of weight larger than $R_i$ are present in the original partition).
  We thus obtain that
  \begin{equation*}
    \sum_{o\in A}\lfloor w(\that_o)/2^i\rfloor \geq \paren[\big]{|\hat{D}|R_i + \sum_{j={i+1}}^\infty|M_j|2^{j-i} +  R_i \cdot |A\setminus \hat{D}| }/ R_i - |A| = \sum_{j={i+1}}^\infty|M_j|2^{j-i}
  \end{equation*}
	
  Notice that this argument induces some double counting on the trees of weight strictly larger than $R_i$, since they are both counted for their weight but also potentially in the distances between small trees and far-away depots.
  Thus, we need to give up a factor $2$ in the $\Lambda_i$ to obtain the final statement of the lemma.
\end{proof}

This lemma now guarantees the existence of sufficiently many trees of some weight $R_i$ in~$\hat{\tset}$ to account for the trees of weight $R_i$ in the algorithms solution. 
The only issue here is the factor of $\frac{1}{2}$ in the lemma statement.
However, notice that by definition, $\Lambda_{i-1}\geq 2\Lambda_{i}$, so we shall work instead with a reformulation of \Cref{lem:matching} as:
\begin{equation*}
  \Lambda_{i-1} \geq \sum_{j={i+1}}^\infty|M_j|2^{j-i} \enspace .
\end{equation*}

Again, we will show approximate strong optimality for the algorithms solution, so we obtain a constant factor approximation for all monotone symmetric norms.
\begin{lemma}
\label{lem:strongOptimalityDepots}
  For every $A\subseteq D$ we can find set $B\subseteq D$ with $|B|\leq |A|$ such that
  \begin{equation*}
    \sum_{o\in A} w(\tset_o) \leq 4 \sum_{o\in B}w(\that_o),
  \end{equation*}
  where $\{\tset_o\}_{o\in A}$ is the set of trees computed by the assignment algorithm for the depots in $A$.
\end{lemma}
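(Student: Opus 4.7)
For each depot $o\in A$, let $i(o)$ be the last iteration in which the assignment algorithm matched $o$ to a tree; by the geometric-sum remark following \Cref{alg:assignment} the load satisfies $w(\tset_o)\leq 2\cdot 2^{i(o)}$. Setting $a_j := |\{o\in A : i(o)=j\}|$ we have $|A|=\sum_j a_j$ and $\sum_{o\in A}w(\tset_o) \leq 2\sum_j a_j 2^j$, so the lemma will follow if we exhibit $B\subseteq D$ with $|B|\leq |A|$ and $\sum_{o\in B}w(\that_o) \geq c\sum_j a_j 2^j$ for a sufficient constant $c$.

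The plan is to apply \Cref{lem:matching} scale by scale. Since $a_j \leq |M_j|$ by construction, its reformulation together with the trivial bound $R_i\lfloor w(\that_o)/R_i\rfloor \leq w(\that_o)$ yields, for every $i\geq 0$,
\begin{equation*}
\sum_{o\in C_i}w(\that_o) \;\geq\; R_i \Lambda_i \;\geq\; \tfrac{1}{2}\sum_{j\geq i+1}a_j\, 2^j,
\end{equation*}
where $C_i := \{o\in D : w(\that_o)\geq R_i\}$ is the set of ``heavy'' depots at scale $R_i$. Let $i^\star := \min\{i\geq 0 : |C_i|\leq |A|\}$; then by the minimality of $i^\star$ (when $i^\star\geq 1$) we have $|C_{i^\star-1}|>|A|$, so the $|A|$-th heaviest depot of $\hat{\tset}$ carries weight at least $R_{i^\star-1}$. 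I take $B$ to be those $|A|$ heaviest depots: this gives $C_{i^\star}\subseteq B$, $|B|=|A|$, and each of the remaining $|A|-|C_{i^\star}|$ depots in $B\setminus C_{i^\star}$ contributes at least $R_{i^\star-1}$ to $\sum_{o\in B}w(\that_o)$.

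The final step combines a ``heavy'' contribution $\sum_{o\in C_{i^\star}}w(\that_o)\geq \tfrac{1}{2}\sum_{j\geq i^\star+1}a_j 2^j$ with the ``medium'' contribution $(|A|-|C_{i^\star}|)R_{i^\star-1}$, and uses the elementary bound $|A|\,R_{i^\star-1}\geq \sum_{j\leq i^\star-1}a_j 2^j + \tfrac{1}{2}a_{i^\star}2^{i^\star}$ (which holds because $R_{i^\star-1}\geq 2^j$ for $j\leq i^\star-1$ and $R_{i^\star-1} = \tfrac{1}{2}\cdot 2^{i^\star}$). The extra slack $L_{i^\star}\geq |C_{i^\star}|R_{i^\star} = 2|C_{i^\star}|R_{i^\star-1}$ (from $\Lambda_{i^\star}\geq |C_{i^\star}|$) absorbs the overlap between these two estimates, so that together they cover every scale of $\sum_j a_j 2^j$ and yield the claimed bound. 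The principal obstacle is the precise choice of $i^\star$: taking it too large lets $C_{i^\star}$ miss the dominant scales of $W$, while taking it too small inflates $|C_{i^\star}|$ beyond the size budget $|A|$. Minimality is exactly the feature that aligns these two effects, ensuring that the top $|A|$-th depot sits just above $R_{i^\star-1}$ so that the medium contribution fills precisely the gap that $L_{i^\star}$ cannot cover by itself.
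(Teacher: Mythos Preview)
Your approach is genuinely different from the paper's. You take $B$ to be the $|A|$ heaviest depots of $\hat\tset$ and control $\sum_{o\in B}w(\that_o)$ via a single threshold scale $i^\star$. The paper instead constructs a partition of the $\that_o$ into \emph{disjoint fragments}: using the reformulation $\Lambda_{i-1}\ge\sum_{j\ge i+1}|M_j|2^{j-i}$ and the telescoping identity $\Lambda_{i-1}-2\Lambda_i=|M_i|$ (after normalising to equality, which only helps the comparison), one can peel off exactly $|M_i|$ pairwise disjoint fragments of weight $2^{i-1}$ for every $i$. Each $o\in A$ with last-round index $r_o$ is matched to one such fragment of weight $2^{r_o-1}$, and $B$ is the set of $\that_o$'s containing a chosen fragment. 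Disjointness of the fragments immediately gives $\sum_{o\in B}w(\that_o)\ge\sum_{o\in A}2^{r_o-1}\ge\tfrac14\sum_{o\in A}w(\tset_o)$.

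Your argument, as sketched, does not reach the stated constant~$4$. Following your bounds, $L_{i^\star}\ge 2|C_{i^\star}|R_{i^\star-1}$ yields
\[
L_{i^\star}+(|A|-|C_{i^\star}|)R_{i^\star-1}\;\ge\;\tfrac12 L_{i^\star}+|A|R_{i^\star-1},
\]
and then $\tfrac12 L_{i^\star}\ge\tfrac14\sum_{j>i^\star}a_j2^j$ together with $|A|R_{i^\star-1}\ge\sum_{j<i^\star}a_j2^j+\tfrac12 a_{i^\star}2^{i^\star}$ gives only $\sum_{o\in B}w(\that_o)\ge\tfrac14\sum_j a_j2^j$, i.e.\ a factor~$8$. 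The loss is genuine when the mass of $(a_j)_j$ sits entirely above $i^\star$: you must spend half of $L_{i^\star}$ to absorb the $|C_{i^\star}|R_{i^\star-1}$ overlap, and the remaining half still carries the $\tfrac12$ from \Cref{lem:matching}. The paper's fragment-matching sidesteps this double loss because it never compares against a single threshold; it pays each $o\in A$ at its own scale $2^{r_o-1}$. Your plan is thus a valid constant-factor argument (factor~$8$, which suffices for \Cref{thm:TCdepots}) and is more elementary, but it does not establish the lemma with constant~$4$ as stated; the boundary case $i^\star=0$ is also left implicit.
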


\begin{proof}
  Fix some $A\subseteq D$, and for every $o\in A$ set $r_o$ to be the iteration of the algorithm in which~$\tset_o$ was last assigned a tree.
  Hence, $w(\tset_o) \leq 2\cdot 2^{r_o}$.
	
  Now we shall demonstrate how to find some comparatively-weighted trees among the~$\that_o$.
  We want to match every $\tset_o$ to one of the fragments of weight $2^{r_o-1}$ accounted for by~$\Lambda_{r_o-1}$.
  However, the fragments are not disjoint between the different $\Lambda_i$, i.e., a tree $\that_o$ of weight~$2^i$ will contribute an amount of $1$ to $\Lambda_i$, an amount of $2$ to $\Lambda_{i-1}$, etc., but we can not use that contribution repeatedly.
  To avoid overcounting, we thus execute a little trick:
  Assume, without loss of generality, that $\Lambda_{i-1} = \sum_{j={i+1}}^\infty|M_j|2^{j-i}$.
  To justify this assumption, note that we can decrease the weights of the $\that_o$ until the equalities hold.
  This decrease only worsens the constant we can get for this lemma.
  Observe now that 
  \begin{equation*}
    \Lambda_{i-1} -2\cdot \Lambda_i =  \sum_{j={i}}^\infty|M_j|2^{j-i}- \sum_{j={i+1}}^\infty|M_j|2^{j-i}  = |M_i| \enspace .
  \end{equation*}
  Therefore, we can partition our $\that_o$ as follows: 
  For the final iteration $i^*$ of the algorithm we can extract $\Lambda_{i^*-1}=|M_{i^*}|$ many fragments of weight $2^{i^*-1}$.
  For $i^*-1$ we can then partition each of these into two fragments of weight $2^{i^*-2}$.
  However, we can extract an additional $\Lambda_{i^*-2} -2\cdot \Lambda_{i^*-1} = |M_{i^*-1}|$ fragments, disjoint from the ones we already have.
  We continue this argument downwards.
  Thereby, we partition the $\that_o$ into $M_i$ disjoint fragments of weight~$2^{i-1}$.
	
  Now by construction, the number of $o\in D$ for which $r_o = i$ can be at most $|M_i|$. 
  Thus, we assign each $\tset_o$ with $o\in A$ uniquely to one of the fragments of weight $2^{r_o-1}$ we are guaranteed.
  We then take $B$ to be the set of trees $\that_o$ containing one of the chosen fragments.
  This clearly gives $|B|\leq |A|$, but it also guarantees:
  \begin{equation*}
    \sum_{o\in B} w(\that_o) \geq \sum_{o\in A} 2^{r_o-1} \geq \frac{1}{4} \sum_{o\in A}w(\tset_o) \enspace .\qedhere
  \end{equation*}
\end{proof}

The proof of \Cref{thm:TCdepots} now is a direct consequence of \Cref{lem:strongOptimalityDepots}, where the computation of the exact approximation factor which is obtained by this algorithm is left open. 
The present analysis suggests some very large number, but to maintain some presentability it is also highly inefficient.
Many of the reductions made could in principle be omitted at the cost of a much more detailed analysis.

\clearpage
\pagebreak

\section{Improved Analysis for the No-Depot Setting}
\label{apx:sec:noDepotImproved}
Speaking of constants of the approximation factor, we \emph{will} give an improved analysis for the approximation factor attained by the modified algorithm of Even et al.~\cite{even2004min} with respect to our all-norm objective in the non-depot setting since we can in fact get a reasonably small constant for this setting.
Obtaining a smaller approximation factor involves centrally one observation, namely that the previous analysis pays a considerable portion of its approximation factor to the fact that there can be up to $\frac{k}{2}$ large components, requiring it to give up a factor of $2$ in \Cref{lem:totalWeight}. 
However, notice that this is only the case if all large components have a spanning tree of weight less than $4R$.
This can be avoided by pretending that such components are in fact small.
Notice that we only use that all trees produced by the algorithm have weight at most $4R$, so this does not degrade the analysis, but it will allow us to assume that there are at most $\frac{k}{3}$ large components. 
Plugging this in, as well as rearranging the arithmetic a bit will yield a much smaller factor of approximation.

\begin{theorem}
\label{thm:14-apx}
  There is an algorithm that, given any edge-weighted graph $G$ and integer $k\in \mathbb{N}$, in polynomial time computes a set $\{T_1,\dots,T_t\}$ of at most $k$ trees  covering all nodes of $G$ such that, for any $p\in [1,\infty)$ and tree cover $\{\that_i,\dots,\that_k\}$ we have 
  \begin{equation*}
    \left(\sum w(T_i)^p\right)^{1/p} \leq 14\left(\sum w(\that_i)^p\right)^{1/p} \enspace .
  \end{equation*}
\end{theorem}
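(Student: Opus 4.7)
The plan is to tighten the lower bound in \Cref{lem:totalWeight} by exploiting additional structure that was discarded in the original analysis. The factor-$2$ loss in its final step, where $k - k_\ell - k_{s,2} \geq (k - k_{s,2})/2$ is used, is tight only when every large component has weight just above $2R$ and therefore produces exactly two subtrees. Following the hint, I would refine the partition of the components produced by the algorithm into \emph{medium} ones (spanning-tree weight in $[2R, 4R)$) and \emph{truly large} ones (weight at least $4R$), and treat the medium components as if they were small for analysis purposes. This is legitimate: since the algorithm's output trees all have weight at most $4R$ regardless, the upper bound $\sum w(T_i)^p \leq (k - k_{s,2})(4R)^p + \sum_{\that_i \in T^=} w(\that_i)^p$ from the proof of \Cref{thm:noDepotwithexactlyK} is unaffected.

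With this refinement, I would track refined counts $k_s'$ (truly small plus medium) and $k_\ell'$ (truly large), and re-examine the balancing identity of \Cref{alg:item:exactlyK}. Because every truly large component has weight at least $4R$ and therefore contributes at least three floor-terms to $\sum \lfloor (w+2R)/(2R) \rfloor$, the new counting yields $k \geq k_s' + 3 k_\ell'$, i.e., $k_\ell' \leq (k - k_s')/3 \leq (k - k_{s,2})/3$. The proofs of \Cref{obs:l1ShortComponents} and \Cref{obs:l1LargeComponents} then go through verbatim with the refined classification, since the only properties they use are the small-weight cutoff (which we only raise) and the fact that inter-component edges still have weight at least $R$. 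Combining the two observations via the same convex combination as in \Cref{lem:totalWeight} yields the sharper bound
\begin{equation*}
  \sum_j w(\that_j) \geq \tfrac{R}{3}(k - k_\ell' - k_{s,2}) + \sum_{\that_i \in T^=} w(\that_i) \geq \tfrac{2R}{9}(k - k_{s,2}) + \sum_{\that_i \in T^=} w(\that_i),
\end{equation*}
so that substituting into the ratio gives the improved crude bound $(4R / \tfrac{2R}{9})^p = 18^p$ right away.

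To cut the remaining gap from $18$ down to $14$, I would also refine the upper-bound side: every medium component contributes a single output tree of weight at most $4R$, while every truly large component is split into at least two subtrees each of weight in $[2R,4R]$ whose sum is its total component weight. A convexity argument then bounds $\sum w(T_i)^p$ more sharply than by the crude $(k-k_{s,2})(4R)^p$, because the subtree weights of a truly large component are coupled to its total weight rather than being independent. The main obstacle I anticipate is carrying out this balancing uniformly across all $p \in [1,\infty)$, since the ratio of the upper and lower bounds depends non-monotonically on $p$; I expect to express the ratio as an explicit function of $p$ and the feasible configuration of component weights and show that its supremum equals $14^p$. Once the $14$-approximation in the $\ell_p$-norm ratio is established for every $p$, the strong-optimality argument of \Cref{lem:apxStrongOpt} transfers verbatim to yield the conclusion of \Cref{thm:14-apx} for all monotone symmetric norms simultaneously.
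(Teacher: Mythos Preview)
Your reclassification of medium components as small and the resulting bound $k_\ell' \leq (k-k_s')/3$ are exactly what the paper does. However, your route from the $18$ you obtain to the target $14$ diverges from the paper and contains a genuine gap.

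The paper never touches the upper bound $\sum w(T_i)^p \leq (k-k_{s,2})(4R)^p + \sum_{T^=} w(\that_i)^p$; the entire improvement to $14$ comes from the lower bound alone. The difference is the \emph{order} of the two operations you perform and the coefficients chosen. You first take the $\tfrac{1}{3},\tfrac{2}{3}$ convex combination of \Cref{obs:l1ShortComponents} and \Cref{obs:l1LargeComponents} (eliminating the $k_{s,1}$-term), and only then substitute $k_\ell' \leq (k-k_s')/3$; this loses information and yields $\tfrac{2R}{9}(k-k_{s,2})$. The paper substitutes \emph{first}, obtaining
\[
\sum w(\that_j) - \sum_{T^=} w(\that_i) \;\geq\; \tfrac{2}{3}(k - k_{s,2} - k_{s,1})R,
\]
and only then takes a convex combination with \Cref{obs:l1ShortComponents}, now using tuned weights $\tfrac{3}{7},\tfrac{4}{7}$ chosen so that the $k_{s,1}$-terms cancel. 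This gives $\tfrac{2}{7}(k-k_{s,2})R$ directly, and $4R/(2R/7)=14$.

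Your proposed second stage, sharpening the upper bound via a convexity argument on the coupled subtree weights inside truly large components, is therefore unnecessary. It is also not clearly viable: for $p>1$, the sum $\sum w_i^p$ over edge-disjoint subtrees with $w_i\in[2R,4R]$ and $\sum w_i = W$ is \emph{maximised} at the extremes, not the mean, so the coupling does not obviously improve on the crude bound, and the ``main obstacle'' you anticipate (non-monotone dependence on $p$) is real. The paper sidesteps all of this.
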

\begin{proof}
  Suppose that the algorithm decomposes the graph into $s$ small components of weight at most~$4R$, and $\ell$ large components of weight larger than $4R$, and fix some tree cover $\{\that_1,\dots,\that_k\}$.
	
  Denote by $s_1$ the number of small components for which some $\that_i$ is a spanning tree, by $s_2$ the number of small components which intersect multiple $\that_i$, and by $s_3$ the number of other small components, which are in particular incident to some edge $e$ of a $\that_i$ with $d(e) \geq R$.
	
  We can make the following observations, analogously to the old analysis:
  \begin{enumerate}
    \item $\sum w(\that_i) \geq \frac{s_3}{2}R + \sum_{\that_i \in T^=}w(\that_i)$,
    \item $\sum w(\that_i) \geq (k-s-\ell)2R  -(k-s-\ell-s_2)R + \sum_{\that_i \in T^=}w(\that_i)$,
    \item and $\ell \leq \frac{k-s}{3}$,
  \end{enumerate}
  where in the final point we now get a factor $1/3$ rather than $1/2$.
  We will now also apply the third inequality immediately, allowing us to get:
  \begin{align*}
    \sum w(\that_i) &\geq (k-s-\ell)2R  -(k-s-\ell-s_2)R + \sum_{\that_i \in T^=}w(\that_i)\\
    &= (k-s_1-s_3-\ell)R + \sum_{\that_i \in T^=}w(\that_i)\\
    &\geq (\frac{2}{3}k + \frac{1}{3}s - (s_1+s_3))R + \sum_{\that_i \in T^=}w(\that_i)\\
    &\geq \frac{2}{3}(k-s_1-s_3)R + \sum_{\that_i \in T^=}w(\that_i) \enspace .
  \end{align*}
  To remove the $s_3$, we take a convex combination of this inequality with\linebreak $\sum w(\that_i) \geq \frac{s_3}{2}R + \sum_{\that_i \in T^=}w(\that_i)$ and obtain:
  \begin{align*}
    \sum w(\that_i) &\geq \frac{3}{7}\cdot\frac{2}{3}(k-s_1-s_3)R + \frac{4}{7}\cdot\frac{s_3}{2}R + \sum_{\that_i \in T^=}w(\that_i)\\
    &=\frac{2}{7}(k-s_1)R + \sum_{\that_i \in T^=}w(\that_i) \enspace .
  \end{align*}
  This gives 
  \begin{align*}
    \left[\sum w(T_i)^p\right]^{1/p} &\leq \left[(k-s_1)(4R)^p + \sum_{\that_i \in T^=}w(\that_i)^p\right]^{1/p}\\
    &\leq 14 \left[(k-s_1)\left(\frac{2}{7}R\right)^p + \sum_{\that_i \in T^=}w(\that_i)^p\right]^{1/p}\\
    &\leq 14 \left[\sum w(\that_i)^p\right]^{1/p}  \enspace .\hfill\tag*{\qedhere}
  \end{align*}
\end{proof}
Notice again that the result is actually a lower bound on the average tree weight in any tree cover, excepting those trees that already agree with the solution computed by the algorithm. 
So this analysis does indeed give $c$-approximate strong optimality.

We can obtain an even stronger bound by exploiting that the number $s_3$ of small components witnessing some edge $e$ with $d(e) \geq R$ must be small:
\begin{lemma}
\label{lem:boundLargeEdges}
  Using the same notation as in the proof of \Cref{thm:14-apx}, if $s_3 \geq \frac{1}{3}(k-s_1)$, the algorithm computes a solution which is an $8$-approximation.
\end{lemma}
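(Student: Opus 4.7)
The plan is to strengthen the analysis from the proof of \Cref{thm:14-apx} in the regime where $s_3$ is large, by exploiting the structural information that many heavy edges are forced to appear in any tree cover. Fix an arbitrary tree cover $\{\that_1,\dots,\that_k\}$. As in the handshake argument behind Observation 1, the $s_3$ small components must be reached through at least $\lceil s_3/2 \rceil$ distinct edges of weight at least $R$, all of which live in trees $\that_i \notin T^=$ (since trees in $T^=$ are confined to individual small components).

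Letting $h_i$ count the heavy edges in $\that_i$, we have $w(\that_i) \geq h_i R$ and $\sum_i h_i \geq s_3/2 \geq (k-s_1)/6$. Crucially, this yields a $p$-norm bound directly: since $h_i \in \naturals$, the minimum of $\sum_i h_i^p$ subject to $\sum_i h_i \geq s_3/2$ is attained by spreading heavy edges as singletons (so each $h_i \in \{0,1\}$), giving $\sum_i h_i^p \geq \lceil s_3/2 \rceil$. Therefore
\begin{equation*}
  \sum_{i \notin T^=} w(\that_i)^p \;\geq\; R^p \cdot \lceil s_3/2 \rceil \;\geq\; (k-s_1)\,R^p/6.
\end{equation*}
Combined with the algorithm's bound $\sum_{i \notin T^=} w(T_i)^p \leq (k-s_1)(4R)^p$ (and the common $\sum_{T^=} w(\that_i)^p$ terms on both sides), the resulting ratio is at most $6 \cdot 4^p$, so the $p$-norm approximation factor is bounded by $4 \cdot 6^{1/p}$. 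This already gives $8$-approximation whenever $p$ is large enough that $6^{1/p} \leq 2$.

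For the remaining (small-$p$) regime, the heavy-edge bound is complemented by the average-weight bound from the sharpened Observation 2: using $\ell \leq (k-s)/3$ as in the proof of \Cref{thm:14-apx}, we obtain $\sum_{i \notin T^=} w(\that_i) \geq (2/3)(k-s_1-s_3) R$, and Jensen's inequality yields $\sum w(\that_i)^p \geq (k-s_1)\bigl((2(k-s_1-s_3))/(3(k-s_1))\cdot R\bigr)^p$. Taking the maximum of this with the heavy-edge bound and performing a case analysis on the ratio $s_3/(k-s_1) \in [1/3, 1]$ completes the argument. The main obstacle is the boundary case $s_3 = (k-s_1)/3$: here the average bound is tight at $(4/9)R$ while the heavy-edge bound is tight at $R/6$, and the two must be balanced carefully across all $p$ simultaneously. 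Away from this boundary the ratio improves monotonically (toward factor $4$ for large $s_3$ via the heavy-edge bound, and toward better average-based factors for $s_3$ near $(k-s_1)/3$), so the boundary case drives the final constant.
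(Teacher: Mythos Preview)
Your proposal has a genuine gap: the two lower bounds you assemble are not strong enough to reach the factor~$8$, and the ``case analysis'' you defer does not close the gap.

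Concretely, set $\mu = s_3/(k-s_1)$ and look at the boundary $\mu = 1/3$. Your heavy-edge bound gives $\sum_{i\notin T^=} w(\that_i)^p \geq \frac{s_3}{2}R^p = \frac{k-s_1}{6}R^p$, yielding a ratio of $6\cdot 4^p$, i.e., a factor $4\cdot 6^{1/p}$; this is $\leq 8$ only for $p \geq \log_2 6 \approx 2.585$. Your Jensen/average bound gives $\sum_{i\notin T^=} w(\that_i)^p \geq (k-s_1)\bigl(\tfrac{2(1-\mu)}{3}R\bigr)^p = (k-s_1)(\tfrac{4}{9}R)^p$, yielding a factor of exactly $9$ for \emph{every} $p$. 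Taking the better of the two still leaves you at factor $9$ for all $p < 2.585$. Moreover, your monotonicity claim is false: for $p=1$ the average-based factor is $6/(1-\mu)$, which \emph{worsens} as $\mu$ increases from $1/3$, while the heavy-edge factor $8/\mu$ improves; at $\mu = 1/2$ both give factor $\geq 12$. So the case analysis you promise cannot produce~$8$.

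The idea you are missing is what distinguishes the $s_3$ components from the $s_2$ components: each $s_3$ component is, by definition, \emph{entirely contained in a single} $\that_i$. Hence that $\that_i$ pays not only for a heavy boundary edge but also for the spanning tree of the component itself. This lets you put the term $\sum_{T\in T_3} w(T)^p$ on \emph{both} sides of the comparison and cancel it, so that on the algorithm side only $k-s_1-s_3$ trees (rather than $k-s_1$) remain to be bounded by $4R$. The resulting ratio is controlled by $\frac{k-s_1-s_3}{s_3}\cdot 4^p \leq 2\cdot 4^p \leq 8^p$, which is exactly the paper's argument. Without exploiting this containment property you are effectively double-paying for the $s_3$ small trees on the algorithm side, and that is where the extra factor leaks in.
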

\begin{proof}
  Let $\that_i$ be some tree from an optimal solution to $(\ell_p,k)$ tree cover that contains fully covers a small component, as well as some edge $e$ with $d(e) \geq R$.
  Then take $T_1, \dots, T_j$ to be the spanning trees of those small components fully covered by $\that_i$. 
  We can notice that $\that_i$ contains at least $j/2$ edges of weight greater than $R$.
  Hence, we obtain $w(\that_i)^p \geq \sum_j w(T_j)^p + (j/2) R^p$.
	
  That is, in the algorithm all trees in the components counted by $s_3$ are being paid for by the optimal solution, and the optimal solution pays an additional $R^p$ for half of them, which we can use to cover the cost of some of the algorithms large trees.
	
  So let $T_{3}$ be the set of minimum spanning trees of the components counted by $s_3$.
  We obtain the following bounds:
  \begin{align}
    \sum_i w(\that_i)^p &\geq \sum_{T\in T^=} w(T)^p+ \sum_{T\in T_3} w(T)^p+ \frac{s_3}{2}R^p; \label{eq:boundOPTp} \\
    \sum_i w(T_i)^p &\leq \sum_{T\in T^=} w(T)^p+ \sum_{T\in T_3} w(T)^p+ \frac{k-s_1-s_3}{2}(4R)^p \enspace . \label{eq:Alg}
  \end{align}
  Here, \Cref{eq:Alg} follows from the fact that the trees of the algorithm have average weight $\leq 2R$ because of \Cref{alg:item:exactlyK}, and so the worst possible arrangement with respect to any $\ell_p$ objective is to have~$k/2$ trees of weight $4R$. 
  To obtain \Cref{eq:Alg}, we then additionally exclude the trees counted by~$s_1$ and~$s_3$ from that computation.
	
  Together, \Cref{eq:Alg,eq:boundOPTp} yield
  \begin{align*}
    \frac{\sum_i w(T_i)^p}{\sum_i w(\that_i)^p} \leq \frac{k-s_1-s_3}{s_3} 4^p \leq 2\cdot 4^p  \leq  8^p \enspace .
  \end{align*}
  This proves the lemma.
	
  Interestingly, note that for larger $p$ we can get much better bounds here.
  Namely, for $s_3 \geq \lambda (k-s_1)$ we would get
  \begin{equation*}
    \left(\frac{\sum_i w(T_i)^p}{\sum_i w(\that_i)^p}\right)^{1/p} \leq \left(\frac{k-s_1-s_3}{s_3}\right)^{1/p} 4 \leq \left(\frac{1-\lambda}{\lambda}\right)^{1/p}\cdot 4 \enspace .
  \end{equation*}
  Here, we can choose $\lambda$ quite small as $p$ grows, for example for $p = 2$ we could already take $\lambda = \frac{1}{5}$, obtaining better constants for specific, large values of $p$.
\end{proof} 
\begin{theorem}
\label{thm:9-apx}
  There is an algorithm that, given any edge-weighted graph $G$ and integer $k\in \mathbb{N}$, in polynomial time computes a set $\{T_1,\dots,T_t\}$ of at most $k$ trees  covering all nodes of $G$ such that, for any $p\in [1,\infty)$ and tree cover $\that_i,\dots,\that_k$ we have 
  \begin{equation*}
    \left(\sum w(T_i)^p\right)^{1/p} \leq 9\left(\sum w(\that_i)^p\right)^{1/p} \enspace .
  \end{equation*}
\end{theorem}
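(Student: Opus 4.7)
The plan is to sharpen the analysis of \Cref{thm:14-apx} by a case split on the size of $s_3$, the number of small components of the algorithm's partition that are incident to a heavy edge (weight $\geq R$) of some $\that_i$. The factor $14$ in \Cref{thm:14-apx} was incurred because a convex combination (with weight $\tfrac{3}{7}$) of two lower bounds on $\sum_i w(\that_i)$ had to absorb the scenario where $s_3$ is a large fraction of $k-s_1$. But \Cref{lem:boundLargeEdges} already handles exactly that scenario with a better constant, so we can stitch the two analyses together and save the factor.

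First, I would handle the regime $s_3 \geq \tfrac{1}{3}(k-s_1)$ by a direct invocation of \Cref{lem:boundLargeEdges}, which yields an $8$-approximation for every $p\in [1,\infty)$ and is therefore stronger than the desired $9$-approximation.

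In the complementary regime $s_3 < \tfrac{1}{3}(k-s_1)$, we have $k-s_1-s_3 > \tfrac{2}{3}(k-s_1)$, so the \emph{intermediate} bound obtained on the way to \Cref{thm:14-apx}, namely
\[ \sum_i w(\that_i) \;\geq\; \tfrac{2}{3}(k-s_1-s_3)R + \sum_{\that_i\in T^=}w(\that_i), \]
improves without any convex combination to $\sum_{\that_i\notin T^=}w(\that_i) > \tfrac{4}{9}(k-s_1)R$. The $k-s_1$ trees of $\{\that_i\}$ outside $T^=$ therefore have average weight exceeding $\tfrac{4R}{9}$, so by the power-mean inequality $\sum_{\that_i\notin T^=}w(\that_i)^p \geq (k-s_1)\bigl(\tfrac{4R}{9}\bigr)^p$. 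Pairing this with the algorithm's upper bound $\sum w(T_i)^p \leq (k-s_1)(4R)^p + \sum_{\that_i\in T^=}w(\that_i)^p$ (each non-shared tree has weight at most $4R$) and the elementary inequality $\tfrac{A+C}{B+C}\leq\tfrac{A}{B}$ for $A\geq B\geq 0$, $C\geq 0$ yields $\sum w(T_i)^p / \sum w(\that_i)^p \leq (4R/(4R/9))^p = 9^p$, and taking $p$-th roots gives the $9$-approximation.

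I do not anticipate a serious obstacle: the threshold $\tfrac{1}{3}(k-s_1)$ is forced by asking both cases to deliver the common constant $9$ (\Cref{lem:boundLargeEdges} contributes $8$ uniformly in $p$, while Case~2 needs the $\tfrac{2}{3}$ improvement of $k-s_1-s_3$ over $k-s_1$ to produce $4R/9$). The mild technical point is verifying that the power-mean step is applied correctly to precisely the $k-s_1$ trees of $\{\that_i\}$ outside $T^=$, but this exactly mirrors the last line of the proof of \Cref{thm:14-apx}. As a byproduct, the argument again lower-bounds the average weight of the $k-s_1$ algorithm trees that are not already in $\{\that_i\}$, so it yields $9$-approximate strong optimality in the sense of \Cref{lem:apxStrongOpt}, extending the guarantee to every monotone symmetric norm.
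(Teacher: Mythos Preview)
Your proposal is correct and follows essentially the same route as the paper: case-split on whether $s_3 \geq \tfrac{1}{3}(k-s_1)$, invoke \Cref{lem:boundLargeEdges} in the first case, and in the second case plug $k-s_1-s_3 \geq \tfrac{2}{3}(k-s_1)$ into the intermediate bound $\sum w(\that_i) \geq \tfrac{2}{3}(k-s_1-s_3)R + \sum_{\that_i\in T^=}w(\that_i)$ from the proof of \Cref{thm:14-apx}, then pass to the $p$-th power via convexity exactly as before. Your closing remark on $9$-approximate strong optimality also matches the paper's observation.
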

\begin{proof}
  We follow the proof of \Cref{thm:14-apx} to obtain
  \begin{align*}
    \sum w(\that_i) \geq \frac{2}{3}(k-s_1-s_3)R + \sum_{T_i \in T^=}w(T_i) \enspace .
  \end{align*}
  Then by \Cref{lem:boundLargeEdges} we can assume $s_3 \leq (k-s_1)/3$, otherwise we already have the statement of the theorem.
  So we obtain
  \begin{align*}
    \sum w(\that_i) \geq \frac{4}{9}(k-s_1)R + \sum_{T_i \in T^=}w(T_i),
  \end{align*}
  and thus
  \begin{align*}
    \sum w(\that_i)^p \geq (k-s_1)(\frac{4}{9}R)^p + \sum_{T_i \in T^=}w(T_i)^p \enspace .
  \end{align*}
  Together with 
  \begin{equation*}
    \sum w(T_i)^p \leq (k-s_1)(4R)^p + \sum_{T_i \in T^=}w(T_i)^p,
  \end{equation*}
  this implies the theorem.
\end{proof}

\clearpage
\pagebreak

\section{Computational Hardness}
\label{sec:Hardness}
\subsection{Weak \texorpdfstring{$\mathsf{NP}$}{}-hardness for\texorpdfstring{ $\ell_p$-{\sc Tree Cover with Depots}}{}}
We will show $\mathsf{NP}$-hardness for $\ell_p$-{\sc Tree Cover with Depots} by reduction from {\sc Partition}. 
In the {\sc Partition} problem, we are given a collection of integers $x_1,\dots,x_n$; the task is to compute a \emph{solution} in form of a subcollection $X\subseteq \{x_1,\dots,x_n\}$ such that $2\sum_{x\in X} x = \sum_{i=1}^nx_i$, or to decide that no solution exists.
It is known that {\sc Partition} is (weakly) \NP-complete \cite{Karp1972}.

We transform an instance $\{x_1,\dots,x_n\}$ of {\sc Partition} to an equivalent instance of $\ell_p$-{\sc Tree Cover with Depots} by constructing a graph $G$ with node set $V(G)=\{d_1,d_2\} \cup \{v_1,\dots,v_n\}$ and edge set $E(G) = \{d_1,d_2\} \times \{v_1,\dots,v_n\}$, i.e., $G$ is the complete bipartite graph $K_{2,n}$.
We then set $D=\{d_1,d_2\}$ and $w(d_i, v_j) = x_j$. 

One can briefly observe that, for any given subset $X$ of the nodes, there is a minimum Steiner tree for $X$ that consists of a star centered on one of the depots, plus possibly one additional edge to connect the other depot in case $\{d_1,d_2\}\subseteq X$. 
In particular this means that in an optimal $\ell_p$-{\sc Tree Cover with Depots} every tree will contain exactly one depot, so the optimal solutions correspond to partitions $X_1\dot \cup X_2$ of $\{v_1,\dots,v_n\}$ and have solution value $[(\sum_{v_i\in X_1}x_i)^p + (\sum_{v_i\in X_2}x_i)^p]^{1/p}$.
Since the $p$-norms are strongly convex for $p>1$, the instance $(G,D,w,p)$ then has optimal solution value $[2(\frac{1}{2}\sum_{i=1}^n x_i)^p]^{1/p}$ if and only if the original instance of {\sc Partition} admits a solution.

\subsection{\texorpdfstring{$\mathsf{W}[1]$}{}-hardness for \texorpdfstring{$\ell_p$}{}-{\sc Tree Cover with Depots} parameterized on~\texorpdfstring{$|D|$}{}} 
The previous construction can be expanded by allowing more than $2$ depots, say $k>2$.
The resulting instance will again have optimal solution value $[k(\frac{1}{k}\sum_{i=1}^n x_i)^p]^{1/p}$ if and only if there exists a partition of $X$ into $k$ bins $X_1,\dots,X_k$ such that the items in each bucket sum up to the same weight. 
Using this construction we can then solve {\sc Unary Bin Packing} with $k$ bins which was shown by Jansen et al. to be $\mathsf{W}[1]$-hard when parameterized by $k$ \cite{Jansen2013UnaryBinPacking}. 
Note here that in the case of unary bin packing, we may assume, without loss of generality, that all bins must be completely filled, by providing additional items of unit size.

\subsection{\texorpdfstring{$\mathsf{APX}$}{}-hardness for \texorpdfstring{$\ell_p$-{\sc Tree Cover with Depots}}{}}\label{ssec:hardnessLPTC}
Although the earlier constructions are relatively straightforward, determining a lower bound for the approximability of $\ell_p$-{\sc Tree Cover with Depots} requires  more work.
We proceed by reducing from a variant of maximum satisfiablitity of linear equations over $\mathbb{F}_2$ of arity~$3$, MAX-E3LIN2 for short.
In this problem, we are given a set of variables $x_1,\dots,x_n$, as well as a set of $m$ ternary equations $x\oplus y\oplus z = b$ with $b\in \{0,1\}$.
Here, $\oplus$ signifies addition modulo~$2$.
The goal is to compute an assignment of $\{0,1\}$ values to the variables which satisfies a maximum number of equations.

In addressing this problem, H\aa stad~\cite{Hastad99Inapx}, in his seminal work, demonstrated that for any $\varepsilon > 0$, it is $\mathsf{NP}$-hard to distinguish instances where $(1-\varepsilon)m$ equations can be satisfied from those where at most $(\frac{1}{2} + \varepsilon)m$ can be satisfied. 
For our purposes we need to also bound the variable degree.
A modification of  MAX-E3LIN2 achieving bounded variable degree was used by Karpinski et al.~\cite{Karpinski2015ApxTSP} to show improved lower bounds for the approximability of the {\sc Traveling Salesperson Problem}.
They proved that MAX-E3LIN2 remains hard to approximate even if every variable occurs in exactly three equations. 
In return they need to allow some equations of arity $2$, as well as a degradation in the approximability lower bound.
Specifically, they prove
\pagebreak

\begin{theorem}[\cite{Karpinski2015ApxTSP}, Theorem 3] \label{thm:maxe3lin2}
  For every $\varepsilon>0$ there exists a collection of systems of linear equations with $31m$ equations and $21m$ variables over $\mathbb{F}_2$ such that:
  \begin{itemize}
    \item each variable occurs in exactly three equations;
    \item $30m$ of the equations contain $2$ variables;
    \item $m$ of the equations are of the form $x\oplus y \oplus z = 0$;
    \item and it is $\mathsf{NP}$-hard to decide whether some assignment to the variables satisfies $(31-\varepsilon)m$ equations, or if every assignment satisfies at most $(30.5 + \varepsilon)m$ equations.
  \end{itemize}
\end{theorem}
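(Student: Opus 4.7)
My plan is to derive this theorem from H\aa stad's classical inapproximability result for MAX-E3LIN2 by a standard degree-reduction argument, with a careful accounting of the arity distribution and variable counts. H\aa stad's theorem asserts that, for every $\delta > 0$, it is \NP-hard to distinguish instances of MAX-E3LIN2 in which a $(1-\delta)$-fraction of the equations can be simultaneously satisfied from instances in which at most a $(1/2+\delta)$-fraction can. In those base instances, however, there is no bound on how many equations a single variable occurs in, so the statement of the present theorem is not immediate.

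The first step is to replace each variable $x$ in a H\aa stad instance by independent copies, one copy per occurrence of~$x$. Each original ternary equation $x\oplus y\oplus z = 0$ is then rewritten as an equation on the three copies associated with that particular occurrence, so every copy appears in exactly one ternary equation. To force the copies of any given original variable to agree in a nearly-optimal assignment, I would introduce binary ``consistency'' equations $x^{(i)} \oplus x^{(j)} = 0$ along the edges of a bounded-degree expander defined on the copies of $x$. Expansion guarantees that separating the copies into a majority and a minority must violate a number of consistency equations proportional to the size of the minority, so up to a small additive loss in the completeness/soundness parameters, one may assume that all copies of each original variable take the same value in any nearly-satisfying assignment of the reduced system, and the fraction of satisfied equations in the reduced system is then tied to that of the original.

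The remaining work is the combinatorial bookkeeping that turns H\aa stad's gap into the stated gap with exactly $31m$ equations, $21m$ variables, and uniform variable degree~$3$. The degree of the expander placed on each variable's copies should be chosen so that every copy lies in exactly two consistency equations; together with the unique ternary equation it already sits in, this gives each copy degree exactly three. Handshake then forces $3\cdot 21m = 63m = 3m + 2\cdot 30m$, which matches the claimed split into $m$ ternary and $30m$ binary equations; any mismatch between expander size and the number of copies per original variable is absorbed by padding with auxiliary variables and trivial equations. The obstacle I expect to spend the most care on is the exact soundness constant $30.5+\varepsilon$: the drop from H\aa stad's $1/2$ soundness to $30.5/31$ must be computed by tracking precisely how many binary consistency equations an adversarial assignment can satisfy ``for free,'' and establishing the necessary lower bound on forced violations per inconsistent variable requires both a quantitative edge-expansion estimate and a rescaling argument that turns the additive degradation of the gap into the claimed multiplicative form.
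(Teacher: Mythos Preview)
The paper does not prove this theorem; it is quoted from Karpinski, Lampis and Schmied and only accompanied by a one-paragraph sketch: replace each occurrence of a variable by a fresh copy, then link the copies of the same original variable by binary equality equations placed along the edges of an \emph{amplifier graph}. Your proposal is essentially this same outline, so at the level of ``which reduction is used'' you are aligned with the paper.

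There is one concrete inaccuracy worth flagging. You write that the expander on the copies should be chosen so that every copy lies in exactly two consistency equations; a $2$-regular graph is a disjoint union of cycles and has no useful expansion, so this cannot work as stated. In the Karpinski et al.\ construction (and in the counts you yourself verify), only $3m$ of the $21m$ variables are ``contact'' copies that sit in a ternary equation; the remaining $18m$ are auxiliary \emph{checker} vertices that live entirely inside the amplifier and appear in three binary equations each. The amplifier is thus $3$-regular overall, with contact vertices of amplifier-degree~$2$ and checker vertices of amplifier-degree~$3$, which is exactly what makes the edge count come out to $30m$ and what gives the expansion needed for soundness. Your ``padding with auxiliary variables'' remark gestures in this direction but does not capture it; the checker vertices are not padding, they are where the expansion lives. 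A second point the paper stresses and you omit: these $3$-regular amplifiers are only known to exist via a probabilistic construction, which is why the resulting hardness is under \emph{randomized} reductions.
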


This regularisation is achieved by taking an instance of MAX-E3LIN2 and replacing every occurrence of a variable with a new, dedicated variable, and then linking these new variables together by additional equations to force them to be equal if they correspond to the same original variable.
This increases the size of the system, hence there are $31m$ instead of $m$ equations.
We will call this problem \maxsat{}, for $3$-regular linear equations of arity $2$ or~$3$ over the binary domain. 
Its instances are denoted by $(X,\mathcal{C})$, where $X$ is the set of variables and $\mathcal{C}$ the set of clauses.

\paragraph*{Note:} The hardness of approximation for \maxsat{} presented by Karpinski et al.~\cite{Karpinski2015ApxTSP} relies on reducing the variable degree of an instance of MAX-E3LIN2 by means of a certain type of amplifier graph. 
They show how to construct such amplifiers probabilistically in polynomial time, but a deterministic construction is not known. 
Thus, the reduction only works probabilistically. 
If one desires deterministic guarantees, we may settle for a weaker statement by noting that for an instance with $m$ clauses their transformation requires at most one such graph for each $i \in 1, \dots m$, and the graph for each $i$ can be chosen independently of the instance and has size linear in $i$.
Thus their results also hold in a deterministic setting if we allow for polynomial runtime \emph{and} a polynomial advice string, i.e., assuming $\mathsf{NP}\not \subseteq \mathsf{P}/\mathsf{poly}$ rather than $\PP \not = \NP$.\vspace{0.3cm}\\
For an instance of {\sc 3R\{2,3\}L2} we may construct an instance of $\ell_p$-{\sc Tree Cover with Depots} by first introducing some gadgets for the variables and clauses:
\begin{itemize}
  \item For every variable $x$, introduce three nodes $\hat{x}, x_0,$ and $x_1$, as well as edges ${\hat{x}, x_i}$ for $i=0,1$ with weight $3$.
  Add the $x_i$'s as depots.
  \item For every ternary clause $C$, introduce nodes $\hat{C}, C_{000}, C_{110}, C_{101},$ and $C_{011}$ with edges $\{\hat{C}, C_i\}$ of weight $3$.
    Add the $C_i$'s as depots.
  \item For every binary clause $C = x\oplus y = 0$, introduce nodes $\hat{C}, C_{00},$ and $C_{11}$ with edges $\{\hat{C}, C_i\}$ of weight $2$.
    Add the $C_i$'s as depots. Further add nodes $\hat{C_{00}}$ and $\hat{C_{11}}$ where $\hat{C_i}$ is connected only to $C_i$ by an edge of weight $1$.
  \item For every binary clause $C = x\oplus y = 1$, introduce nodes $\hat{C}, C_{01},$ and $C_{10}$ with edges $\{\hat{C}, C_i\}$ of weight $2$.
    Add the $C_i$'s as depots.
    Further add nodes $\hat{C_{01}}$ and $\hat{C_{10}}$ where $\hat{C_i}$ is connected only to $C_i$ by an edge of weight $1$.
\end{itemize}

We connect the gadgets together as follows:
\begin{itemize}
  \item For every clause $C = x\oplus y \oplus z = 0$ we connect $C_{b_1b_2b_3}$ to $x_{b_1}, y_{b_2},$ and $z_{b_3}$ with a path of length $2$ where both edges have weight $1$.
  \item For every clause $C = x\oplus y = b$ we connect $C_{b_1b_2}$ to $x_{b_1}, y_{b_2}$  with a path of length $2$ where both edges have weight $1$.
\end{itemize}

An illustration of this construction is given in \Cref{fig:ApxGadgets}. 
Notice that there are $106m$ depots.

The goal of this construction is to represent the original system of linear equations in the following sense: For every variable $x$, the node $\hat{x}$ should be assigned to either $x_0$ or $x_1$. 
As this is relatively expensive, we would like to not assign any further nodes to that depot. 
The other depot, however, is still free, and should correspond to the variable assignment we want to choose.

Similarly, for every clause $C = x \oplus y \oplus z  = 0$ we need to assign $\hat{C}$ to one of the depots in the clause gadget. 
These correspond directly to the way in which we would like the clause to be satisfied. 
That is, assigning $\hat{C}$ to $C_{011}$ means that we have set $x = 0$, $y = 1$, and $z=1$ to satisfy the clause.

Now we may note that, if those choices are consistent between a clause and its incident variables, we will be able to also assign the nodes on the paths that we introduced between the gadgets. 
Specifically, every such node will be adjacent to a depot that has not yet been assigned any node.
We try to assign all of them to the neighbouring $C_i$ if it is empty.
The remainder we then assign to the adjacent $x_i$.
Since every variable appears in most 3 clauses, this will leave every tree with weight at most $3$, if we had a satisfying assignment to begin with.
In fact every tree will have weight exactly~$3$ in this setting, since the variable degrees are exactly~$3$.

Conversely, every clause that is not satisfied will cause one tree somewhere to have weight at least $3$. For every $\ell_p$ norm other than $\ell_1$ this imbalance in the tree weights will then be measureable in the value of an optimal solution.

\begin{figure}
  \centering
  \begin{subfigure}{.45\textwidth}
    \centering
    \includegraphics[width=0.9\textwidth, page = 1]{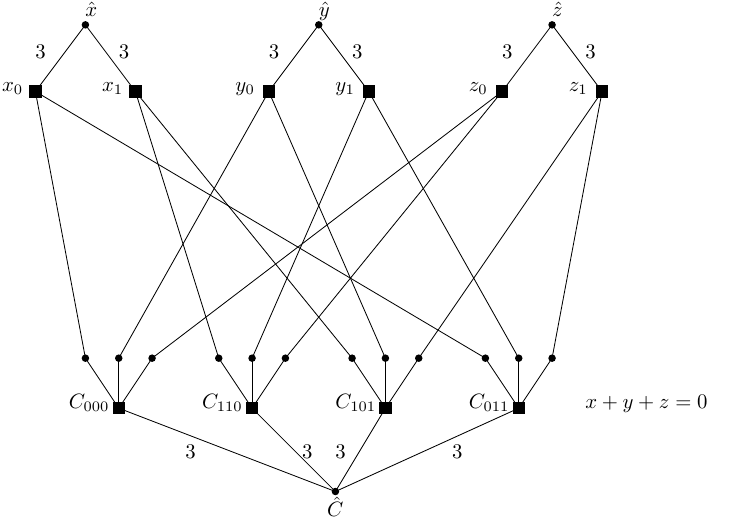}
    \caption{The gadget for a ternary clause $x \oplus y \oplus z = 0$, along with the gadgets for $x,y,z$ and the respective connections.}
  \end{subfigure}\hspace{0.09\textwidth}
  \begin{subfigure}{.45\textwidth}
    \centering
    \includegraphics[width=0.9\textwidth, page = 2, trim = -3cm 0 3cm 0]{APXhardness.pdf}
		\caption{The gadget for a binary clause $x \oplus y = 0$, along with the gadgets for $x,y$ and the respective connections.\phantom{sssssssssssssssssssssss}}
  \end{subfigure}
  \caption{Illustration of clause and variable gadgets for $3$ and $2$-ary constraints}
\label{fig:ApxGadgets}
\end{figure}

\subsubsection{Completeness}
We begin by describing how we can transform solutions to the $(31-\varepsilon)m$-satisfiable instances of {\sc 3R\{2,3\}L2} to good solutions to the constructed instances of \lptc{}.
\begin{lemma}
\label{lem:completeness}
  Let $(X,\mathcal{C})$ be an instance of \maxsat{} in which $(31-\varepsilon)m$ clauses can be simultaneously satisfied.
  Let $(G,D,w)$ be the instance of \lptc{} obtained from~$(X,\mathcal{C})$.
  Then
  \begin{equation*}
    OPT_p \leq  [(106-2\varepsilon)m 3^p + \varepsilon (2^p + 4^p)]^{1/p} \enspace .
  \end{equation*}
\end{lemma}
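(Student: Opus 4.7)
The plan is to build an explicit tree cover from a fixed assignment $\alpha$ satisfying $(31-\varepsilon)m$ of the given linear equations, and then tally its $\ell_p$-cost. For every variable $x$ with $\alpha(x)=a_x$, I anchor $\hat x$ in the tree rooted at $x_{1-a_x}$, so that $x_{a_x}$ becomes the \emph{free side} of $x$'s gadget. For every satisfied clause $C$, I place $\hat C$ (and, if $C$ is binary, also the leaves $\hat{C_i}$) into the tree rooted at the unique depot $C_\sigma$ corresponding to the restriction of $\alpha$ to $C$'s variables. For every unsatisfied clause, I place $\hat C$ at a satisfying depot $C_\sigma$ at Hamming distance~$1$ from $\alpha$ (which always exists by the parity structure of the $\oplus$-equations); let $v^\ast$ denote the unique variable on which $\sigma$ and $\alpha$ disagree.

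The internal node of every length-$2$ path $C_i$--$v$ is allocated to $v$'s tree exactly when $v$ is the free side of its variable and $C_i$ is the $\hat C$-depot of its clause, with one exception per unsatisfied clause: I additionally redirect the intermediate on the path from a chosen sibling depot $C_{\sigma'}$ with $\sigma'(v^\ast)=\alpha(v^\ast)$ to the variable depot $v^\ast_{\alpha(v^\ast)}$. For ternary clauses, among the three siblings of $C_\sigma$ exactly two agree with $\alpha$ on $v^\ast$ by a simple Hamming-distance count; for binary clauses, the unique non-$\hat C$ depot qualifies. All other intermediates stay at their clause depot.

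Under this rule the weight of every tree is pinned down almost automatically. Every non-$\hat C$ sibling depot in any clause gadget is adjacent to exactly three length-$2$ paths (in a ternary gadget) or two paths plus its $\hat{C_i}$-leaf (in a binary gadget) and retains the intermediate on each of them --- either as forced, because the opposite variable depot is occupied, or as a retained free-path intermediate, because the depot is not the $\hat C$-depot --- summing to weight exactly~$3$. The $\hat C$-depot's tree likewise has weight~$3$: for ternary clauses just $\hat C$ alone, for binary clauses $\hat C$ together with its $\hat{C_i}$-leaf. Each free-side variable depot receives exactly one intermediate per incident clause (from the $\hat C$-depot of satisfied clauses, and from $C_{\sigma'}$ of unsatisfied clauses that would otherwise shorten it), and since every variable appears in exactly three clauses its tree also has weight exactly~$3$. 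For an unsatisfied clause, the single disagreement bumps the $\hat C$-depot's weight from~$3$ to~$4$ via one extra forced intermediate, while the redirection simultaneously drops $C_{\sigma'}$'s weight from~$3$ to~$2$. Summing over the $106m$ trees --- $\varepsilon m$ at weight~$4$, $\varepsilon m$ at weight~$2$, the remaining $(106-2\varepsilon)m$ at weight~$3$ --- we obtain
\begin{equation*}
OPT_p^{\,p} \le (106 - 2\varepsilon) m \cdot 3^p + \varepsilon m \cdot (2^p + 4^p),
\end{equation*}
matching the claimed bound.

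The main difficulty I expect is the bookkeeping in the redirection step: one must verify that the sibling $C_{\sigma'}$ can be chosen consistently for both arities, that redirecting a single intermediate from $C_{\sigma'}$ to $v^\ast_{\alpha(v^\ast)}$ indeed delivers exactly the intermediate that $v^\ast_{\alpha(v^\ast)}$ is missing from the $\hat C$-depot (so its variable tree stays at weight~$3$), and that no tree elsewhere in the gadget is inadvertently perturbed outside of the single (weight~$4$, weight~$2$) pair per unsatisfied clause. Once this case analysis is complete, the final summation is immediate.
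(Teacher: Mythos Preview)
Your proof is correct and follows essentially the same construction as the paper's: anchor $\hat{x}$ at the non-chosen side of each variable gadget, place $\hat{C}$ at a clause depot matching (or, for unsatisfied clauses, nearly matching) the assignment, and then route the path intermediates so that every tree lands at weight~$3$ except for one weight-$4$/weight-$2$ pair per unsatisfied clause.

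The one substantive difference is your redirection step. The paper simply lets the free depot of the flipped variable~$v^\ast$ receive one fewer intermediate, so the weight-$2$ tree sits at a \emph{variable} depot; you instead move an intermediate from a sibling clause depot $C_{\sigma'}$ over to $v^\ast_{\alpha(v^\ast)}$, placing the weight-$2$ tree inside the \emph{clause} gadget. Your version is in fact cleaner: it guarantees every free variable depot has weight exactly~$3$ regardless of how often that variable is chosen as~$v^\ast$. The paper's construction always flips the \emph{last} variable of an unsatisfied clause and does not explicitly address the case where the same variable is last in several unsatisfied clauses --- in that situation the weight defect concentrates (one tree at weight~$1$ or~$0$ rather than several at weight~$2$), and by convexity the $\ell_p$ cost then sits slightly \emph{above} the stated formula. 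Your redirection removes this bookkeeping worry entirely, at the price of the extra case analysis you flag. (Incidentally, the $\varepsilon(2^p+4^p)$ in the lemma statement is evidently missing a factor of~$m$; your $\varepsilon m(2^p+4^p)$ is the intended expression, and the paper's own proof writes it both ways.)
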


\begin{proof}
  Let $(G,D,w)$ be the instance obtained from transforming $(X,\mathcal{C})$ and take $f:X\to \{0,1\}$ to be an assignment to the variables satisfying $(31-\varepsilon)m|\mathcal{C}|$ clauses.
	
  To construct a tree cover of $G$, we do the following:
  \begin{enumerate}
    \item For every $x\in X$, assign $\hat{x}$ to $x_{1-f(x)}$.
    \item For every satisfied clause $C = x\oplus y\oplus z = 0$, assign $\hat{C}$ to $C_{f(x)f(y)f(z)}$.
    \item For every satisfied clause $C = x\oplus y = b$ with $b\in \{0,1\}$, assign $\hat{C}$ to $C_{f(x)f(y)}$, as well as $\hat{C}_{b_xb_y}$ to $C_{b_xb_y}$ for any $b_x,b_y\in \{0,1\}$.
    \item For every unsatisfied clause $C = x\oplus y\oplus z = 0$, assign $\hat{C}$ to $C_{f(x)f(y)1-f(z)}$.
    \item For every unsatisfied clause $C = x\oplus y = b$ with $b\in \{0,1\}$, assign $\hat{C}$ to $C_{f(x)1-f(y)}$, as well as $\hat{C}_{b_xb_y}$ to $C_{b_xb_y}$ for any $b_x,b_y\in \{0,1\}$.
    \item Finally, assign every of yet unassigned node by the following rules: 
      \begin{enumerate}
        \item If it has a neighbouring depot $C_I$ for some $C\in \mathcal{C}, I\in \{0,1\}^*$ such that $C_I$ has not been assigned $\hat{C}$, assign the node to this depot. 
        \item Otherwise, if it has a neighbour $x_f(x)$ for some $x\in X$, then assign it to this depot.
        \item Finally, if it is not assigned by the previous rules, it is incident to either $C_{f(x)f(y)1-f(z)}$ or $C_{f(x)1-f(y)}$ for some clause $C$, then assign it to this depot.
      \end{enumerate}
    \end{enumerate}
    We observe that---because every variable appears in at most three clauses---the application of this rule will cause every depot to be assigned a total weight of at most $3$, except those corresponding to the $C_I$ that was assigned $\hat{C}$ for some unsatisfied clause $C$; these depots are assigned a total weight of~$4$, consisting of $\hat{C}$, possibly $\hat{C}_I$, and one additional node. 
    However, there are only $\varepsilon m$ such depots.
	
    Noting that the total weight of all trees is $3\cdot106\cdot m$, we obtain that the value of this solution with respect to the $\ell_p$-norm is 
    \begin{equation*}
      [(106-2\varepsilon)m 3^p + \varepsilon (2^p + 4^p)]^{1/p} \enspace.\tag*{\qedhere}
  \end{equation*}
\end{proof}

\begin{figure}
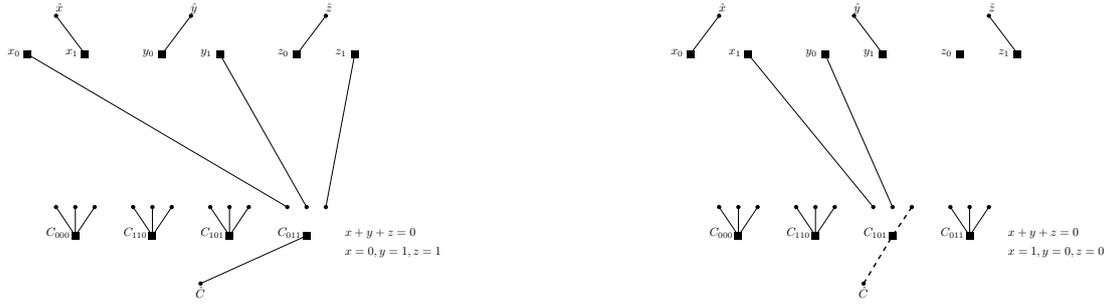

  \centering
  \begin{subfigure}{.45\textwidth}
    \centering
    \includegraphics[width=0.8\linewidth, page = 3]{APXhardness.pdf}
    \caption{The gadget for a satisfied ternary clause under variable assignment $x=0$, $y=1$, $z=1$ with the selected solution. All trees have weight~$3$, or belong to the depot corresponding to a variables assigned value.}
  \end{subfigure}\hspace{0.09\textwidth}
  \begin{subfigure}{.45\textwidth}
    \centering
    \includegraphics[width=0.8\linewidth, page = 4]{APXhardness.pdf}
    \caption{The gadget for an unsatisfied ternary clause under variable assignment $x=1$, $y=0$, $z=0$ with the selected solution. The tree with the dashed edges will have weight $4$, witnessing the fact that the clause was unsatisfied.}
  \end{subfigure}
  \caption{Illustrations of the appearance of Tree Cover solutions on satisfied and unsatisfied clauses.}
  \label{fig:ApxGadgetsSatisfaction}
\end{figure}

\subsubsection{Soundness}
We will conduct the soundness analysis in two stages, for sake of clarity.
First, we will look at solutions to the \lptc{} which are \emph{honest} in the sense that they assign every node to a neighbouring depot.
These rather closely correspond to solutions of the original instance of \maxsat{}, so we will be able to essentially show that if the instance was at most $(30.5+\varepsilon)m$-satisfiable, there are at least $(\frac{1}{2}-\varepsilon)m$ trees of weight at least $4$ in any honest solution.

In a second step we then show that one can not gain much by using a \emph{dishonest} solution instead.
Assigning a node to a depot that is not neighbouring incurs some cost with respect to~$\ell_1$, which will drive up the costs in all other $\ell_p$ norms as well.

\begin{lemma}
\label{lem:honestSolutions}
  Any honest solution $\{T_d\}_{d\in D}$ to some instance $(G,D,w)$ of \lptc{} derived from a $(31-c)m$-satisfiable instance $(X,\mathcal{C})$ of \maxsat{} will have 
  \begin{equation*}
    \left[\sum_{d\in D} w(T_d)^p\right]^{1/p} \geq \left[(106-2c)m 3^p + cm(2^p+4^p)\right]^{1/p} \enspace .
  \end{equation*}
\end{lemma}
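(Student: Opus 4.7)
The plan is to extract from an honest solution a variable assignment to the underlying instance of \maxsat{}, use the bound on maximum satisfiability to locate at least $cm$ ``bad intermediates'' that force load-$4$ trees, and then deduce the $\ell_p$-bound through a short convexity argument.

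Given an honest tree cover $\{T_d\}_{d\in D}$, I define $f\colon X\to\{0,1\}$ by $f(x):=1-i$ whenever $\hat{x}$ is assigned to $x_i$, and let $g(C)$ denote the depot carrying $\hat{C}$ for every clause~$C$. Since the underlying instance admits at most $(31-c)m$ simultaneously satisfied clauses, $f$ leaves at least $cm$ clauses unsatisfied. Fix such an unsatisfied clause $C$. The depot $g(C)$ corresponds to a syntactic satisfying assignment of $C$ while $f$ does not satisfy $C$, so $g(C)$ and $f$ must disagree in at least one variable position~$i$. I then take $m_C$ to be the intermediate node lying on the length-$2$ path from $g(C)$ to $x_{i,1-f(x_i)}$; by construction its two graph-neighbours are the depot $g(C)$ (whose tree already contains $\hat{C}$ and therefore has weight at least $3$) and the depot $x_{i,1-f(x_i)}$ (whose tree already contains $\hat{x_i}$, also with weight at least $3$). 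The same construction works for binary clauses, where the chosen depot also carries the adjacent $\hat{C}_{g_1g_2}$-vertex of weight $1$, again bringing its base weight up to $3$.

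Because the solution is honest, $m_C$ must be assigned to one of these two already-loaded neighbours, forcing the corresponding tree to have weight at least~$4$ and contributing at least~$1$ to the total positive excess $A:=\sum_{d\in D}\max(0,\,w(T_d)-3)$. Since bad intermediates from distinct clauses lie in disjoint gadgets, they are pairwise distinct, yielding $A\ge cm$. Moreover, every non-depot node in an honest solution is joined to its depot by a single graph edge, so the total weight $\sum_d w(T_d)$ is the fixed integer $3\cdot 106\,m=318m$; setting $e_d:=w(T_d)-3\in\mathbb{Z}_{\ge -3}$, we obtain $\sum_d e_d=0$ together with $\sum_d \max(0,e_d)\ge cm$.

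The final step is a convexity argument showing that, among integer vectors $(e_d)$ satisfying these constraints, the sum $\sum_d(3+e_d)^p$ is minimized at the balanced configuration with exactly $cm$ entries equal to $+1$, $cm$ entries equal to $-1$, and the remaining $(106-2c)m$ entries equal to $0$; substituting gives the claimed bound $(106-2c)m\cdot 3^p+cm(2^p+4^p)$. I expect the most delicate step to be ruling out the competing strategy of ``concentrating'' negative excess into values of $-3$ (which makes the associated term $(3+e_d)^p$ vanish), as well as lumping positive excess into a few large values. Both are handled by the strict convexity of $e\mapsto(3+e)^p$ on $e\ge-3$ together with the inequality $2\cdot 3^{p-1}>2^p$ for $p>1$, equivalent to $(3/2)^{p-1}>1$; any local swap that moves the configuration away from the balanced one then strictly increases $\sum_d(3+e_d)^p$, and the bound follows.
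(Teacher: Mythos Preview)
Your proof is correct and follows essentially the same line as the paper's: extract an assignment $f$ from the honest solution, locate for each unsatisfied clause an intermediate node both of whose neighbouring depots already carry weight $\geq 3$, deduce $\sum_d\max(0,w(T_d)-3)\geq cm$, and finish with a convexity argument. You are more explicit than the paper in two places---you actually exhibit the bad intermediate (the paper just asserts ``there exists at least one node in $N(C_I)\setminus\{\hat C\}$ with this property''), and you spell out why the balanced configuration $(cm\text{ at }+1,\ cm\text{ at }-1,\text{ rest }0)$ minimises $\sum_d(3+e_d)^p$ over integer vectors, whereas the paper simply invokes ``distributing the total weight as evenly as possible''. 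Your observation that integrality of the weights is what pins down the extremal configuration (since over the reals one could spread the excess more thinly) is a point the paper glosses over.
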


\begin{proof}
  Suppose we are given an honest solution to the \lptc{} instance constructed from an instance of \maxsat{}. 
  We obtain an assignment $f$ to the variables by setting $f(x) = 1$ if $\hat{x}$ is assigned to $x_0$, and vice versa. 
  Now take any clause $C$ not satisfied by~$f$ and let $C_I$ be the depot which was assigned $\hat{C}$. 
  Then there exists at least one node in $N(C_I) \setminus \{\hat{C}\}$ such that it is adjacent to two depots, both of which are already have weight at least $3$ assigned to them. 
  We call the set of all such nodes $V_{\textnormal{excess}}$.
	
  These nodes must be assigned to one of their neighbouring depots, causing it to have weight at least $4$.
  We formally detect these overloaded depots by estimating the following:
  \begin{equation*}
    \sum_{d\in D} \max\{w(T_d)-3, 0\} \geq |V_{\textnormal{excess}}| \geq cm. 
  \end{equation*}
  Noting again that we have $\sum_{d\in |D|} w(T_d) \geq 318m$, and that a lower bound on the $\ell_p$ value can be obtained by distributing the total weight as evenly as possible, under the constraint that \(\sum_{d\in D} \max\{w(T_d)-3, 0\}\geq cm\), we then see that
  \begin{equation*}
    \left[\sum_{d\in D} w(T_d)^p\right]^{1/p} \geq \left[(106-2c)m 3^p + cm(2^p+4^p)\right]^{1/p} \enspace . \tag*{\qedhere}
  \end{equation*}
\end{proof}

\begin{lemma}
\label{lem:dishonestSolutions}
  Any solution $\{T_d\}_{d\in D}$ to an instance $(G,D,w)$ of \lptc{} derived from a $(31-c)m$-satisfiable instance $(X,\mathcal{C})$ of \maxsat{} will have
  \begin{equation*}
    \left[\sum_{d\in D} w(T_d)^p\right]^{1/p} \geq \left[(106-\frac{c}{2}) 3^p + \frac{c}{4}(2^p+4^p)\right]^{1/p} \enspace .
  \end{equation*}
\end{lemma}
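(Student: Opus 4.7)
The plan is to reduce the dishonest case to the honest one by constructing, for any dishonest solution, a related honest solution whose $\ell_p$-cost is comparable, and then invoking \Cref{lem:honestSolutions}. The critical structural observation is that in the constructed graph every non-depot node has only depots as its graph-neighbors, so any dishonest assignment of a node $v$ to a non-adjacent depot $d$ forces the tree $T_d$ to include a metric edge of weight at least $3$ to reach $v$ (since $T_d$ cannot pass through other depots, and the shortest detour via non-depot paths uses at least three unit-weight edges). An honest reassignment, in contrast, adds at most $1$ unit of weight to its target tree for the abundant path nodes and a bounded constant for the rarer central nodes $\hat{x}, \hat{C}$.

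I would first define the honest correction $\{T'_d\}_{d\in D}$ of a dishonest solution $\{T_d\}_{d\in D}$ by reassigning every dishonest node to one of its adjacent depots. A local weight accounting then yields $\sum_d w(T_d) \geq \sum_d w(T'_d) + k$, where $k$ is the number of dishonest nodes: each correction removes a weight-$\geq 3$ dishonest edge from the source tree and adds at most a weight-$2$ honest edge to the target tree, netting at least $1$ unit of savings. I then split into two cases on $k$. In the ``large $k$'' regime where $\sum_d w(T_d) - 318m \geq cm/2$, an extremal-distribution argument --- minimising the $\ell_p$-norm over non-negative vectors of length $|D| = 106m$ with a prescribed sum and the additional structural constraint that most trees have weight close to $3$ --- directly yields the target bound. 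In the ``small $k$'' regime, I would apply \Cref{lem:honestSolutions} to the honest correction $\{T'_d\}$, observing that each correction reduces the excess quantity $\sum_d \max\{0, w(T_d) - 3\}$ appearing in the proof of the honest lemma by at most a small constant, so $\{T'_d\}$ retains most of its $cm$ units of imbalance. The dishonest $\ell_p$-bound then follows by noting that the dishonest tree weights dominate the honest-corrected weights on the source side of each move, so convexity of $x \mapsto x^p$ transfers the honest bound back to $\{T_d\}$.

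The main technical obstacle is calibrating the constants so that both cases yield precisely the target bound $\left[(106-c/2)\, 3^p + (c/4)(2^p+4^p)\right]^{1/p}$. The factor-of-$4$ degradation between the honest coefficients ($c$ and $2c$) and the dishonest coefficients ($c/4$ and $c/2$) should emerge naturally from this two-step analysis: each dishonest correction can reduce the honest imbalance by at most $2$ units (once on each of the two trees it touches), and the subsequent $\ell_p$-norm transfer between dishonest and honest solutions introduces another factor of~$2$ through the extremal-distribution step, multiplying to give the required~$4$. The delicate bookkeeping of handling the different node types --- path nodes versus the central nodes $\hat{x}, \hat{C}$ with their differing savings per correction and their distinct adjacency patterns --- together with the precise threshold in $k$ separating the two regimes, is what will require the most care in the final write-up.
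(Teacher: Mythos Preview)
Your approach differs from the paper's and has a genuine gap in the final transfer step.

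The paper never constructs an honest-corrected tree cover. It works directly with the excess $\sum_d \max\{w(T_d)-3,0\}$ of the \emph{given} solution $\{T_d\}$ and bounds it two ways: first $\geq |V_{\text{dishonest}}|$ from the total-weight increase, and second $\geq cm - 3|V_{\text{dishonest}}|$ by deleting from the \maxsat{} \emph{instance} every variable and clause whose gadget is touched by a dishonest node. The degree-$3$ regularity guarantees at most $3|V_{\text{dishonest}}|$ clauses are removed, so on the surviving subinstance the restriction of $\{T_d\}$ is honest and the honest-lemma reasoning applies to $\{T_d\}$ itself. Taking the maximum yields $\max\{cm-3k,\,k\}\geq cm/4$, and only then does the extremal-distribution step convert excess into the $\ell_p$ bound. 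At no point are the $\ell_p$ norms of two different tree covers compared.

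Your sentence ``the dishonest tree weights dominate the honest-corrected weights on the source side of each move, so convexity of $x\mapsto x^p$ transfers the honest bound back'' is where the argument breaks. On the \emph{target} side of each reassignment you have $w(T'_{d'}) > w(T_{d'})$, so $\{T'_d\}$ can have strictly larger $\ell_p$ norm than $\{T_d\}$: a dishonest pair $(3,3)$ can correct to an honest pair $(1,4)$, which is larger in $\ell_\infty$. Thus the \Cref{lem:honestSolutions} lower bound on $\sum_d w(T'_d)^p$ does not by itself yield the same bound on $\sum_d w(T_d)^p$.

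Your excess-tracking idea can be salvaged if it \emph{replaces} the $\ell_p$ transfer entirely: each correction raises the target depot's excess by at most $h(v)\leq 3$ and can only lower the source's, so $\text{excess}(\{T_d\})\geq \text{excess}(\{T'_d\}) - 3k \geq cm-3k$; combined with $\text{excess}(\{T_d\})\geq k$ this already gives $cm/4$, and no case split is needed. Note that this $3$ is the maximal honest edge weight, whereas the paper's $3$ is the variable degree in \maxsat{}---the coincidence is an artefact of the gadget design. Your final-paragraph accounting of the factor $4$ as ``$2$ units per correction times another factor $2$ from the $\ell_p$ transfer'' therefore does not reflect what actually produces the constant.
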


\begin{proof}
  We now also have to consider the dishonest solutions, i.e., solutions where some nodes were not assigned to a neighbouring depot. 
  For each such node there is some increase in the~$\ell_1$ value of the solution, so there can not be too many of them. 
  But then most of the instance looks like it is honest, since for every variable on which the solution cheats it can satisfy at most~$3$ additional clauses above what an optimal solution to \maxsat{} could.
	
  So let $V_{\textnormal{dishonest}} \subseteq V(G)$ be the set of nodes not assigned to a neighbouring depot. 
  Then we see that $\sum_{d\in D} w(T_d) \geq 318m + |V_{\textnormal{dishonest}}|$ and thus $\sum_{d\in D} \max\{w(T_d)-3, 0\} \geq |V_{\textnormal{dishonest}}|$.
	
  Now suppose we take the set of variables $X_{\textnormal{dishonest}} \subseteq X$ whose corresponding gadget contains a node from $V_{\textnormal{dishonest}}$, and similarly $\mathcal{C}_{\textnormal{dishonest}}$ for the clauses.
  We now delete from $(X,\mathcal{C})$ all variables in $X_{\textnormal{dishonest}}$, all clauses in $\mathcal{C}_{\textnormal{dishonest}}$, and finally all clauses containing a variable from $X_{\textnormal{dishonest}}$. 
  Call the resulting instance of \maxsat{} $(X',\mathcal{C}')$. 
  We notice that~$\mathcal{C}'$ still contains at least $31m-3|V_{\textnormal{dishonest}}|$ clauses. 
  Also, any assignment to $X'$ satisfying a clause $C\in \mathcal{C}'$ will also satisfy that clause in $(X,\mathcal{C})$, irrespective of how we extend it to $X \setminus X'$. 
  Thus any assignment to $X'$ will leave at least $cm - 3|V_{\textnormal{dishonest}}|$ clauses unsatisfied.
  Thus we have an honest assignment leaving some clauses unsatisfied, so \Cref{lem:honestSolutions} applies.
  We see that $\sum_{d\in D} \max\{w(T_d)-3, 0\}\geq cm - 3|V_{\textnormal{dishonest}}|$, and thus
  \begin{equation*}
    \sum_{d\in D} \max\{w(T_d)-3, 0\}\geq \max\{cm - 3|V_{\textnormal{dishonest}}|, |V_{\textnormal{dishonest}}|\} \geq \frac{c}{4}m \enspace .
  \end{equation*}
  This gives the desired relation
  \begin{equation*}
     \left[\sum_{d\in D} w(T_d)^p\right]^{1/p} \geq \left[(106-\frac{c}{2})m 3^p + \frac{c}{4}m(2^p+4^p)\right]^{1/p} \enspace .\tag*{\qedhere}
  \end{equation*}
\end{proof}

\subsubsection{Hardness of Approximation}
\Cref{lem:completeness,lem:dishonestSolutions} jointly show that our reduction can, up to a factor of $4$, detect the number of clauses that are satisfiable in an instance of \maxsat{}.
Therefore, the gap shown by Karpinski et al.~\cite{Karpinski2015ApxTSP} now transfers also to our setting:
\begin{theorem}
  \label{thm:apxHardness}
  For every $p\in (1,\infty)$ there exists a constant $c$ such that \lptc{} {\sc with Depots} is \NP-hard to approximate within a factor $c$. 
  The \NP-hardness holds under randomized reductions.
\end{theorem}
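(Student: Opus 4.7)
The plan is to reduce from the bounded-degree variant \maxsat{} of MAX-E3LIN2 whose inapproximability was established by Karpinski et al.~\cite{Karpinski2015ApxTSP}: it is \NP-hard (under randomized reductions, because the degree-reduction amplifier is probabilistic) to tell apart instances where $(31-\varepsilon)m$ of the $31m$ equations are simultaneously satisfiable from ones where at most $(30.5+\varepsilon)m$ equations are simultaneously satisfiable. I will encode the system as a \lptc{} {\sc with Depots} instance using the variable and clause gadgets sketched in \Cref{fig:ApxGadgets}. The key design feature is that each clause has one depot per satisfying local assignment, and each variable has one depot per Boolean value; the center nodes $\hat{C}$ and $\hat{x}$ are ``expensive'' nodes that must be absorbed into one of these depots. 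Picking $\hat{x}$'s depot determines the assignment to $x$, and picking $\hat{C}$'s depot claims how $C$ is satisfied; the length-$2$ connecting paths from $C_{b_1b_2b_3}$ (or $C_{b_1b_2}$) to $x_{b_i}$ then have exactly one free depot per pair of adjacent gadgets, provided the chosen assignments are globally consistent.

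The completeness direction is constructive: given an assignment satisfying $(31-\varepsilon)m$ equations, I build the tree cover by the rules in the list above (assign $\hat{x}$ to $x_{1-f(x)}$ and $\hat{C}$ to the depot indexed by the projection of $f$ onto $C$'s variables for satisfied clauses, and to an off-by-one depot for unsatisfied ones), and then greedily route every remaining path node to the unique unused neighboring depot. Because each variable has degree exactly three in the system of equations, every depot receives weight exactly $3$, except for the depots corresponding to the $\varepsilon m$ unsatisfied clauses which pick up one extra path node and possibly a $\hat{C_I}$ leaf. This yields the bound $OPT_p \leq [(106-2\varepsilon)m\,3^p + \varepsilon m (2^p+4^p)]^{1/p}$ stated in the theorem.

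For soundness, I would first treat \emph{honest} solutions in which every node is assigned to a neighboring depot. An honest solution yields an assignment $f$ by reading off which of $x_0,x_1$ carries $\hat{x}$, and by a direct local argument every unsatisfied clause forces at least one depot in its gadget to receive a fourth node (the ``excess'' node in $N(C_I)\setminus\{\hat C\}$ whose two neighboring depots are already loaded). Hence at least $(\tfrac12-\varepsilon)m \ge cm$ depots carry weight $\ge 4$, and by convexity the $\ell_p$-cost is at least $[(106-2c)m\,3^p + cm(2^p+4^p)]^{1/p}$, which is strictly larger than the completeness bound. For general (possibly \emph{dishonest}) solutions, I would delete all variables and clauses involved in a non-local assignment to obtain an honest sub-instance, and argue that each dishonest node corresponds to at least one extra unit of $\ell_1$-weight while destroying at most three equations; so at least $\max\{cm-3|V_{\textnormal{dishonest}}|,\,|V_{\textnormal{dishonest}}|\}\geq cm/4$ depots still carry weight $\geq 4$, and the ratio to the completeness value remains bounded away from $1$ by the explicit constant in the theorem.

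The main obstacle is precisely this last step: controlling dishonest solutions that ``teleport'' nodes to far-away depots to balance loads. The honest case is essentially a counting argument inside each clause gadget, but a dishonest solution might hide load by paying a little in $\ell_1$ and spreading the weight across many depots. The clean resolution is the two-front bound $\sum_{d}\max\{w(T_d)-3,0\}\geq\max\{cm-3|V_{\textnormal{dishonest}}|, |V_{\textnormal{dishonest}}|\}$, taking the maximum of the honest-case lower bound (after deletion) and the direct excess-weight contribution of the cheating nodes; balancing the two cases costs only a factor $4$ in the gap, which is absorbed into the explicit constant and still leaves a ratio strictly greater than $1$ for every $p\in(1,\infty]$.
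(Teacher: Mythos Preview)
Your proposal is correct and follows essentially the same route as the paper: the identical gadget reduction from \maxsat{}, the same completeness bound, the honest/dishonest split in the soundness analysis, and the key inequality $\sum_{d}\max\{w(T_d)-3,0\}\geq\max\{cm-3|V_{\textnormal{dishonest}}|,\,|V_{\textnormal{dishonest}}|\}\geq cm/4$. One small slip: in your last line you write ``for every $p\in(1,\infty]$'', but the argument (and the theorem) only covers $p\in(1,\infty)$, since at $p=\infty$ the completeness construction always has value $4$ and the ratio collapses; the paper handles $\ell_\infty$ by a separate reduction.
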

\begin{proof}
  Fix some $p$ and $\varepsilon$ and take an instance of \maxsat{} from an instance family for which it is \NP-hard to distinguish whether the instance is $(31-\varepsilon)m$-satisfiable or at most $(30.5+\varepsilon)m$. 
  Karpinski et al.~\cite{Karpinski2015ApxTSP} show that such a family exists under randomized reductions.
  Now suppose, for sake of contradiction, we had a polynomial time $c$-approximation algorithm for \lptc{}. 
  Apply this algorithm to an instance generated by our reduction and denote by $ALG$ the value of the computed solution. 
  If the initial instance of \maxsat{} was $(31-\varepsilon)m$-satisfiable, then we obtain
  \begin{equation*}
    ALG \leq c\cdot OPT_p \leq c[(106-2\varepsilon)m 3^p + \varepsilon m (2^p + 4^p)]^{1/p} \enspace
  \end{equation*}
  from \Cref{lem:completeness}.
  Otherwise, we have
  \begin{equation*}
    ALG \geq OPT_p \geq [(106-\frac{1}{4} + \frac{\varepsilon}{2})m 3^p + (\frac{1}{8} - \frac{\varepsilon}{4}) m(2^p+4^p)]^{1/p} \enspace .
  \end{equation*}
  So unless \PP=\NP, we must have 
  \begin{equation*}
    c[(106-2\varepsilon)m 3^p + \varepsilon m (2^p + 4^p)]^{1/p} \geq [(106-\frac{1}{4} + \frac{\varepsilon}{2})m 3^p + (\frac{1}{8} - \frac{\varepsilon}{4}) m(2^p+4^p)]^{1/p} \enspace .
  \end{equation*}
  Brief computation yields
  \begin{align*}
    c &\geq \frac{[(106-\frac{1}{4} + \frac{\varepsilon}{2})m 3^p + (\frac{1}{8} - \frac{\varepsilon}{4}) m(2^p+4^p)]^{1/p}}{[(106-2\varepsilon)m 3^p + \varepsilon m (2^p + 4^p)]^{1/p}}\\
    &\geq \left[ \frac{(106-\frac{1}{4} + \frac{\varepsilon}{2}) 3^p + (\frac{1}{8} - \frac{\varepsilon}{4}) (2^p+4^p)}{(106-2\varepsilon) 3^p + \varepsilon  (2^p + 4^p)}\right]^{1/p} > 1,
  \end{align*}
  where the final inequality follows immediately from the relation $2^p + 4^p > 2\cdot 3^p$, which holds for all $p\in (1,\infty)$.
\end{proof}
Notice that this computation does not hold at $p=1$, since $2^1 + 4^1 = 2\cdot 3^1$.
Indeed $\ell_1$-{\sc Tree Cover} is polynomial-time solvable. 
Further, the computation requires $p$ to be a real number, so the above reduction does not yield hardness for $\ell_{\infty}$-{\sc Tree Cover}. 
This should be unsurprising, since by \Cref{lem:completeness} there is always a solution whose value with respect to~$\ell_\infty$ is $4$, unless the original instance of \maxsat{} was satisfiable (recall that satisfibility of a system of linear equations over~$\mathbb{F}_2$ can of course be checked in polynomial time.).

\subsubsection{\texorpdfstring{$\mathsf{APX}$}{}-Hardness for \texorpdfstring{\litc{}}{} {\sc with Depots}}
Obtaining some hardness of approximation for \litc{} {\sc with Depots} is nonetheless possible and not too complicated.
Such hardness is already known due to the work of Xu and Wen \cite{xu2010approximation}; however, we restate their result here to demonstrate a reduction that is similar to the one used to prove \Cref{thm:apxHardness}.
We reduce from $3$-{\sc Occurrence SAT}, i.e., {\sc SAT} where every variable occurs in exactly $3$ clauses, which is known to be \NP-hard for example due to Berman et al. \cite{Berman073OccSat}.
We denote instances of $3$-{\sc Occurrence SAT} as $(X,\mathcal{C})$ where $X$ is the set of variables and $\mathcal{C}$ the set of clauses.
For any such instance we construct an equivalent instance of \litc{} {\sc with Depots} in similar fashion to before:
\begin{enumerate}
  \item For each $x\in X$, introduce nodes $x_0,x_1,\hat{x}$ with edges $\{x_i,\hat{x}\}$ of weight $2$.
    Take the $x_i$ to be depots.
  \item For each clause $C = x^1\vee x^2 \vee \dots x^s \vee \overline{y^1}\vee \overline{y^2} \dots \vee \overline{y^t}$ introduce a node $C$ and edges $\{C,x^i_1\}$, $\{C, y^i_0\}$ with weight $1$.
\end{enumerate}
Any solution to this instance with value $2$ with respect to $\ell_\infty$ will correspond exactly to a satisfying assignment $f$ of the instance $(X,\mathcal{C})$ by taking $f(x) = i$ where $i$ is such that $\hat{x}$ was assigned to~$x_{1-i}$.
Note that we do not have to worry here about the solution ``cheating'', since every $\hat{x}$ must be assigned to one of the neighbouring $X_i$'s, otherwise the solution value is at least~$3$.
Every clause is then satisfied by the variable to whose depot it has been assigned.

Conversely, any satisfying assignment $f$ can be transformed into a solution of the \litc{} instance with value $2$. 
We assign $\hat{x}$~to $x_{1-f(x)}$ for each variable $x\in X$, and $C$ to~$x_{f(i)}$, where~$x$ is a variable satisfying the clause $C$ with respect to $f$.
This achieves value~$2$, since---without loss of generality---every variable satisfies at most two clauses (otherwise it occurs only negated/non-negated and can be removed).

Thus, we cannot distinguish instances of \litc{} {\sc with Depots} with solution value~$2$ from those of solution value $3$ in polynomial time unless \PP=\NP, so \litc{} {\sc with Depots} is \NP-hard to approximate to within a factor $\frac{3}{2}-\varepsilon$ for every $\varepsilon > 0$.
\end{document}